\definecolor{shadecolor}{rgb}{0.9,0.9,0.9}
\newcommand{\dbtilde}[1]{\accentset{\approx}{#1}}
\newcommand{\n}{m}
\newtheorem{theorem}{Theorem}[section]
\newtheorem{proposition}[theorem]{Proposition}
\newtheorem{corollary}[theorem]{Corollary}
\newtheorem{lemma}[theorem]{Lemma}
\newtheorem{definition}{Definition}[section]
\newenvironment{proof}{\noindent{\bf Proof.}}{\bigskip} 
\newcommand{\qed}{\hfill$\rule{1.2ex}{1.2ex}$\bigskip}
\newcommand{\remove}[1]{}
\newcommand{\cg}[1]{\textcolor{black}{#1}}
\newcommand{\cgr}[1]{\textcolor{black}{#1}}
\begin{document}

\title{(In)Existence of Equilibria for 2-Players, 2-Values Games with Concave Valuations\thanks{This work is partially supported by the German Research Foundation (DFG)
		within the Collaborative Research Centre
		``On-the-Fly-Computing'' (SFB 901),
		and by funds for the promotion of research
		at the University of Cyprus.
}} 


\author{Chryssis Georgiou\thanks{Dept. of Computer Science, University of Cyprus, Nicosia, Cyprus; chryssis@cs.ucy.ac.cy}\\
	\and Marios Mavronicolas\thanks{Dept. of Computer Science, University of Cyprus, Nicosia, Cyprus; mavronic@cs.ucy.ac.cy}  
	\and Burkhard Monien\thanks{University of Paderborn, Paderborn, Germany; bm@upb.de}
}




\maketitle



\begin{abstract}
We consider {\it 2-players}, {\it 2-values} minimization games where
the players' {\it costs}
take on two values, $a,b$, $a<b$. The players play mixed strategies and their costs are evaluated by {\em
unimodal valuations}. \cg{This broad class of valuations includes all concave, one-parameter 
functions $\mathsf{F}: [0,1]\rightarrow \mathbb{R}$
with a unique maximum point.} 
Our  main result \cg{is an impossibility result stating} that:
\begin{itemize}
	\item {If the maximum is obtained in $(0,1)$} and $\mathsf{F}\left(\frac{1}{2}\right)\ne b$, then there exists a 2-players, 2-values game without {\sf F}-equilibrium.
\end{itemize}
The counterexample game used for the impossibility result belongs to a new class of very sparse 2-players, 2-values bimatrix games which we call {\em normal games}.

\noindent In an attempt to investigate the remaining case  $\mathsf{F}\left(\frac{1}{2}\right) = b$,
we show that: 
\begin{itemize}
\item {Every normal, $n$-strategies game has an ${\mathsf{F}}$-equilibrium
when
${\mathsf{F}}\left( \frac{1}{2}
\right)                  
=
b$.
We present a linear time algorithm
	for computing such an equilibrium.}
\item {For 2-players, 2-values games with 3 strategies we have that
if $\mathsf{F}\left(\frac{1}{2}\right) \le b$, then
every 2-players, 2-values, 3-strategies game has an {\sf F}-equilibrium;
if $\mathsf{F}\left(\frac{1}{2}\right) > b$,
then there exists a normal  2-players, 2-values, 3-strategies game
without {\sf F}-equilibrium.}
\end{itemize}
{To the best of our knowledge, this work is the first to provide an (almost complete)
	answer on whether there is, for a given concave function {\sf F}, a counterexample game without
	{\sf F}-equilibrium.}
\end{abstract}

\section{Introduction}
\label{introduction}

\subsection{Motivation and Framework}
\label{motivaton}
\cg{Concave valuations are concave functions from probability distributions to reals.
	In general, concave functions find many applications in Science and Engineering.
	For example,  
	in Microeconomic Theory, production functions are usually
	assumed to be concave, resulting in diminishing returns to input factors; see, e.g.,~\cite[pp. 363--364]{PR15}.
	Also, their concavity makes concave valuations especially suitable to model {\it risk} -- see, 
	e.g.,~\cite{HL69,M52,S63}.} Furthermore, concave functions are utilized in transportation networks (see, e.g.,~\cite{SZ07}) and 
in wireless networks (see, e.g.,~\cite{CSS16}).

In this work, {\it we study $2$-players, $2$-values minimization games with
concave valuations for their existence of equilibria.} 
{\it Equilibria} are special states of the system where no entity has an
incentive to switch. A (minimization) game may or may not have an equilibrium under a concave valuation (cf., \cite{C90,MM16,MM17}). 

For our investigation on the conditions for existence of equilibria we consider {\em $2$-values games} with {\em unimodal valuations}:
\begin{itemize}
	\item {\it $2$-values games} are games where the players costs take on 
	two values $a$ and $b$, with $a<b$. We consider a subclass of counterexample $2$-values
	games, termed {\it normal games}, where 
	$(i)$ there is no $(a,a)$ entry in the cost bimatrix, $(ii)$ player 1 (the
	row player) has exactly one $a$ entry per column, and $(iii)$ player 2
	(the column player) has exactly one $a$ entry per row. So, normal games are very sparse $2$-values games offering a large degree of simplicity, which we use as a tool
	for deriving our main impossibility result.
	\item {\it Unimodal valuations} are valuations that can
	be expressed as a one-parameter concave function {\sf F} with a unique
	maximum point. Many valuations, like 
	$\mathsf{E}+\gamma\cdot\mathsf{Var}$ and 
	${\mathsf{E}} + {\mathsf{\gamma}} \cdot {\mathsf{SD}}$,
	are demonstrated to be unimodal;
	${\mathsf{E}}$, ${\mathsf{Var}},$ and ${\mathsf{SD}}$
	denote {\it expectation,} {\it variance}, and {\it standard deviation,}
	respectively, and $\gamma>0$ is the {\it risk factor}. 
\end{itemize}

The question about the existence of equilibria was first put forward by
Crawford~\cite{C90}, who gave an explicit counterexample game without an
equilibrium for some concave valuation. 
Crawford's game was used as a gadget in ${\mathcal NP}$-hardness proofs
of the equilibrium existence problem for the {\it explicit} valuations
$\mathsf{E}+\gamma\cdot\mathsf{Var}$ and 
${\mathsf{E}} + {\mathsf{\gamma}} \cdot {\mathsf{SD}}$~\cite{MM16}
and $\mathsf{CVaR}_{\alpha}$~\cite{MM17};  $\mathsf{CVaR}_{\alpha}$ denotes 
{\it conditional
	valuation-at risk} and
$\alpha\in (0,1)$ is the {\it confidence level}. (It should be noted that
the given reductions in \cite{MM16,MM17} yield non-sparse games.)
In contrast, our work deals with an {\em abstract unimodal} valuation {\sf V};
this is the first investigation into games without an equilibrium under a
general class of concave functions. Previous work has addressed specific
concave valuations such as 
$\mathsf{E}+\gamma\cdot\mathsf{Var}$~\cite{MM15,MM16}
and  $\mathsf{CVaR}_{\alpha}$~\cite{MM17}. 


Our results provide an almost complete answer to the equilibrium existence
problem for $2$-players, $2$-values games with concave valuations. In short, 
we define a sequence of normal games, and show 
that for every unimodal
valuation {\sf F} with $\mathsf{F}\left(\frac{1}{2}\right)\ne b$, 
there exists a game from the sequence that does not have  an {\sf F}-equilibrium (Section~\ref{sec:non-existence of equilibria}). For the remaining case $\mathsf{F}\left(\frac{1}{2}\right)= b$, we provide matching existence results (Section~\ref{sec:existence of equilibria}).

\subsection{Contribution in More Detail}
\label{contribution}

\cg{We proceed with a detailed exposition of
our results.}
%
%
\cg{Given a unimodal valuation ${\mathsf V}$ expressed by the concave
function {\sf F}, we consider the value of $x$, denoted by $x_0(\mathsf{F})$,
where {\sf F} takes the maximum value in $[0,1]$. We are interested in the impact
of $x_0(\mathsf{F})$ on the existence of {\sf V}-equilibria or, as we call them, {\sf F}-equilibria.
We obtain the following impossibility result as our main result:}
\begin{itemize}
	\item \cg{If $x_0(\mathsf{F}) > 0$ and $\mathsf{F}\left(\frac{\textstyle 1}{\textstyle 2}\right)\ne b$, then there is a normal 2-players, 2-values game without {\sf F}-equilibrium (Theorem~\ref{panorama}).}
	\end{itemize}

\noindent The impossibility result is shown using a family $\{\mathsf{C}_{\n}~:~\n \in\mathbb{N}\}$ of $(\n+1)$-strategies games;
$\mathsf{C}_{\n}$ is defined as the 1-row, 1-column extension of a Toepliz combinatorial
bimatrix, $\mathsf{D}_\n$, which defines an $\n$-strategies normal game.  Interestingly, the bimatirx game defined by $\mathsf{D}_\n$ has a uniform  {\sf F}-equilibrium, 
while in Theorem~\ref{hawaii grande} we prove {\it necessary} and
{\it sufficient} conditions for the uniqueness of this uniform equilibrium.
The impossibility result follows by proving, in Theorem~\ref{grande resort}, {\it necessary} and
{\it sufficient} conditions guaranteeing that 
$\mathsf{C}_{\n}$ has no
{\sf F}-equilibrium.
For example,
\begin{eqnarray*}
	{\mathsf{C}}_{2}
	& = & \left( \begin{array}{lll}
		(a,b) & (b,a) & (a,b) \\
		(b,a) & (a,b) & (b,b) \\
		(b,b) & (b,b) & (b,a)\\ 
	\end{array}
	\right)\, \text{has no}~\mathsf{F}\text{-equilibrium if and only if } 
	\mathsf{F}\left(\frac{1}{2}\right)>b.            
\end{eqnarray*}
To the best of our knowledge these
are the first examples of $2$-values games without equilibrium for a 
class of concave valuations. The Crawford game has $3$ values.

\noindent \cg{We compliment our main result by observing that if $x_0(\mathsf{F}) = 0$, then an {\sf F}-equilibrium exists for all 2-players, 2-values games and is ${\mathcal{PPAD}}$-hard to compute (Theorem~\ref{2 players 2 values ppad hard}).}

\noindent\cg{Our main result leaves open the remaining case  $\mathsf{F}\left(\frac{1}{2}\right) = b$, for which
we obtain matching existence results: }
\begin{itemize}
	\item Every 2-players, 2-values, $n$-strategies normal game has an {\sf F}-equilibrium when
	the unimodal valuation {\sf F} satisfies $\mathsf{F}\left(\frac{1}{2}\right)=b$, and it
	can be computed in $O(n)$ time (Corollary~\ref{cor:final-result}). This existential result is derived 
    in Theorem~\ref{thm:winning-pair}, where we determine efficiently a {\it winning pair}, which defines an 
	{\sf F}-equilibrium. The algorithm runs in time $O(n)$ and 
	the supports in the winning pair we determine have size $2$. 
     
      \item  Every 2-players, 2-values, 3-strategies game has an {\sf F}-equilibrium when the 
     unimodal valuation {\sf F} satisfies $\mathsf{F}\left(\frac{1}{2}\right) \le b$ (Theorem~\ref{2 values 3 strategies existence}).
     The proof of this result is purely combinatorial.
     This result nicely illuminates Theorem~\ref{grande resort}, which states that for $\mathsf{F}\left(\frac{1}{2}\right) > b$,
     the game $\mathsf{C}_2$, having $3$ strategies, has no {\sf F}-equilibrium.   
\end{itemize}


\subsection{Other Related Work}
\label{related work}

\cg{
	Win-lose games are the special case of $2$-values games where
	$a=0$ and $b=1$. Win-lose games are as powerful as $2$-values games 
	when {\sf E}-equilibria are considered. Abbott et al~\cite{AKV05} show that computing an {\sf E}-equilibrium for a given win-lose bimatrix game is $\mathcal {PPAD}$-complete.}

\cg{Two different models of sparse win-lose bimatrix games have been
	considered in~\cite{CDT06,CLR06}. In the model of Chen {\it et al.}~\cite{CDT06}, 
	for the row player each row, and for the column player each column
	of the payoff matrix
	has at most two 1's entries. In the model of Codenotti {\it et al.}~\cite{CLR06},
	there are at most two 1's per row and per column for both players.
	So the model in~\cite{CLR06} is inferior to the
	model in~\cite{CDT06}. Recall that in our model, there is at most
	one $a$ per column for the row player, and at most one $a$ per row
	for the column player. So our model is not comparable with the models 
	in~\cite{CDT06,CLR06}.} In both papers \cite{CDT06} and \cite{CLR06},
polynomial time algorithms are given for win-lose games from the corresponding
model of sparseness.

Chen {\it et al.}~\cite{CDT06} also show that approximating a Nash equilibrium is 
$\mathcal{PPAD}$-hard in the model where both matricies contain at most $10$
non-zero entries in each row and in each column. Liu and Sheng~\cite{LS18}
extend this result to win-lose games when both matrices contain $O(\log n)$
$1$'s entries in each row and in each column.
\cg{Bil\`o and Mavronicolas~\cite{BM12} prove that is ${\mathcal NP}$-complete
	to decide if a simultaneously win-lose and symmetric $2$-players game has an {\sf E}-equilibrium with certain properties.}    

\subsection{Paper Organization}
\cg{Section~\ref{sec:framework} presents the game-theoretic framework, together with some
simple properties regarding equilibria for 2-players, 2-values games, and the definition
of normal games. In Section~\ref{sec:standrd valuations}  we define unimodal valuations, 
we demonstrate that common risk-averse valuations are unimodal valuations, prove some
simple properties of unimodal valuations, and consider the case of $x_0(\mathsf{F})=0$.
Our main impossibility result is presented in Section~\ref{sec:non-existence of equilibria}.
Section~\ref{sec:existence of equilibria} presents matching existence results. 
We conclude in Section~\ref{sec:conclusion}.}

\section{Framework}
\label{sec:framework}

\subsection{Definitions and Preliminaries}

\noindent
For an integer $\nu \geq 2$,
an {\it $\nu$-players game}
${\mathsf{G}}$,
or {\it game,}
consists of
{\it (i)}
$\nu$ finite sets
$\left\{ S_{k}
\right\}_{k \in [\nu]}$
of {\it strategies,}
and
{\it (ii)}
$\nu$ {\it cost functions}
$\left\{ {\mathsf{\mu}}_{k}
\right\}_{k \in [\nu]}$,
each mapping
${\mathcal{S}} = \times_{k \in [\nu]}
              S_{k}$
to the reals.
So,
${\mathsf{\mu}}_{k}({\bf s})$
is the {\it cost}
of player $k \in [\nu]$
on the {\it profile}
${\bf s}
=
\langle s_{1},
           \ldots,
           s_{\nu}
\rangle$
of strategies,
one per player.
All costs
are non-negative.
Assume,
without loss of generality,
that for each player $k \in [\nu]$,
$S_{k} = \{ 0, \ldots, n-1 \}$,
with $n  \geq 2$.

A {\it mixed strategy}
for player $k \in [\nu]$
is a probability distribution $p_{k}$
on $S_{k}$;
the {\it support}
of player $k$
in $p_{k}$
is the set
$\sigma (p_{k})
=
\left\{ s_{k} \in S_{k}\
    \mid\
    p_{k}(s_{k}) > 0
\right\}$.
Denote as $p_{k}^{s_{k}}$
the {\it pure strategy}
of player $k$
choosing the strategy $s_{k}$
with probability $1$.
A {\it mixed profile}
is a tuple
${\bf p} = \langle p_{1},
         \ldots,
         p_{\nu}
     \rangle$
of $\nu$ 
mixed strategies,
one per player. 
The mixed profile ${\bf p}$
induces
probabilities
${\bf p}({\bf s})$
for each profile
${\bf s} \in {\mathcal{S}}$
with
${\bf p}({\bf s})
=
\prod_{k \in [\nu]}
  p_{k} (s_{k})$.
For a player $k \in [\nu]$,
the {\it partial profile}
${\bf s}_{-k}$
(resp.,
{\it partial mixed profile}
${\bf p}_{-k}$)
results by eliminating
the strategy $s_{k}$
(resp.,
the mixed strategy $p_{k}$)
from ${\bf s}$
(resp.,
from ${\bf p}$).

For a player $k \in [\nu]$,
a {\it valuation function,}
or {\it valuation} for short,
${\mathsf V}_{k}$
is a real-valued function,
yielding a value
${\mathsf V}_{k}({\bf p})$
to each mixed profile
${\bf p}$,
so that
in the special case
when ${\bf p}$ 
is a profile ${\bf s}$,
${\mathsf V}_{k}({\bf s})
=
{\mathsf{\mu}}_{k}({\bf s})$.
The {\it expectation valuation}
is defined as
${\mathsf{E}}_{k}({\bf p})
 = 
 \sum_{{\bf s}
       \in
       S}
   {\bf p}({\bf s})\,
   {\mathsf{\mu}}_{k}({\bf s})$,
with $k \in [\nu]$.
A {\it valuation}
${\mathsf V}
=
\langle {\mathsf V}_{1},
    \ldots,
    {\mathsf V}_{\nu}
\rangle$
is a tuple of valuations,
one per player;
${\mathsf{G}}^{{\mathsf{V}}}$
denotes ${\mathsf{G}}$
together with ${\mathsf{V}}$.
We shall view
each valuation ${\mathsf{V}}_{k}$,
with $k \in [\nu]$,
as a function of the mixed strategy $p_{k}$,
for a fixed partial mixed profile ${\bf p}_{-k}$.
The valuation ${\mathsf V}$
is {\it concave}\footnote{Recall that a function  $\mathsf{f}:\mathsf{D} \rightarrow \mathbb{R}$  on a convex set   $\mathsf{D}\subseteq \mathbb{R}^{m}$ is {\it concave} if for every pair of points   $x,y\in \mathsf{D},$ for all   $\mathsf{\lambda}\in[0,1],$   $\mathsf{f}(\mathsf{\lambda}y+(1-\mathsf{\lambda})x) \ge \mathsf{\lambda}\mathsf{f}(y)+(1-\mathsf{\lambda})\mathsf{f}(x)$.} if the following condition
holds
for every game ${\mathsf{G}}$:
For each player $k \in [\nu]$,
the valuation\
${\mathsf{V}}_{k} \left( p_{k},
                         {\bf p}_{-k}
                  \right)$
is concave 
in $p_{k}$,
for a fixed 
${\bf p}_{-k}$.

\sloppy{A  ${\mathsf{V}}_{k}$-best response is a (pure or mixed) strategy $p_k$ that minimizes
	$\mathsf{V}_k(p_k,{\bf p}_{-k})$ for a fixed $p_k$. In this work, we shall consider {\em minimization} games, where each player $k\in [\nu]$ is interested in minimizing ${\mathsf{V}}_{k}({\bf p})$,
	that is, each player seeks to minimize her cost.} 
 
The mixed profile ${\bf p}$
is a {\it ${\mathsf{V}}$-equilibrium}
if for each player $k \in [\nu]$,
the mixed strategy $p_{k}$
is a ${\mathsf{V}}_{k}$-best response
to ${\bf p}_{-k}$;
so,
no player could unilaterally deviate
to another mixed strategy
to reduce her cost. {In this case, we will say that
	player $k$ is {\it ${\mathsf{V}}$-happy} with  ${\bf p}$.}
{We will be referring to an {\it ${\mathsf{E}}$-equilibrium}
when the valuation is the expectation valuation (and {\it ${\mathsf{E}}$-happy} is
defined analogously).}
A {\it pure equilibrium}
is an equilibrium where all players play pure strategies.

\vspace{.5em}
\noindent
{We shall consider two properties
defined in reference to best-responses:}

\begin{definition}[{The Weak-Equilibrium-for-Expectation Property~\cite[Section 2.6]{MM15}}]
\label{optimal value property definition}
The valuation ${\mathsf{V}}$
has the {\it Weak-Equilibrium-for-Expectation} property
if the following condition holds
for every game ${\mathsf{G}}$:
For each player $k \in [\nu]$
and a partial mixed profile ${\bf p}_{-k}$,
if $p_{k}$
is a ${\mathsf{V}}_{k}$-best response to ${\bf p}_{-k}$,
then,
for each pair of strategies
$\ell, \ell' \in \sigma(p_{k})$,
${\mathsf{E}}_{k}(p_{k}^{\ell}, {\bf p}_{-k})
  =
  {\mathsf{E}}_{k}(p_{k}^{\ell'}, {\bf p}_{-k})$.          
\end{definition}

\noindent
We shall often abbreviate the
{\it Weak-Equilibrium-for-Expectation} property
as {\sf WEEP}.
We shall omit reference to the game ${\mathsf{G}}$
when immaterial.

\begin{definition}[{The Optimal-Value Property~\cite[Section 3.2]{MM16}}]
\label{optimal value definition}
The valuation ${\mathsf{V}}$
has the {\it Optimal-Value} property
if the following condition holds
for every game ${\mathsf{G}}:$
For each player $k \in [\nu]$,
and a partial mixed profile ${\bf p}_{-k}$,
if ${\widehat{p}}_{k}$
is a ${\mathsf{V}}_{k}$-best response to ${\bf p}_{-k}$,
then,
for any mixed strategy $q_{k}$
with ${\sigma}(q_{k})
          \subseteq
          {\sigma}({\widehat{p}}_{k})$,
${\mathsf{V}}_{k}(q_{k}, {\bf p}_{-k})
  =
  {\mathsf{V}}_{k}({\widehat{p}}_{k}, {\bf p}_{-k})$.          
\end{definition}

\remove{
\noindent
Recall the definitions of concavity and quisiconcavity~\cite{MM17}.
\noindent
A function   $\mathsf{f}:\mathsf{D} \rightarrow \mathbb{R}$  on a convex set   $\mathsf{D}\subseteq \mathbb{R}^{n}$ is {\it concave} if for every pair of points   $x,y\in \mathsf{D},$ for all   $\mathsf{\lambda}\in[0,1],$   $\mathsf{f}(\mathsf{\lambda}y+(1-\mathsf{\lambda})x) \ge \mathsf{\lambda}\mathsf{f}(y)+(1-\mathsf{\lambda})\mathsf{f}(x)$;   
$\mathsf{f}$  is {\it strictly concave} if the inequality is strict for all   $\mathsf{\lambda}\in (0,1)$. 
\noindent
Quasiconcavity generalizes concavity: A function   $\mathsf{f}:\mathsf{D} \rightarrow \mathbb{R}$ {\it quasiconcave} if for every pair of points 
 $x,y\in \mathsf{D},$ for all  $\mathsf{\lambda}\in(0,1),$
  $\mathsf{f}(\mathsf{\lambda}y+(1-\mathsf{\lambda})x) \ge
  \min\{\mathsf{f}(y),\mathsf{f}(x)\}$.
  A quasiconcave function need not be concave. Thus, these functions miss many significant properties of concave functions. For example, a quasiconcave function is not necessarily continuous, while a concave function on an open interval is also continuous on it. 
  A quasiconcave function $\mathsf{f}:\mathsf{D} \rightarrow \mathbb{R}$  is {\it strictly quasiconcave} if for every pair of points  $x,y\in \mathsf{D},$ $\mathsf{f}(x)\ne \mathsf{f}(y)$  implies that for all $\mathsf{\lambda}\in(0,1),$ $\mathsf{f}(\mathsf{\lambda}y+(1-\mathsf{\lambda})x) >
  \min\{\mathsf{f}(y),\mathsf{f}(x)\}$.
  Hence, the notion of strictness for strictly quasiconcave functions is weaker than for strictly concave functions. A notable property of a strictly quasiconcave function is a property of concave functions: a local maximum is also a global maximum.  
}

\noindent
We recall from~\cite[Proposition 3.1]{MM16}:

\begin{proposition}[{Concavity $\Rightarrow$ Optimal-Value}]
\label{concavity implies optimal value property}
Assume that
the valuation
${\mathsf{V}}_{k}(p_{k}, {\bf p}_{-k})$
is concave in $p_{k}$.
Then,
${\mathsf{V}}_{k}$
has the {\it Optimal-Value} property.         
\end{proposition}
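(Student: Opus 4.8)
The plan is to exploit the elementary fact that a concave function which attains its minimum at a relative-interior point of a convex set must be constant on that set, applied here to the face of the simplex spanned by ${\sigma}(\widehat{p}_{k})$. Fix the player $k$ and the partial profile ${\bf p}_{-k}$, and abbreviate $f(p_{k}) := {\mathsf{V}}_{k}(p_{k}, {\bf p}_{-k})$, which by hypothesis is concave in $p_{k}$ over the simplex $\Delta(S_{k})$ of mixed strategies of player $k$. Let $F$ denote the face $\{p_{k} \in \Delta(S_{k}) : {\sigma}(p_{k}) \subseteq {\sigma}(\widehat{p}_{k})\}$, i.e.\ the sub-simplex over the support of $\widehat{p}_{k}$; this is itself convex, and the target strategies $q_{k}$ in the statement are exactly the members of $F$. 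The two observations I would combine are: (i) since $\widehat{p}_{k}$ assigns strictly positive probability to every strategy in ${\sigma}(\widehat{p}_{k})$, it lies in the relative interior of $F$; and (ii) since $\widehat{p}_{k}$ minimizes $f$ over all of $\Delta(S_{k})$ --- this is precisely what being a ${\mathsf{V}}_{k}$-best response means in a minimization game --- it \emph{a fortiori} minimizes $f$ over $F$.

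Next comes the core step. Take an arbitrary $q_{k} \in F$. Because $\widehat{p}_{k}$ is relative-interior to $F$, the segment from $q_{k}$ through $\widehat{p}_{k}$ can be prolonged slightly past $\widehat{p}_{k}$ while staying inside $F$: concretely, for a small $t > 0$ the point $r_{k} := (1+t)\widehat{p}_{k} - t q_{k}$ still lies in $F$, and then $\widehat{p}_{k} = \lambda q_{k} + (1-\lambda) r_{k}$ with $\lambda = \frac{t}{1+t} \in (0,1)$. Concavity now yields $f(\widehat{p}_{k}) \geq \lambda f(q_{k}) + (1-\lambda) f(r_{k})$, while minimality of $\widehat{p}_{k}$ over $F$ gives $f(q_{k}) \geq f(\widehat{p}_{k})$ and $f(r_{k}) \geq f(\widehat{p}_{k})$. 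Chaining these forces every inequality to be tight, and in particular $f(q_{k}) = f(\widehat{p}_{k})$, which is exactly the Optimal-Value conclusion ${\mathsf{V}}_{k}(q_{k}, {\bf p}_{-k}) = {\mathsf{V}}_{k}(\widehat{p}_{k}, {\bf p}_{-k})$.

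I expect the only delicate point to be the prolongation argument establishing that $r_{k}$ remains a valid mixed strategy in $F$; this is where relative interiority of $\widehat{p}_{k}$ (rather than mere membership in $F$) is essential. One checks that the coordinates of $r_{k}$ are nonnegative for small $t$: for $s \in {\sigma}(\widehat{p}_{k})$ the coordinate $(1+t)\widehat{p}_{k}(s) - t q_{k}(s)$ stays positive because $\widehat{p}_{k}(s) > 0$, while for $s \notin {\sigma}(\widehat{p}_{k})$ both $\widehat{p}_{k}(s)$ and $q_{k}(s)$ vanish, so the coordinate is $0$; summing gives total mass $1$. The degenerate case $|{\sigma}(\widehat{p}_{k})| = 1$ needs only a one-line remark: then $F = \{\widehat{p}_{k}\}$ and the claim is vacuous. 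Notably, no compactness or differentiability is invoked --- only convexity of $F$ and the defining inequality of concavity --- so the argument is entirely self-contained.
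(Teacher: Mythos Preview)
Your argument is correct and cleanly written. The prolongation step is the right mechanism: since $\widehat{p}_{k}$ lies in the relative interior of the face $F$, every $q_{k}\in F$ can be extended past $\widehat{p}_{k}$ to some $r_{k}\in F$, and then concavity together with minimality squeeze $f(q_{k})=f(\widehat{p}_{k})$. The verification that $r_{k}$ stays in $F$ for small $t>0$ is handled carefully, and the boundary case $|\sigma(\widehat{p}_{k})|=1$ is noted.

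As for comparison: the paper does not actually prove this proposition. It is introduced with ``We recall from~[Proposition~3.1, MM16]'' and stated without argument, deferring entirely to the cited source. So your self-contained proof supplies what the present paper omits; there is no in-paper approach to compare against.
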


%

\noindent
In this work we focus
on $2$-players games
with $\nu=2$,
where for each player $k \in [2]$,
$\overline{k}$ denotes the other player.
A $2$-players game ${\mathsf{G}}$
will be denoted as
$\left( {\mathsf{\alpha}}_{ij},
{\mathsf{\beta}}_{ij}
\right)_{0 \leq i, j < n}$,
where 
${\mathsf{\alpha}}_{ij} = {\mathsf{\mu}}_{1}(i, j)$
and
${\mathsf{\beta}}_{ij} = {\mathsf{\mu}}_{2}(i, j)$.\smallskip
           
\noindent Furthermore, we focus on {\emph{{$2$-values games}}}, in which
	there are only two cost values $a$ and $b$,
	with $a < b$;
	that is,
	$\{ {\mathsf{\mu}}_{k}({\bf s})
	\mid
	\mbox{$k \in [2]$ and
		${\bf s} \in {\mathcal{S}}$}
	\}
	=
	\{ a, b \}$.       
To exclude the trivial existence of a pure equilibrium, we consider games
where no $(a,a)$ entries 
exist in the bimatrix $\left( {\mathsf{\alpha}}_{ij},
{\mathsf{\beta}}_{ij}
\right)_{0 \leq i, j < n}$.

\vspace{.5em}
\noindent Now recall the definition of domination
among strategies in the 2-players game ${\mathsf{G}}$.
For a given mixed profile ${\bf p}$,
{\it strategy $\ell$ dominates strategy $\ell'$
	with respect to $\sigma (p_{2})$}
if the following conditions hold:
\begin{enumerate}
	
	\item[{\sf (1)}]
	${\mathsf{\mu}}_{1}(\ell, j)
	\leq
	{\mathsf{\mu}}_{1}(\ell', j)$
	for all strategies $j \in \sigma (p_{2})$.

	\item[{\sf (2)}]
	There is a strategy ${\widehat{j}} \in \sigma (p_{2})$
	such that
	${\mathsf{\mu}}_{1}(\ell,\widehat{j})
	<
	{\mathsf{\mu}}_{1}(\ell',\widehat{j})$.

\end{enumerate}

\noindent
We observe:

\begin{lemma}
	\label{domination and weep}
	Assume that the valuation ${\mathsf{V}}$
	has the {\sf WEEP} property.
	Then,
	for each player $k \in [2]$
	and a partial mixed profile ${\bf p}_{-k}$,
	if $p_{k}$ is a ${\mathsf{V}}_{k}$-best-response to ${\bf p}_{-k}$,
	then no strategy in $\sigma (p_{k})$
	dominates some strategy in $\sigma (p_{k})$
	with respect to $\sigma (p_{\overline{k}})$.
\end{lemma}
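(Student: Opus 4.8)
The plan is to argue by contradiction, exploiting the fact that \textsf{WEEP} forces all pure strategies in the support of a best response to carry the \emph{same} expected cost, while domination would make one of them strictly cheaper in expectation. Suppose, then, that $p_{k}$ is a ${\mathsf{V}}_{k}$-best-response to ${\bf p}_{-k}$ and that, contrary to the claim, some strategy $\ell \in \sigma(p_{k})$ dominates some strategy $\ell' \in \sigma(p_{k})$ with respect to $\sigma(p_{\overline{k}})$. Since the two players play symmetric roles (the displayed definition of domination is stated for the row player, and the column player is identical after transposing), it suffices to treat $k = 1$, so that $\overline{k} = 2$.

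Next I would write both relevant expectations as sums restricted to the opponent's support. By definition of the expectation valuation,
\[
{\mathsf{E}}_{1}(p_{1}^{\ell}, {\bf p}_{-1}) = \sum_{j \in \sigma(p_{2})} p_{2}(j)\,{\mathsf{\mu}}_{1}(\ell, j), \qquad {\mathsf{E}}_{1}(p_{1}^{\ell'}, {\bf p}_{-1}) = \sum_{j \in \sigma(p_{2})} p_{2}(j)\,{\mathsf{\mu}}_{1}(\ell', j),
\]
where the restriction to $\sigma(p_{2})$ is legitimate because $p_{2}(j) = 0$ off the support. The crux is to turn the two domination conditions into a strict inequality between these two numbers. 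Condition {\sf (1)} gives ${\mathsf{\mu}}_{1}(\ell, j) \leq {\mathsf{\mu}}_{1}(\ell', j)$ for every $j \in \sigma(p_{2})$, so every summand on the left is at most the corresponding summand on the right; condition {\sf (2)} gives a strategy ${\widehat{j}} \in \sigma(p_{2})$ with ${\mathsf{\mu}}_{1}(\ell, {\widehat{j}}) < {\mathsf{\mu}}_{1}(\ell', {\widehat{j}})$, and since ${\widehat{j}}$ lies in the support we have $p_{2}({\widehat{j}}) > 0$, so that one summand is strictly smaller with positive weight. Summing termwise yields
\[
{\mathsf{E}}_{1}(p_{1}^{\ell}, {\bf p}_{-1}) < {\mathsf{E}}_{1}(p_{1}^{\ell'}, {\bf p}_{-1}).
\]

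Finally I would invoke \textsf{WEEP}. Because $p_{1}$ is a ${\mathsf{V}}_{1}$-best-response to ${\bf p}_{-1}$ and $\ell, \ell' \in \sigma(p_{1})$, Definition~\ref{optimal value property definition} forces the equality ${\mathsf{E}}_{1}(p_{1}^{\ell}, {\bf p}_{-1}) = {\mathsf{E}}_{1}(p_{1}^{\ell'}, {\bf p}_{-1})$, contradicting the strict inequality just derived. This contradiction establishes the lemma. The only point requiring genuine care is the passage from the pointwise cost inequalities to a \emph{strict} inequality of expectations: it works precisely because the strictly-improved column ${\widehat{j}}$ is guaranteed to lie in $\sigma(p_{2})$ and hence receives strictly positive probability, so its contribution cannot be washed out by the remaining (weakly dominated) columns. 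Note that neither concavity nor unimodality of ${\mathsf{V}}$ enters the argument — only \textsf{WEEP} is used.
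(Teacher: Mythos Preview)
Your proof is correct; the paper itself states this lemma as an observation and does not supply a proof, and your argument is precisely the natural one that the omitted proof would take.
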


\subsection{Simple Properties}
\label{definitions and simple properties}

{We now prove two simple properties used later in the paper.} 
We begin with a sufficient condition
for a pure equilibrium
for a $2$-players, $2$-values game:

%
%

\remove{
\begin{center}
\fbox{
\begin{minipage}{6.7in}
\begin{definition}[\textcolor{red}{No-Pure Game}]
\textcolor{red}{A 2-players, 2-values game ${\mathsf{G}}$ is
{\em {\emph{\textbf{no-pure}}}}
if it fulfills:}
\begin{enumerate}

\item[{\sf (1)}]
\textcolor{red}{There is no $(a,a)$ entry.}
 
\item[{\sf (2)}]
\textcolor{red}{Every column includes both cost values $a$ and $b$
for the row player:
For every column $j$,
there are distinct rows $i, i^{\prime}$
with 
${\mathsf{\mu}}_{1}(i, j)
  \neq
  {\mathsf{\mu}}_{1}(i^{\prime}, j)$.}

\item[{\sf (3)}]
\textcolor{red}{Every row includes both cost values
$a$ and $b$
for the column player:
For every row $i$,
there are distinct columns $j, j^{\prime}$
with
${\mathsf{\mu}}_{2}(i, j)
  \neq
  {\mathsf{\mu}}_{2}(i, j^{\prime})$.}

\end{enumerate}
\end{definition}
\end{minipage}
}
\end{center}
}


\begin{lemma}
\label{lemma A}
Consider
a $2$-players, $2$-values game.
If there is
an ${\mathsf{E}}$-equilibrium
$\langle p_{1}, p_{2}\rangle$
with
$|\sigma (p_{1})| = 1$
or
$|\sigma (p_{2})| = 1$,
then there is also a pure equilibrium.
\end{lemma}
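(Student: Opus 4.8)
The plan is to exploit the symmetry between the two players together with the fact that, against a \emph{pure} opponent, a best response in a $2$-values game is extremely rigid. Assume without loss of generality that $|\sigma(p_{1})| = 1$, so player $1$ plays a pure row $i$ and $\langle p_{1}^{i}, p_{2}\rangle$ is the given ${\mathsf{E}}$-equilibrium; the case $|\sigma(p_{2})| = 1$ is symmetric. Since player $2$ faces the pure row $i$, her expected cost for a column $j$ is simply ${\mathsf{\mu}}_{2}(i,j) \in \{a,b\}$, so being a best response forces $\sigma(p_{2}) \subseteq \arg\min_{j} {\mathsf{\mu}}_{2}(i,j)$; in particular, every column $j^{*} \in \sigma(p_{2})$ is itself a \emph{pure} best response of player $2$ to row $i$.

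The candidate pure equilibrium will be a profile $(i, j^{*})$ with $j^{*} \in \sigma(p_{2})$. By the previous paragraph player $2$ is already happy with any such $j^{*}$, so the only thing left to secure is that row $i$ is a pure best response of player $1$ against the single column $j^{*}$, i.e.\ ${\mathsf{\mu}}_{1}(i,j^{*}) = \min_{\ell} {\mathsf{\mu}}_{1}(\ell, j^{*})$. I therefore consider the set $J^{*} = \{\, j \in \sigma(p_{2}) : {\mathsf{\mu}}_{1}(i,j) = \min_{\ell} {\mathsf{\mu}}_{1}(\ell, j) \,\}$ of support columns in which row $i$ is optimal for player $1$, and I claim $J^{*} \neq \emptyset$; any $j^{*} \in J^{*}$ then yields the desired pure equilibrium $(i,j^{*})$.

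The crux -- and the step I expect to be the main obstacle -- is that player $1$'s equilibrium condition only supplies the \emph{averaged} inequality $\sum_{j} p_{2}(j)\,{\mathsf{\mu}}_{1}(i,j) \le \sum_{j} p_{2}(j)\,{\mathsf{\mu}}_{1}(\ell,j)$ for every row $\ell$, and I must upgrade this to pointwise optimality in a single column. Here the $2$-values structure does the work. Suppose toward a contradiction that $J^{*} = \emptyset$. Then for each $j \in \sigma(p_{2})$ row $i$ fails to be optimal, which in a $2$-values game means ${\mathsf{\mu}}_{1}(i,j) = b$ while some row attains the value $a$ in column $j$. Hence player $1$'s equilibrium cost equals $\sum_{j \in \sigma(p_{2})} p_{2}(j)\, b = b$, the maximum possible cost. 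Since row $i$ is a best response, every row $\ell$ must then have expected cost $\ge b$, forcing it to be exactly $b$, which means ${\mathsf{\mu}}_{1}(\ell, j) = b$ for every row $\ell$ and every $j \in \sigma(p_{2})$. But $\sigma(p_{2})$ is nonempty, and in any of its columns we had just produced a row attaining $a$ -- a contradiction. Thus $J^{*} \neq \emptyset$, and choosing $j^{*} \in J^{*}$ completes the proof. (Note that this argument uses the $2$-values assumption essentially, through the fact that $b$ is the maximum cost, but does not need the no-$(a,a)$ convention.)
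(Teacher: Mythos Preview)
Your proof is correct and follows essentially the same route as the paper's. The paper selects $\widehat{j}\in\sigma(p_{2})$ minimizing $\mu_{1}(0,j)$ and case-splits on whether that minimum is $a$ or $b$, while you package the same idea as showing $J^{*}\neq\emptyset$ by contradiction; in both arguments the key step is that if $\mu_{1}(i,j)=b$ throughout $\sigma(p_{2})$, the best-response inequality forces every row to have cost $b$ on every column of $\sigma(p_{2})$, contradicting the existence of an $a$ entry there.
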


\begin{proof}
Without loss of generality,
assume that
$|\sigma (p_{1})| = 1$
with
$\sigma_{1} = \{ 0 \}$.
By the {\sf WEEP},
there is
$\alpha \in \{ a, b \}$
such that
${\mathsf{\mu}}_{2}(0, j) = \alpha$
for all $j \in \sigma (p_{2})$.
Since ${\bf p}$ is an ${\mathsf{E}}$-equilibrium,
this implies that
$\alpha \leq {\mathsf{\mu}}_{2}(0, j)$
for all strategies $j \in \{0,\ldots,n-1\}$.
Take now $\widehat{j} \in \sigma (p_{2})$
so that
${\mathsf{\mu}}_{1}(0, \widehat{j})
  =
  \min_{j \in \sigma_{2}}
     {\mathsf{\mu}}_{1}(0, j)$.
We prove that 
the profile $(0, \widehat{j})$
is a pure equilibrium:
\begin{itemize}

\item
\underline{Player $2$ is happy
with $(0, \widehat{j})$}:
We have to prove that
${\mathsf{\mu}}_{2}(0, \widehat{j})
  \leq
  {\mathsf{\mu}}_{2}(0, j)$
for all strategies $j \in \{0,\ldots,n-1\}$. 
	Since $\widehat{j} \in \sigma_{2}$
and ${\bf p}$ is an ${\mathsf{E}}$-equilibrium,
${\mathsf{\mu}}_{2}(0, \widehat{j})
  =
  {\mathsf{E}}_{2}(p_{1}^{0}, p_{2}^{\widehat{j}})
  \leq 
  {\mathsf{E}}_{2}(p_{1}^{0}, p_{2}^{j})
  =
  {\mathsf{\mu}}_{2}(0, j)$.

\item
\underline{Player $1$ is happy
with $(1, \widehat{j})$}:
We have to prove that
${\mathsf{\mu}}_{1}(0, \widehat{j})
  \leq
  {\mathsf{\mu}}_{1}(i, \widehat{j})$
for all strategies $i \in \{0,\ldots,n-1\}$.             
This is vacuous if ${\mathsf{\mu}}_{1}(0, \widehat{j}) = a$.
So assume that
${\mathsf{\mu}}_{1}(0, \widehat{j}) = b$.
By the definition of $\widehat{j}$,
it follows that
${\mathsf{\mu}}_{1}(0, j) = b$
for all $j \in \sigma (p_{2})$.
Hence,
${\mathsf{E}}_{1}(p_{1}^{0}, p_{2}) = b$.
Since $p_{1}^{0}$
is an ${\mathsf{E}}$-best response to $p_{2}$,
it follows that
${\mathsf{E}}_{1}(p_{1}^{i}, p_{2})
  \geq
  b$
for all strategies $i \in \{0,\ldots,n-1\}$.
This implies that
${\mathsf{\mu}}_{1}(i, j) = b$
for all pairs of strategies
$i \in \{0,\ldots,n-1\}$
and
$j \in \sigma (p_{2})$.
Since
$\widehat{j} \in \sigma (p_{2})$,
this implies that
${\mathsf{\mu}}_{1}(i, \widehat{j}) = b$
for all strategies $i \in \{0,\ldots,n-1\}$.
Hence,
${\mathsf{\mu}}_{1}(0, \widehat{j})
  =
  {\mathsf{\mu}}_{1}(i, \widehat{j})$
for all strategies $i \in \{0,\ldots,n-1\}$.       

\end{itemize}
The claim follows.\qed
\end{proof}

\noindent
We continue to prove a simple property of ${\mathsf{E}}$-equilibria:

\begin{lemma}
\label{lemma B hat}
Consider a $2$-players, $2$-values game,
with an ${\mathsf{E}}$-equilibrium
${\bf p}
  =
  \langle p_{1}, p_{2}\rangle$
with
$|\sigma (p_{2})| = 2$.
Define the mixed strategy
$\widetilde{p}_{2}$
for player $2$
with
$\sigma (\widetilde{p}_{2})
 =
 \sigma (p_{2})$
and
$\widetilde{p}_{2}(j)
  =
  \frac{\textstyle 1}
         {\textstyle 2}$  
for all $j \in \sigma (\widetilde{p}_{2})$.
Then,
$\langle p_{1}, \widetilde{p}_{2}\rangle$
is also an ${\mathsf{E}}$-equilibrium.
\end{lemma}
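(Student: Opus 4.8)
The plan is to exploit the fact that an $\mathsf{E}$-equilibrium imposes, via the {\sf WEEP}, a sharp structural constraint on the two columns in $\sigma(p_2)$. Write $\sigma(p_2) = \{j_1, j_2\}$ and let $p_2(j_1) = q$, $p_2(j_2) = 1-q$. First I would observe that since $\langle p_1, p_2\rangle$ is an $\mathsf{E}$-equilibrium, player $2$ plays a ${\mathsf{V}}_2 = {\mathsf{E}}_2$-best response to $p_1$, so by the {\sf WEEP} applied to player $2$ we get
\begin{equation*}
{\mathsf{E}}_2(p_1, p_2^{j_1}) = {\mathsf{E}}_2(p_1, p_2^{j_2}),
\end{equation*}
i.e. the two columns in the support yield equal expected cost to player $2$ against $p_1$. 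This is exactly the invariance needed: the expected cost ${\mathsf{E}}_2(p_1, \widetilde{p}_2)$ is a convex combination of these two equal quantities, hence equals the common value, and moreover equals ${\mathsf{E}}_2(p_1, p_2)$, which was already minimal over all columns. Thus replacing $p_2$ by $\widetilde{p}_2$ keeps player $2$ happy.

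The main content is checking that player $1$ remains happy, i.e. that $p_1$ is still an ${\mathsf{E}}$-best response when player $2$ switches to the uniform $\widetilde{p}_2$. The key point I would push is that player $1$'s expected cost is \emph{linear} in player $1$'s own mixing probabilities and, crucially here, that what player $1$ sees from player $2$ enters only through the column costs ${\mathsf{\mu}}_1(i, j_1)$ and ${\mathsf{\mu}}_1(i, j_2)$ weighted by player $2$'s distribution. So I would compute, for each row $i$,
\begin{equation*}
{\mathsf{E}}_1(p_1^i, \widetilde{p}_2) = \tfrac{1}{2}\,{\mathsf{\mu}}_1(i, j_1) + \tfrac{1}{2}\,{\mathsf{\mu}}_1(i, j_2),
\end{equation*}
and compare it with ${\mathsf{E}}_1(p_1^i, p_2) = q\,{\mathsf{\mu}}_1(i, j_1) + (1-q)\,{\mathsf{\mu}}_1(i, j_2)$. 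Here I would use the $2$-values structure: each ${\mathsf{\mu}}_1(i, j_\ell) \in \{a, b\}$. By Lemma~\ref{domination and weep} (or by the {\sf WEEP} applied to player $1$), no row in $\sigma(p_1)$ dominates another with respect to $\sigma(p_2)$, which forces the rows in the support to split their $a$-entries across the two columns in a balanced way; the rows in $\sigma(p_1)$ each achieve the common minimum expected value. The goal is to argue that the ranking of rows by expected cost is preserved when $q$ is replaced by $\tfrac{1}{2}$.

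The hard part will be ruling out that some row \emph{outside} $\sigma(p_1)$ becomes strictly cheaper under $\widetilde{p}_2$ than the support rows, which would break player $1$'s equilibrium condition. I expect the cleanest route is a case analysis on the pattern of $\{a,b\}$ values in the two support columns: for each row $i$ the pair $({\mathsf{\mu}}_1(i,j_1), {\mathsf{\mu}}_1(i,j_2))$ is one of $(a,a), (a,b), (b,a), (b,b)$, and in a $2$-values game with no $(a,a)$ entry (so no row has cost $a$ in both columns simultaneously for the same player) the feasible patterns are limited. I would show that the set of rows minimizing ${\mathsf{E}}_1(p_1^i, p_2)$ over all $i$ is unchanged when $q$ moves to $\tfrac{1}{2}$, because the minimizing rows are precisely those with the fewest $b$-entries weighted appropriately, and any threshold crossing would contradict either the no-$(a,a)$ assumption or the optimality of the support rows under the original $p_2$. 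Once both players are shown happy, the conclusion that $\langle p_1, \widetilde{p}_2\rangle$ is an $\mathsf{E}$-equilibrium follows immediately.
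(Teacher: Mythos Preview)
Your approach is essentially the paper's: handle player~$2$ directly via the {\sf WEEP} (since ${\mathsf{E}}_2(p_1,p_2^{j_1})={\mathsf{E}}_2(p_1,p_2^{j_2})$ and this common value is already minimal over all columns), and handle player~$1$ by a case analysis on the pattern $({\mathsf{\mu}}_1(i,j_1),{\mathsf{\mu}}_1(i,j_2))\in\{a,b\}^2$ that each row exhibits on the two support columns.

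Two points need correction, though neither is fatal. First, you misread the ``no $(a,a)$ entry'' assumption: it says no cell of the \emph{bimatrix} has both players' costs equal to $a$; it does \emph{not} forbid a row from having ${\mathsf{\mu}}_1(i,j_1)={\mathsf{\mu}}_1(i,j_2)=a$. The paper's proof explicitly treats that as its Case~(1). So you cannot invoke this assumption to limit player~$1$'s row patterns. Second, your claim that ``the set of rows minimizing ${\mathsf{E}}_1(p_1^i,p_2)$ is unchanged when $q$ moves to $\tfrac12$'' is too strong and in fact false: if $q>\tfrac12$ and all support rows have pattern $(a,b)$, a row with pattern $(b,a)$ is strictly worse under $p_2$ but ties under $\widetilde{p}_2$, so the argmin set can grow. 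What you actually need---and what the paper proves---is only that $\sigma(p_1)$ remains contained in the argmin.

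The correct dispatch (and this is where the paper's argument lands) is: by {\sf WEEP} the support rows share a common expected cost $\alpha$ under $p_2$, and the two-values constraint forces exactly one of three cases---all support rows have pattern $(a,a)$, all have $(b,b)$, or all have a mixed pattern. In the mixed case the crucial observation is that no row anywhere can have pattern $(a,a)$, not because of the bimatrix assumption but because such a row would have cost $a<\alpha$ under $p_2$, contradicting that $\langle p_1,p_2\rangle$ is an ${\mathsf{E}}$-equilibrium. With $(a,a)$ rows excluded by optimality, every row has cost $\ge\tfrac{a+b}{2}$ under $\widetilde{p}_2$, which is exactly the support value. Your phrase ``optimality of the support rows under the original $p_2$'' is the right lever; just drop the appeal to the bimatrix assumption and weaken the argmin claim.
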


\begin{proof}
Without loss of generality,
take
$\sigma (p_{2})
  =
  \{ 0, 1 \}$.
To prove that
$\langle p_{1}, \widetilde{p}_{2}\rangle$
is an ${\mathsf{E}}$-equilibrium,
We consider each player separately: 
\begin{itemize}

\item
\underline{Player $2$ is ${\mathsf{E}}$-happy}:
By the {\sf WEEP} for
$(p_{1}, p_{2})$,
${\mathsf{E}}_{2}(p_{1}, p_{2}^{0})
  =
  {\mathsf{E}}_{2}(p_{1}, p_{2}^{1})$;
since $\langle p_{1}, p_{2}\rangle$
is an ${\mathsf{E}}$-equilibrium,
${\mathsf{E}}_{2}(p_{1}, p_{2}^{0})
  \leq
  {\mathsf{E}}_{2}(p_{1}, p_{2}^{\ell})$
for all strategies
$\ell \not\in \sigma (p_{2})$. 
Since $\sigma (p_{2})
           =
           \sigma(\widetilde{p}_{2})$,
the claim follows.

\item
\underline{Player $1$ is {${\mathsf{E}}$-happy}:}  
By the {\sf WEEP},
there is a number
$\alpha \in {\mathbb{R}}$
with 
${\mathsf{E}}_{1}(p_{1}^{i}, p_{2})
  =
  \alpha$
for all strategies $i \in \sigma (p_{1})$.   
By the linearity of Expectation,
there are only three possible cases:
\begin{enumerate}

\item[{\sf (1)}]
${\mathsf{\mu}}_{1}(i, 0)
  =
  {\mathsf{\mu}}_{1}(i, 1)
  =
  a$
for all $i \in \sigma (p_{1})$.

\item[{\sf (2)}]
${\mathsf{\mu}}_{1}(i, 0)
  =
  {\mathsf{\mu}}_{1}(i, 1)
  =
  b$
for all $i \in \sigma (p_{1})$.

\item[{\sf (3)}]
${\mathsf{\mu}}_{1}(i, 0)
  \neq
  {\mathsf{\mu}}_{1}(i, 1)$
for all $i \in \sigma (p_{1})$.
\end{enumerate} 
\noindent It follows that
{
\begin{eqnarray*}
           {\mathsf{E}}_{1}\left( p_{1}^{i}, \widetilde{p}_{2}
                                       \right)
& = & \left\{ \begin{array}{ll}
                       a\,                              & \mbox{if
                                                                        ${\mathsf{\mu}}_{1}(i, 0)
                                                                          =
                                                                          {\mathsf{\mu}}_{1}(i, 1)
                                                                          =
                                                                          a$
                                                                          for all $i \in \sigma (p_{1})$}                                                                                             \\
                       b\,                              & \mbox{if
                                                                        ${\mathsf{\mu}}_{1}(i, 0)
                                                                           =
                                                                           {\mathsf{\mu}}_{1}(i, 1)
                                                                           =
                                                                           b$
                                                                           for all $i \in \sigma (p_{1})$}                                                                                             \\
                      \frac{\textstyle a+b}
                              {\textstyle 2}     &  \mbox{if ${\mathsf{\mu}}_{1}(i, 0)
                                                                              \neq
                                                                              {\mathsf{\mu}}_{1}(i, 1)$
                                                                         for all $i \in \sigma (p_{1})$}                                                                                              \\                      
                    \end{array}
           \right.\, .
\end{eqnarray*}
}

\noindent So in all Cases
{\sf (1)}, {\sf (2)} and {\sf (3)},
${\mathsf{E}}_{1}(p_{1}^{i}, \widetilde{p}_{2})$
is constant over all strategies
$i \in \sigma (p_{1})$,
and the {\sf WEEP} holds,
so that
${\mathsf{E}}_{1}(p_{1}, \widetilde{p}_{2})
  =
  {\mathsf{E}}_{1}(p_{1}^{i}, \widetilde{p}_{2})$
for any strategy
$i \in \sigma (p_{1})$.
Thus,
it remains to prove that
player $1$ cannot improve
by switching to some strategy
$\ell \not\in \sigma(p_1)$.
Clearly,
{
\begin{eqnarray*}
           {\mathsf{E}}_{1}\left( p_{1}^{\ell}, \widetilde{p}_{2}
                                       \right)
& = & \left\{ \begin{array}{ll}
                       a\,                              & \mbox{if
                                                                        ${\mathsf{\mu}}_{1}(\ell, 0)
                                                                          =
                                                                          {\mathsf{\mu}}_{1}(\ell, 1)
                                                                          =
                                                                          a$}                                                                                                                                      \\
                       b\,                              & \mbox{if
                                                                        ${\mathsf{\mu}}_{1}(\ell, 0)
                                                                           =
                                                                           {\mathsf{\mu}}_{1}(\ell, 1)
                                                                           =
                                                                           b$}                                                                                                                                     \\
                      \frac{\textstyle a+b}
                              {\textstyle 2}     &  \mbox{if ${\mathsf{\mu}}_{1}(\ell, 0)
                                                                              \neq
                                                                              {\mathsf{\mu}}_{1}(\ell, 1)$}                                                                                          \\              
                    \end{array}
           \right.\, .           
\end{eqnarray*}
}

\noindent This is evident in Case {\sf (1)}.
Note that in Case {\sf (2)},
it must also hold that
${\mathsf{\mu}}_{1}(\ell, 0)
  =
  {\mathsf{\mu}}_{1}(\ell, 1)
  =
  b$
for all strategies $\ell \not\in \sigma(p_1)$
since $\langle p_{1}, p_{2}\rangle$
is an ${\mathsf{E}}$-equilibrium.
So player $1$ cannot improve
in this case either.
In Case {\sf (3)}, 
since $\langle p_{1}, p_{2}\rangle$
is an ${\mathsf{E}}$-equilibrium,
it must hold that 
for all strategies $\ell \not\in \sigma(p_1)$,
either
${\mathsf{\mu}}_{1}(\ell, 0)
  =
  {\mathsf{\mu}}_{1}(\ell, 1)
  =
  b$
or
${\mathsf{\mu}}_{1}(\ell, 0)
  \neq
  {\mathsf{\mu}}_{1}(\ell, 1)$.
So player~$1$ cannot improve
in this case either.

\end{itemize}
\noindent The claim follows.\qed 
\end{proof}

\subsection{Normal Games}
\label{normal games}

\noindent
We now introduce a restriction of 2-players, 2-values games to very sparse games that we call {\em normal games}.


\begin{center}
\fbox{
\begin{minipage}{6in}
\begin{definition}[{Normal Game}]
\label{normal game}
A 2-players, 2-values game ${\mathsf{G}}$
is {\em {\emph{\textbf{normal}}}}
if it fulfills:
\begin{enumerate}
\item[{\sf (1)}]
There is no $(a, a)$ entry
in the bimatrix of ${\mathsf{G}}$.
\item[{\sf (2)}]
Player $1$ (the {\it row player})
has exactly one $a$ entry
per column.
\item[{\sf (3)}]
Player $2$ (the {\it column player})
has exactly one $a$ entry
per row.
\end{enumerate}
\end{definition}
\end{minipage}
}
\end{center}

\noindent Note that the definition of a normal game
is symmetric with respect to the two players. {Also note that excluding $(a,a)$ entries, excludes the trivial existence of pure equilibria.}
We prove:

\begin{lemma}
\label{no equilibrium for normal game}
Consider the normal game 
${\mathsf{G}}^{{\mathsf{V}}}$,
where ${\mathsf{V}}$
has the {\sf WEEP}. 
Then,
${\mathsf{G}}^{{\mathsf{V}}}$
has no ${\mathsf{V}}$-equilibrium $\langle p_{1}, p_{2}\rangle$
with $|\sigma (p_{1})| = 1$
or with
$|\sigma (p_{2})| = 1$. 

\end{lemma}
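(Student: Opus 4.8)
The plan is to argue by contradiction, using the symmetry of normal games to reduce to one case and then combining the \textsf{WEEP} with the ``one $a$ per row'' structure of Definition~\ref{normal game}. Suppose $\langle p_{1}, p_{2}\rangle$ were a ${\mathsf{V}}$-equilibrium with $|\sigma(p_{1})| = 1$; since the definition of a normal game is symmetric in the two players, the case $|\sigma(p_{2})| = 1$ is settled by the mirror argument, so it suffices to treat $|\sigma(p_{1})| = 1$. Write $\sigma(p_{1}) = \{0\}$, i.e. player $1$ plays row $0$ as a pure strategy $p_{1}^{0}$.

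First I would extract the relevant structure from player $2$'s best response via the \textsf{WEEP}. Since $p_{2}$ is a ${\mathsf{V}}_{2}$-best response to $p_{1}^{0}$ and ${\mathsf{E}}_{2}(p_{1}^{0}, p_{2}^{\ell}) = {\mathsf{\mu}}_{2}(0,\ell)$, the \textsf{WEEP} forces ${\mathsf{\mu}}_{2}(0,\ell)$ to equal a common value $\alpha \in \{a,b\}$ for all $\ell \in \sigma(p_{2})$. Because a normal game has exactly one $a$-entry of player $2$ per row, row $0$ contains a unique column $j_{0}$ with ${\mathsf{\mu}}_{2}(0, j_{0}) = a$. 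This splits the analysis cleanly: if $\alpha = a$ then $\sigma(p_{2}) = \{j_{0}\}$ (so $p_{2}$ is pure as well), whereas if $\alpha = b$ then $j_{0} \notin \sigma(p_{2})$.

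The two cases are then closed by exhibiting a profitable deviation. In the case $\alpha = b$, every column of $\sigma(p_{2})$ gives player $2$ the cost $b$ against row $0$, so player $2$'s cost under $\langle p_{1}^{0}, p_{2}\rangle$ is deterministically $b$ and hence ${\mathsf{V}}_{2}(p_{1}^{0}, p_{2}) = b$; deviating to the pure column $j_{0}$ gives ${\mathsf{V}}_{2}(p_{1}^{0}, p_{2}^{j_{0}}) = {\mathsf{\mu}}_{2}(0, j_{0}) = a < b$, contradicting that $p_{2}$ is a best response. In the case $\alpha = a$ we have the pure profile $(0, j_{0})$ with ${\mathsf{\mu}}_{2}(0, j_{0}) = a$; since there is no $(a,a)$ entry, ${\mathsf{\mu}}_{1}(0, j_{0}) = b$, and as player $1$ owns a unique $a$-entry in column $j_{0}$, that entry lies in some row $i^{\ast} \neq 0$. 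Then player $1$'s deviation to $p_{1}^{i^{\ast}}$ yields ${\mathsf{\mu}}_{1}(i^{\ast}, j_{0}) = a < b = {\mathsf{\mu}}_{1}(0, j_{0})$, contradicting that $p_{1}^{0}$ is a best response. Either branch gives a contradiction, so no such equilibrium exists.

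I expect the only delicate point to be the identity ${\mathsf{V}}_{2}(p_{1}^{0}, p_{2}) = b$ used in the case $\alpha = b$: the \textsf{WEEP} by itself controls only the expectation over the support, not the value of ${\mathsf{V}}_{2}$. I would justify the identity by observing that, against the pure row $0$, all columns of $\sigma(p_{2})$ induce the \emph{same} deterministic cost $b$ for player $2$, so the profiles $\langle p_{1}^{0}, p_{2}\rangle$ and $\langle p_{1}^{0}, p_{2}^{\ell}\rangle$ induce identical cost distributions for player $2$, and a valuation assigns to a deterministic outcome exactly that cost. Everything else is routine bookkeeping with the normality constraints (one $a$ per row and per column, and no $(a,a)$ entry).
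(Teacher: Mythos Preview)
Your proof is correct and follows essentially the same approach as the paper's: a contradiction argument exploiting the symmetry of normal games, the \textsf{WEEP} to force a constant cost $\alpha$ for player $2$ across $\sigma(p_2)$, and then the one-$a$-per-row/column structure to exhibit a profitable deviation. The only cosmetic difference is that the paper splits into the cases $|\sigma(p_2)|=1$ versus $|\sigma(p_2)|>1$, whereas you split by $\alpha=a$ versus $\alpha=b$; these partitions are equivalent, and your explicit justification that ${\mathsf V}_2(p_1^0,p_2)=b$ in the $\alpha=b$ branch is a point the paper leaves implicit.
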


\begin{proof}
Assume, by way of contradiction,
that ${\mathsf{G}}$
has a ${\mathsf{V}}$-equilibrium
$\langle p_{1}, p_{2}\rangle$
with
$|\sigma (p_{1})| = 1$.
(By the symmetry in the definition of a normal game,
this assumption is with no loss of generality.)
Let $\sigma(p_1)
  =
  \{ i \}$.
We proceed by case analysis.
Assume first that $|\sigma (p_{2})|=1$.
Let $\sigma (p_{2})
  =
  \{ j \}$.  
By Condition {\sf (1)}
for a normal game,
either
${\mathsf{\mu}}_{1}(i,j) = b$
or
${\mathsf{\mu}}_{2}(i,j) = b$.
If
${\mathsf{\mu}}_{1}(i,j) = b$,
then,
by Condition {\sf (2)}
for a normal game,
there is a strategy ${\widehat{i}}$
with
${\mathsf{\mu}}_{1}({\widehat{i}}, j) = a$.
Hence,
player $1$ improves by switching to strategy ${\widehat{i}}$.
A contradiction.
The case where
${\mathsf{\mu}}_{2}(i, j) = b$
is handled identically,
using Condition {\sf (3)} for a normal game.

Assume now that $|\sigma (p_{2})| > 1$.
By the {\sf WEEP},
there is an  $\alpha \in \{ a, b \}$
with 
${\mathsf{\mu}}_{2}(i, j)
  =
  \alpha$
for all $j \in \sigma (p_{2})$.
Since $|\sigma (p_{2})| > 1$,
Condition {\sf (3)} for a normal game
implies that
$\alpha = b$.  
Again by Condition {\sf (3)}
for a normal game,
there is a strategy 
${\widehat{j}} \in \sigma (p_{2})$
with
${\mathsf{\mu}}_{2}(i, {\widehat{j}}) = a$.
Hence,
player $2$ improves by switching to strategy ${\widehat{j}}$.
A contradiction.  \qed
\end{proof}

\noindent Note that Lemma~\ref{no equilibrium for normal game} implies 
that a normal game has no pure equilibrium and that $|\sigma(p_k)|\ge 2,~k\in [2]$,
for any {\sf V}-equilibrium $\langle p_{1}, p_{2}\rangle$.

\section{{Unimodal Valuations}}
\label{sec:standrd valuations}

\subsection{{Definitions}}
\label{unimodal valuations definition}

\noindent
For each player $k \in [2]$, define
{
{
\begin{eqnarray*}
x_{k}(p_{1}, p_{2}) & :=  & \sum_{(s_{1}, s_{2}) 
                                                         \in
                                                         S_{1} \times S_{2}
                                                        \mid
                                                       {\mathsf{\mu}}_{k}(s_{1}, s_{2})
                                                        = a}
                                               p_{1}(s_{1})
                                               \cdot
                                               p_{2}(s_{2})\, .
\end{eqnarray*}              
}
}  
{Note  that $x_{k}$ is linear in each of its arguments:
for each $\lambda \in [0,1]$,}
{
{
\begin{eqnarray*}
         x_{k}(\lambda \cdot p_{k}^{\prime} + (1-\lambda) \cdot p_{k}^{\prime\prime}, 
                   p_{\bar{k}})
&   = \lambda \cdot x_{k}(p_{k}^{\prime}, p_{\bar{k}})
         +
         (1-\lambda) \cdot x_{k}(p_{k}^{\prime\prime}, p_{\bar{k}})\, .
\end{eqnarray*}
}
}
\noindent
{We proceed to define:}

\begin{center}
\fbox{
\begin{minipage}{6in}
\begin{definition}[One-Parameter Valuation]
\label{onedimensional valuation definition}
For a $2$-players, $2$-values game,
a valuation ${\mathsf{V}}$
is {\em {\emph{\textbf{one-parameter}}}}
if for each mixed profile
$\langle p_{1}, p_{2}\rangle$,
for each player $k \in [2]$,
${\mathsf{V}}_{k}(p_{1}, p_{2})$
can be written as
$${\mathsf{V}}_{k}(p_{1}, p_{2})
  = 
  {\mathsf{F}}(x_{k}(p_{1}, p_{2}))\, ,$$

\noindent for some function
${\mathsf{F}}: [0, 1] \rightarrow {\mathbb{R}}$,
with ${\mathsf{F}}(0) = b$
and
${\mathsf{F}}(1) = a$.
\end{definition}
\end{minipage}
}
\end{center}

\noindent {We observe:}

\begin{lemma}
\label{painfully trivial}
{For a $2$-players, $2$-values game,
a one-parameter valuation ${\mathsf{V}}$
is concave
if and only if
${\mathsf{F}}$ is concave.}
\end{lemma}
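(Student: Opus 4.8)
The plan is to prove the biconditional Lemma~\ref{painfully trivial} by reducing the concavity of the one-parameter valuation ${\mathsf{V}}_k(p_1,p_2) = {\mathsf{F}}(x_k(p_1,p_2))$ in the variable $p_k$ (for fixed $p_{\bar k}$) to the concavity of the single-variable function ${\mathsf{F}}$ on $[0,1]$. The key structural fact I would exploit is the linearity of $x_k$ in its first argument, which is exactly the displayed identity stated just before Definition~\ref{onedimensional valuation definition}: for any mixed strategies $p_k', p_k''$ and any $\lambda \in [0,1]$, we have $x_k(\lambda p_k' + (1-\lambda)p_k'', p_{\bar k}) = \lambda\, x_k(p_k', p_{\bar k}) + (1-\lambda)\, x_k(p_k'', p_{\bar k})$.

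For the ``if'' direction, I would fix $p_{\bar k}$, take any two mixed strategies $p_k', p_k''$ for player $k$, and set $u = x_k(p_k', p_{\bar k})$ and $v = x_k(p_k'', p_{\bar k})$, both lying in $[0,1]$. Using the linearity identity, the value of the valuation at the convex combination is ${\mathsf{F}}(\lambda u + (1-\lambda)v)$, and assuming ${\mathsf{F}}$ is concave this is at least $\lambda {\mathsf{F}}(u) + (1-\lambda){\mathsf{F}}(v) = \lambda {\mathsf{V}}_k(p_k',p_{\bar k}) + (1-\lambda){\mathsf{V}}_k(p_k'',p_{\bar k})$. This establishes that ${\mathsf{V}}_k$ is concave in $p_k$ for every fixed $p_{\bar k}$, which is the definition of concavity for the valuation given in the excerpt.

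For the ``only if'' direction, the task is to show that if every such valuation is concave then ${\mathsf{F}}$ itself is concave, and here the main obstacle is realizability: given any two target values $u, v \in [0,1]$, I must exhibit a $2$-players, $2$-values game together with a partial profile $p_{\bar k}$ and mixed strategies $p_k', p_k''$ so that $x_k(p_k', p_{\bar k}) = u$ and $x_k(p_k'', p_{\bar k}) = v$. The cleanest approach is to use a small explicit game (for instance a $2\times 1$ or $2\times 2$ bimatrix) in which player $k$ has one pure strategy yielding cost $a$ against player $\bar k$'s chosen column and another yielding cost $b$; then as $p_k$ ranges over all mixtures of these two pure strategies, $x_k(p_k, p_{\bar k})$ ranges over all of $[0,1]$, so every pair $(u,v)$ is attainable. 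Once realizability is secured, concavity of ${\mathsf{V}}_k = {\mathsf{F}}\circ x_k$ in $p_k$ together with the linearity of $x_k$ forces ${\mathsf{F}}(\lambda u + (1-\lambda)v) \ge \lambda {\mathsf{F}}(u) + (1-\lambda){\mathsf{F}}(v)$, i.e.\ ${\mathsf{F}}$ is concave.

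The genuinely delicate point is confirming the realizability claim for arbitrary $u,v \in [0,1]$ while respecting the $2$-values constraint (all costs in $\{a,b\}$) and the requirement ${\mathsf{F}}(0)=b$, ${\mathsf{F}}(1)=a$; everything else is a direct substitution using the stated linearity of $x_k$. I expect this reduction to a single simple gadget game to be the crux, and I would present it as the only nontrivial step, handling both directions of the biconditional in parallel once that gadget is in place.
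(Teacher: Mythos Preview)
Your proposal is correct and follows essentially the same route as the paper: both arguments hinge on the linearity of $x_k$ in $p_k$, which turns the concavity inequality for ${\mathsf{V}}_k$ at a triple $(p_k',p_k'',\lambda)$ into the concavity inequality for ${\mathsf{F}}$ at the triple $(x_k(p_k',p_{\bar k}),x_k(p_k'',p_{\bar k}),\lambda)$.

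The one noteworthy difference is that you are more explicit about the ``only if'' direction. The paper's proof simply records the pointwise equivalence of the two inequalities and concludes; it does not spell out that every pair $(u,v)\in[0,1]^2$ is realizable as $(x_k(p_k',p_{\bar k}),x_k(p_k'',p_{\bar k}))$ in some $2$-values game. You correctly identify this realizability step as the nontrivial content of that direction and supply a small gadget (a row with all $a$'s and a row with all $b$'s for player~$k$, so that $x_k(p_k,p_{\bar k})$ equals the probability on the first row and sweeps $[0,1]$). This makes your argument strictly more complete than the paper's, though the underlying idea is identical.
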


\begin{proof}
{Consider the mixed profile
$\langle p_{k}, p_{\overline{k}}\rangle
  :=
  (\lambda \cdot p_{k}^{\prime} + (1-\lambda) p_{k}^{\prime\prime}, p_{\overline{k}})$,
with $\lambda \in [0, 1]$.}
{Recall the linearity
of $x_{k}$ in $p_{k}$,
for each player $k \in [2]$.}
{Since ${\mathsf{V}}$
is a one-parameter valuation,
for $k=1$,
we get that}
{
{
	\begin{eqnarray*}
		{\mathsf{F}}(x_{1}(p_{1}, p_{2}))
		& \geq & \lambda \cdot {\mathsf{F}}(x_{1}(p_{1}^{\prime}, p_{2}))
		+
		(1-\lambda) \cdot {\mathsf{F}}(x_{1}(p_{1}^{\prime\prime}, p_{2}))
	\end{eqnarray*}
}
}
{if and only if}
{
{
	\begin{eqnarray*}
		{\mathsf{V}}_{1}(p_{1}, p_{2})
		& \geq & \lambda \cdot {\mathsf{V}}_{1}(p_{1}^{\prime}, p_{2})
		+
		(1-\lambda) \cdot {\mathsf{V}}_{1}(p_{1}^{\prime\prime}, p_{2})\, .
	\end{eqnarray*}
}
}
\noindent
{Hence, 
${\mathsf{V}}_{1}$ is concave
if and only if
${\mathsf{F}}$ is concave. (The case for $k=2$ is symmetric.)}\qed
\end{proof}

\noindent
{We now introduce a restriction of one-parameter valuations.}

\begin{center}
\fbox{
\begin{minipage}{6in}
\begin{definition}[{Unimodal Valuation}]
\label{unimodal valuation definition}
{In the 2-values case,
a one-parameter valuation ${\mathsf{V}}$ is {\em {\emph{\textbf{unimodal}}}}
if ${\mathsf{F}}: [0, 1] \rightarrow {\mathbb{R}}$
is a concave function
{with a unique maximum point.}
}  
\end{definition}
\end{minipage}
}
\end{center}

\noindent
{By Lemma~\ref{painfully trivial},
a unimodal valuation ${\mathsf{V}}$
is concave.}
{\it We shall often identify
the unimodal valuation ${\mathsf{V}}$
with the function ${\mathsf{F}}$
and refer to a ${\mathsf{V}}$-equilibrium 
for a unimodal valuation ${\mathsf{V}}$
as an ${\mathsf{F}}$-equilibrium.}\vspace{1em}

\noindent {Finally, we define $x_0 = x_0(\mathsf{F})$ to be
the value of $x$ where $\mathsf{F}(x)$ takes
its maximal value in $[0,1]$. 
}

\subsection{{Examples}}

\noindent {Recall the {\it Expectation} ${\mathsf{E}}$,
{\it Variance} ${\mathsf{Var}}$
and
{\it Standard Deviation} ${\mathsf{SD}}$
valuations.  
We shall consider the valuations
$
  \mbox{{\sf EVar}}^{{\mathsf{\gamma}}}
  =
  {\mathsf{E}} + {\mathsf{\gamma}} \cdot {\mathsf{Var}}$
and
$
  \mbox{{\sf ESD}}^{{\mathsf{\gamma}}}
  =
  {\mathsf{E}} + {\mathsf{\gamma}} \cdot {\mathsf{SD}}$,
with ${\mathsf{\gamma}} > 0$.}
{Both ${\mathsf{EVar}}^{{\mathsf{\gamma}}}$ and ${\mathsf{ESD}}^{{\mathsf{\gamma}}}$
are concave functions
as the sums of two concave functions;
hence, 
they have the {\sf WEEP}
and the {\it Optimal-Value} property (Proposition~\ref{concavity implies optimal value property}).
We first derive formulas
for 
${\mathsf{E}}$, ${\mathsf{Var}}$ and ${\mathsf{SD}}$
in terms of $x$.
Fix a mixed profile
$\langle p_{1}, p_{2}\rangle$.
Then,} 
{
{
\begin{eqnarray*}
          {\mathsf{E}}_{1}(p_{1}, p_{2})
 & = & \sum_{(s_{1}, s_{2})
                      \in
                      S_{1} \times S_{2}}
             p_{1}(s_{1})
             \cdot
             p_{2}(s_{2})
             \cdot         
             {\mathsf{\mu}}_{1}(s_{1}, s_{2})                                                                                          \\
& = & a
          \cdot
          \sum_{(s_{1}, s_{2})
                     \in 
                     S_{1} \times S_{2}
                     \mid
                     {\mathsf{\mu}}_{1}(s_{1}, s_{2})
                     = a}
             p_{1}(s_{1})
             \cdot
             p_{2}(s_{2})
           \\
          & + & b
          \cdot
          \sum_{(s_{1}, s_{2})
                     \in 
                     S_{1} \times S_{2}
                     \mid
                     {\mathsf{\mu}}_{1}(s_{1}, s_{2})
                     = b}
             p_{1}(s_{1})
             \cdot
             p_{2}(s_{2})                                                                                                                            \\
& = & (a - b) \cdot x_1(p_1,p_2)
          +
          b\, .                     
\end{eqnarray*}
}
}
\noindent
{So
${\mathsf{E}}$
is a one-parameter valuation
with
${\mathsf{F}}(x) = (a-b) \cdot x + b$.}
{Note that
${\mathsf{F}}(x)$
is strictly monotone decreasing in $x$
for $x \in [0, 1]$.}
{For Variance,
we apply the ``mean of square minus square of mean'' formula
${\mathsf{Var}}(x) =
   {\mathsf{E}}(x^{2}) - ({\mathsf{E}}(x))^{2}$
for a random variable $x$,
to derive}
{
{
\begin{eqnarray*}
          {\mathsf{Var}}_{1}(p_{1}, p_{2})
& = & (a^{2} - b^{2}) \cdot x_1(p_1,p_2) + b^{2}
          -
          ((a-b)x_1(p_1,p_2) + b)^{2}                                                                                         \\
& = & (a^{2}-b^{2})
          \cdot
          x_1(p_1,p_2)
          +
          b^{2}
          -
          (a-b)^{2} x_1^{2}(p_1,p_2) \\
         & -&
          2 (a-b) \cdot b \cdot x_1(p_1,p_2)
          -
          b^{2}                                                                                                        \\
& = & x_1(p_1,p_2) (1-x_1(p_1,p_2)) (b-a)^{2}\, ,
\end{eqnarray*}
}
}
{which implies}
{
{
\begin{eqnarray*}
          {\mathsf{SD}}_{1}(p_{1}, p_{2})\ \
          :=\ \
          \sqrt{{\mathsf{Var}}_{1}(p_{1}, p_{2})}
& = & (b-a) \sqrt{x_1(p_1,p_2) (1-x_1(p_1,p_2))}\, .
\end{eqnarray*}
}
}

{\color{black}
\noindent Recently, {\it Conditional Value-at-Risk}~\cite{RU02} 
became very popular. $\mathsf{CVaR}_{\alpha}$, $\alpha\in (0,1)$ being
the confidence level, is recognized as a model of risk in volatile economic circumstances. 
For a discrete random variable ${\bf q}$ taking on values $0\le v_1 < v_2 < \ldots < v_{\ell}$
with probabilities $q_j$, $1\le j \le \ell$, Value-at-Risk $\mathsf{VaR}_{\alpha}$ and 
Conditional Value-at-Risk $\mathsf{CVaR}_{\alpha}$ are defined by 
$$\mathsf{VaR}_{\alpha}({\bf q}) = \min\left\{v_{\kappa}~:~\sum_{j=1}^{\kappa} q_{j}\ge \alpha\right\}$$ 
and 
$$\mathsf{CVaR}_{\alpha}({\bf q}) = \frac{1}{1-\alpha}\left[\left(\sum_{\kappa:v_{\kappa}\le \mathsf{VaR}_{\alpha}({\bf q})} q_{\kappa}-\alpha\right)\cdot \mathsf{VaR}_{\alpha}({\bf q})
+ \sum_{\kappa:v_{\kappa}> \mathsf{VaR}_{\alpha}({\bf q})} q_{\kappa}\cdot v_{\kappa}\right].$$
}

\subsubsection{{The Valuation ${\mathsf{EVar}}^{{\mathsf{\gamma}}}$}}

{
${\mathsf{EVar}}^{{\mathsf{\gamma}}}$ 
is a one-parameter valuation
represented by the concave function}
{
{
\begin{eqnarray*}
           {\mathsf{F}}_{1}^{{\mathsf{\gamma}}}(x)
& := & a \cdot x
          +
           b \cdot (1-x)
          +
          {\mathsf{\gamma}}
          \cdot
          (b-a)^{2}
          \cdot
          x \cdot (1-x)\, ,
\end{eqnarray*}
}
}
{with ${\mathsf{F}}_{1}^{{\mathsf{\gamma}}}(0) = b$
and ${\mathsf{F}}_{1}^{{\mathsf{\gamma}}}(1) = a$.
}
{Note that}
{
{
\begin{eqnarray*}
      ({\mathsf{F}}_{1}^{{\mathsf{\gamma}}}(x))^{\prime}
& = & (a-b)
      +
      {\mathsf{\gamma}}
      \cdot
      (1-2x)
      \cdot
      (b-a)^{2}\, ,
\end{eqnarray*}
}
}
{with} 
{
{
\begin{eqnarray*}
      ({\mathsf{F}}_{1}^{{\mathsf{\gamma}}}(x))^{\prime}_{x = x_{0}}\ \
      =\ \
      0
& \Longleftrightarrow & x_{0}\ \
      =\ \
      \frac{\textstyle 1}
           {\textstyle 2}
      \left( 1 - \frac{\textstyle 1}
                            {\textstyle {\mathsf{\gamma}} \cdot (b-a)}
      \right)\, .
\end{eqnarray*}
}
}
\noindent
{Since
$({\mathsf{F}}_{1}^{{\mathsf{\gamma}}}(x))^{\prime\prime}
 =
 -2 {\mathsf{\gamma}} \cdot (b-a)^{2}
 <
 0$
for all $x \in [0, 1]$, 
it follows that
${\mathsf{F}}_{1}^{{\mathsf{\gamma}}}(x)$
has a local maximum at~$x = x_{0}$.} 

\remove{
{
\begin{eqnarray*}
      {\mathsf{F}}_{1}^{{\mathsf{\gamma}}}(x_{0})
& = & b +
          (a-b)
          \cdot
          \frac{\textstyle 1}
                  {\textstyle 2}
          \left( 1 -
                   \frac{\textstyle 1}
                          {\textstyle {\mathsf{\gamma}} (b-a)}
     \right)
      +
      {\mathsf{\gamma}}
      \cdot
      (b-a)^{2}
      \cdot
      \underbrace{\frac{\textstyle 1}
                                  {\textstyle 2}
   \left( 1 - \frac{\textstyle 1}
                         {\textstyle {\mathsf{\gamma}} \cdot (b-a)}
  \right)
 \left( \frac{\textstyle 1}
                      {\textstyle 2}
             +
             \frac{\textstyle 1}
                     {\textstyle 2 }
             \cdot
             \frac{\textstyle 1}
                     {\textstyle {\mathsf{\gamma}} \cdot (b-a)}
      \right)}_{= \frac{\textstyle 1}
                                 {\textstyle 4}
                     \left( {\textstyle 1} -
                              \frac{\textstyle 1}
                                      {\textstyle {\mathsf{\gamma}} \cdot (b-a)^{2}}
                     \right)}                                                                                \\
& = & \frac{\textstyle 1}
                  {\textstyle 2}
          (a+b)
          +
          \frac{\textstyle 1}
                  {\textstyle 2 {\mathsf{\gamma}}}
      +
      \frac{\textstyle 1}
           {\textstyle 4}
      {\mathsf{\gamma}}
      \cdot     
      (b-a)^{2}
      -
      \frac{\textstyle 1}
              {\textstyle 4 {\mathsf{\gamma}}}                                          \\
& = &  \frac{\textstyle 1}
                  {\textstyle 2}
          (a+b)
          +
           \frac{\textstyle 1}
                   {\textstyle 4 {\mathsf{\gamma}}}               
          +
          \frac{\textstyle 1}
                  {\textstyle 4}
          {\mathsf{\gamma}}
          \cdot
          (b-a)^{2}\, .
\end{eqnarray*}
}
}
\noindent
{To determine the monotonicity properties
of ${\mathsf{F}}_{1}^{{\mathsf{\gamma}}}(x)$,
we distinguish two cases:}
\begin{itemize}

\item
{
For ${\mathsf{\gamma}} \cdot (b-a) \leq 1$,
${\mathsf{F}}_{1}^{{\mathsf{\gamma}}}(x)$
decreases strictly monotone
for $x \in [0, 1]$.}

\item
{
For ${\mathsf{\gamma}} \cdot (b-a) > 1$,
$0 < x_{0} < \frac{\textstyle 1}
                               {\textstyle 2}$,
and ${\mathsf{F}}_{1}^{{\mathsf{\gamma}}}(x)$
increases strictly monotone
for $x \in [0, x_{0}]$
and decreases strictly monotone for
$x \in [x_{0}, 1]$.
}

\end{itemize}

\noindent
So,
${\mathsf{F}}_{1}^{{\mathsf{\gamma}}}$ is a unimodal valuation. Further, note that 
{
${\mathsf{F}}_{1}^{{\mathsf{\gamma}}}\left( \frac{\textstyle 1}
{\textstyle m}
\right)
=
\frac{\textstyle a}
{\textstyle m}
+
\frac{\textstyle m-1}
{\textstyle m}\cdot b
+
{\mathsf{\gamma}}
\cdot
(b-a)^{2}
\cdot
\frac{\textstyle m-1}
{\textstyle m}$.          
Thus,
${\mathsf{F}}_{1}^{{\mathsf{\gamma}}}\left( \frac{\textstyle 1}
{\textstyle m}
\right)
=
b$
if and only if
${\mathsf{\gamma}}
\cdot
(b-a)
=
\frac{\textstyle m-1}
{\textstyle m}$, for $m\in\mathbb{N}$.
}

	\remove{
${\mathsf{F}}_{1}^{{\mathsf{\gamma}}}\left( \frac{\textstyle 1}
                                                                                     {\textstyle 2}
                                                                   \right)
  =
  \frac{\textstyle a+b}
          {\textstyle 2}
  +
  {\mathsf{\gamma}}
  \cdot
  (b-a)^{2}
  \cdot
  \frac{\textstyle 1}
         {\textstyle 4}$.          
Thus,
${\mathsf{F}}_{1}^{{\mathsf{\gamma}}}\left( \frac{\textstyle 1}
                                                                                     {\textstyle 2}
                                                                   \right)
  >
  b$
if and only if
${\mathsf{\gamma}}
  \cdot
  (b-a)
  >
  2$.}

\subsubsection{{The Valuation ${\mathsf{ESD}}^{\mathsf{\gamma}}$}}

{${\mathsf{ESD}}^{{\mathsf{\gamma}}}$
is a one-parameter valuation
represented by the concave function}
{
{
\begin{eqnarray*}
      {\mathsf{F}}_{2}^{{\mathsf{\gamma}}}(x)
& := & a \cdot x
      +
      b \cdot (1-x)
      +
      {\mathsf{\gamma}}
      \cdot
      (b-a)
      \cdot
      \sqrt{x \cdot (1-x)}\, ,
\end{eqnarray*}
}
}
{with ${\mathsf{F}}_{2}^{{\mathsf{\gamma}}}(0) = b$
and ${\mathsf{F}}_{2}^{{\mathsf{\gamma}}}(1) = a$.}
{We have}
{
{
\begin{eqnarray*}
          ({\mathsf{F}}_{2}^{{\mathsf{\gamma}}}(x))^{\prime}
& = & (a-b)
          +
          {\mathsf{\gamma}}
          \cdot
          \frac{1-2x}
                  {2 \sqrt{x \cdot (1-x)}}
          \cdot
          (b-a)\, .
\end{eqnarray*}
}
}
{Note that
$({\mathsf{F}}_{2}^{{\mathsf{\gamma}}}(x))^{\prime}
  <
  0$
for $x > \frac{\textstyle 1}
                      {\textstyle 2}$.  
Hence,
we seek $x_{0} < \frac{\textstyle 1}
                                      {\textstyle 2}$
with
$({\mathsf{F}}_{2}^{{\mathsf{\gamma}}}(x))^{\prime}_{x = x_{0}} =~0$.
So,}                                      
{
{
\begin{eqnarray*}
      ({\mathsf{F}}_{2}^{{\mathsf{\gamma}}}(x))_{x = x_{0}}\ \
      =\ \
      0
& \Longleftrightarrow &  {\mathsf{\gamma}}
                                        \cdot
                                        \frac{1-2x_{0}}
                                                {2 \sqrt{x_{0} \cdot (1-x_{0})}}\ \
                                        =\ \
                                        1                                                                                              \\
& \Longleftrightarrow & {\mathsf{\gamma}}
                                       \cdot
                                       (1-2x_{0})\ \
                                       =\ \
                                      2 \sqrt{x_{0} \cdot (1-x_{0})}\, .
\end{eqnarray*}
}
}
\noindent
{It follows that                                      
${\mathsf{\gamma}}^{2}
  \cdot
  (1 - 4x_{0} + 4x_{0}^{2})
  =
  4 (x_{0} - x_{0}^{2})$,}                               
\noindent
{yielding}
{
{
\begin{eqnarray*}
x_{0}
& = &
 \frac{\textstyle 4 ({\mathsf{\gamma}}^{2} + 1) - \sqrt{ 16 ({\mathsf{\gamma}}^{2} + 1)^{2} - 16 {\mathsf{\gamma}}^{2} ({\mathsf{\gamma}}^{2} + 1)}}
         {\textstyle 2 \cdot 4 ({\mathsf{\gamma}}^{2} + 1)}\ \ 
 =\ \
 \frac{\textstyle 1}
         {\textstyle 2}
 -
 \frac{\textstyle 1}
        {\textstyle 2 \sqrt{{\mathsf{\gamma}}^{2} + 1}}\, .
\end{eqnarray*}
}
}                
{Since ${\mathsf{\gamma}} > 0$,
it follows that
$0 < x_{0} < \frac{\textstyle 1}
                               {\textstyle 2}$.}
{Since                                           
$({\mathsf{F}}_{2}^{{\mathsf{\gamma}}}(x))^{\prime}_{x=0} = + \infty$
and
$({\mathsf{F}}_{2}^{{\mathsf{\gamma}}})^{\prime}_{x=1} < 0$,
it follows that
$x_{0}
  =
  \frac{\textstyle 1}
         {\textstyle 2}
  -
  \frac{\textstyle 1}
         {\textstyle 2 \sqrt{{\mathsf{\gamma}}^{2}}}$
is a local maximum of ${\mathsf{F}}_{2}^{\mathsf{\gamma}}(x)$,
which is unique.}  
{Hence, ${\mathsf{ESD}}^{\mathsf{\gamma}}$
is a unimodal valuation
for all values of ${\mathsf{\gamma}} > 0$
and $a, b$ with $a < b$.}                  
Further,
note that
{
	${\mathsf{F}}_{1}^{{\mathsf{\gamma}}}\left( \frac{\textstyle 1}
	{\textstyle m}
	\right)
	=
	\frac{\textstyle a}
	{\textstyle m}
	+
	\frac{\textstyle m-1}
	{\textstyle m}\cdot b
	+
	{\mathsf{\gamma}}
	\cdot
	(b-a)
	\cdot
	\frac{\textstyle \sqrt{m-1}}
	{\textstyle m}$.          
	Thus,
	${\mathsf{F}}_{1}^{{\mathsf{\gamma}}}\left( \frac{\textstyle 1}
	{\textstyle m}
	\right)
	=
	b$
	if and only if
	${\mathsf{\gamma}}
	=
	\frac{\textstyle 1}
	{\textstyle \sqrt{m-1}}$, for $m\in\mathbb{N}$.
}

\remove{
${\mathsf{F}}_{2}^{{\mathsf{\gamma}}}\left( \frac{\textstyle 1}
                                                                                     {\textstyle 2}
                                                                    \right)                                                                    
   =
   \frac{\textstyle a+b}
           {\textstyle 2}
  +
  {\mathsf{\gamma}}
  \cdot
  (b-a)
  \cdot
  \frac{\textstyle 1}
         {\textstyle 2}$.          
Thus,
${\mathsf{F}}_{1}^{{\mathsf{\gamma}}}\left( \frac{\textstyle 1}
                                                                                     {\textstyle 2}
                                                                   \right)
  >
  b$
if and only if
${\mathsf{\gamma}} > 1$.}                                                        

{\color{black}         
\subsubsection{The Valuation $\mathsf{CVaR}_{\alpha}$} 

$\mathsf{CVaR}_{\alpha}$ is a one-parameter valuation represented by the concave 
function $\mathsf{F}_3^{\alpha}(x)$: 

\noindent If $x<\alpha$, then $\displaystyle \mathsf{F}_3^{\alpha}(x)=\frac{1}{1-\alpha}\cdot (1-\alpha)\cdot b = b$,
and if $x\ge\alpha$, then $\displaystyle \mathsf{F}_3^{\alpha}(x)=\frac{1}{1-\alpha}\cdot ((x-\alpha)\cdot a + (1-x)\cdot b)$.\\
Note that
\[   
\mathsf{VaR}_{\alpha}(x)= 
\begin{cases}
a, &\quad\text{if } x\ge \alpha\\
b, &\quad\text{if } x< \alpha\\
\end{cases}
\]

\noindent So, $\mathsf{F}_3^{\alpha}$ is a continuous function with $\mathsf{F}_3^{\alpha}(x)=b$ for $0\le x \le \alpha$,
$\mathsf{F}_3^{\alpha}(1)=a$ and linear for $\alpha\le x \le 1$. {\it As it does not have a unique maximum, it is not a unimodal
	valuation.} Furthermore, since $\mathsf{F}_3^{\alpha}(x)$ is monotone
decreasing in $x$, an {\sf E}-equilibrium is also an $\mathsf{F}_3^{\alpha}$-equilibrium, but since $\mathsf{F}_3^{\alpha}(x)$
is constant for $0\le x \le \alpha$, this does not hold vice versa. So, an $\mathsf{F}_3^{\alpha}$-equilibrium
always exists, but computing an  $\mathsf{F}_3^{\alpha}$-equilibrium is not necessarily $\mathcal{PPAD}$-hard.  

\noindent Observe that the {\sf WEEP} does not hold for $\mathsf{F}_3^{\alpha}$. To see this, consider the game\smallskip

$\left( 
\begin{array}{ll}
(a,b) & (b,a)  \\
(b,a) & (a,b) \\  
\end{array}
\right)$ with $\alpha=\frac{3}{4}$, $p_2(1)=\frac{1}{4}$, $p_2(2)=\frac{3}{4}$.\smallskip

\noindent Then $x_1(p_1^{1},p_2)=\frac{1}{4}$, $x_1(p_1^{2},p_2)=\frac{3}{4}$, and
$\mathsf{F}_3^{\alpha}(p_1^{1},p_2)=\mathsf{F}_3^{\alpha}(p_1^{2},p_2)=b$ but
$\mathsf{E}(p_1^{1},p_2)=\frac{1}{4}\cdot a + \frac{3}{4}\cdot b \ne 
\mathsf{E}(p_1^{2},p_2)=\frac{3}{4}\cdot a + \frac{1}{4}\cdot b$. 
So $\mathsf{F}_3^{\alpha}$ does not have the {\sf WEEP}. As we will show shortly, unimodal valuations have the {\sf WEEP}.\medskip

\noindent Finally, note that while for 2-players, 2-values games there always exists a
$\mathsf{CVaR}_{\alpha}$-equilibrium, this is not true for 2-players, 3-values games.
It is shown in~\cite[Theorem 6]{MM17} that the Crawford game  $\left( 
\begin{array}{ll}
(2,2) & (1,3)  \\
(1,3) & (3,1) \\  
\end{array}
\right)$
has no $\mathsf{CVaR}_{\alpha}$-equilibrium.
}

\subsection{{Properties}}

We prove some properties
of unimodal valuations. {First we provide necessary definitions.
Consider a 2-players, 2-values game with a unimodal valuation {\sf V}
and a mixed profile $\langle p_{1}, p_{2}\rangle$. 
Then, we say that player $k\in [2]$ is {\it {\sf V}-constant on $\sigma(p_k)$},
	if $\mathsf{V}_k(\widehat{p}_k,p_{\bar{k}})$ remains constant over
	all strategies $\widehat{p}_k$ with $\sigma (\widehat{p}_{k})
	\subseteq 	\sigma (p_{k})$. The notion of a player being 
	 {\sf E}-constant is defined similarly.} We now show:

\begin{lemma}
	\label{x-constant}
	For a 2-players, 2-values game with a unimodal valuation $\mathsf{V}$,
	consider a mixed profile $\langle p_{1}, p_{2}\rangle$. For each player $k\in [2]$, if 
	$k$ is $\mathsf{E}$-constant on $\sigma(p_k)$ or if $k$ is $\mathsf{V}$-constant
	on $\sigma(p_k)$, then $x_k(\widehat{p}_k,p_{\bar{k}})$ is constant for all
	$\widehat{p}_k$ with $\sigma(\widehat{p}_k) \subseteq \sigma(p_k)$.
\end{lemma}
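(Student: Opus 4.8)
The plan is to treat the two hypotheses separately and in both cases reduce the statement to a property of the one-parameter function $\mathsf{F}$. Throughout I would fix the player $k$ and the partial profile $p_{\bar{k}}$, and view $x_{k}(\widehat{p}_{k}, p_{\bar{k}})$ as a function of $\widehat{p}_{k}$ ranging over the simplex of distributions with $\sigma(\widehat{p}_{k}) \subseteq \sigma(p_{k})$.

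The $\mathsf{E}$-constant case is immediate from the formula derived for Expectation: $\mathsf{E}_{k}(\widehat{p}_{k}, p_{\bar{k}}) = (a-b)\,x_{k}(\widehat{p}_{k}, p_{\bar{k}}) + b$. Since $a < b$, the affine map $t \mapsto (a-b)t + b$ is injective (indeed strictly decreasing), so $x_{k} = \frac{\mathsf{E}_{k} - b}{a - b}$ is an injective function of $\mathsf{E}_{k}$. Hence if $\mathsf{E}_{k}(\widehat{p}_{k}, p_{\bar{k}})$ is constant over all admissible $\widehat{p}_{k}$, then so is $x_{k}(\widehat{p}_{k}, p_{\bar{k}})$, and this case is done with no further work.

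For the $\mathsf{V}$-constant case I would first exploit the linearity of $x_{k}$ in its first argument to observe that the achievable set $I := \{\, x_{k}(\widehat{p}_{k}, p_{\bar{k}}) : \sigma(\widehat{p}_{k}) \subseteq \sigma(p_{k})\,\}$ is the convex hull of the finitely many pure values $\{\, x_{k}(p_{k}^{\ell}, p_{\bar{k}}) : \ell \in \sigma(p_{k})\,\}$, hence a closed interval $[x_{\min}, x_{\max}] \subseteq [0,1]$. Because $\mathsf{V}$ is one-parameter, $\mathsf{V}_{k}(\widehat{p}_{k}, p_{\bar{k}}) = \mathsf{F}(x_{k}(\widehat{p}_{k}, p_{\bar{k}}))$, so the assumption that player $k$ is $\mathsf{V}$-constant on $\sigma(p_{k})$ says precisely that $\mathsf{F}$ takes a single value $c$ on all of $I$. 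The goal then reduces to showing that $I$ is a single point, i.e. $x_{\min} = x_{\max}$.

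The main obstacle, and the only place where unimodality enters, is ruling out that a concave function with a unique maximum can be constant on a non-degenerate interval. Here I would argue by contradiction: suppose $x_{\min} < x_{\max}$, so $\mathsf{F} \equiv c$ on $[x_{\min}, x_{\max}]$, and set $M = \max_{[0,1]} \mathsf{F} \ge c$. If $M = c$, then $\mathsf{F}$ attains its maximum at every point of $[x_{\min}, x_{\max}]$, contradicting the hypothesis of a unique maximum point. If $M > c$, choose $x^{*}$ with $\mathsf{F}(x^{*}) = M$; since $\mathsf{F} = c < M$ throughout $[x_{\min}, x_{\max}]$, we have $x^{*} \notin [x_{\min}, x_{\max}]$, say $x^{*} > x_{\max}$ (the case $x^{*} < x_{\min}$ being symmetric). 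Writing $x_{\max} = \lambda x_{\min} + (1-\lambda) x^{*}$ with $\lambda \in (0,1)$ and applying concavity of $\mathsf{F}$ yields $c = \mathsf{F}(x_{\max}) \ge \lambda c + (1-\lambda) M > c$, a contradiction. Therefore $x_{\min} = x_{\max}$, the interval $I$ is a single point, and $x_{k}(\widehat{p}_{k}, p_{\bar{k}})$ is constant, as required.
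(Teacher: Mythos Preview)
Your proof is correct and follows essentially the same approach as the paper. Both handle the $\mathsf{E}$-constant case via the injectivity of $t \mapsto (a-b)t + b$, and both reduce the $\mathsf{V}$-constant case to the impossibility of a concave function with a unique maximum being constant on a non-degenerate interval; the only cosmetic difference is that the paper picks two achievable values $y<z$ and exhibits the midpoint mixed strategy to force $\mathsf{F}$ strictly above $\mathsf{F}(y)=\mathsf{F}(z)$, whereas you first observe the achievable set is an interval and then derive the contradiction directly from concavity against the global maximizer.
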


\begin{proof}
	We consider first the case that player $k\in [2]$ is {\sf E}-constant on $\sigma(p_k)$.
	Since $\mathsf{E}_k(\widehat{p}_k,p_{\bar{k}}) = \mathsf{E}_k(x_k(\widehat{p}_k,p_{\bar{k}}))$
	and the one-parameter function $E(x)$ is strictly monotone decreasing in $x$, this
	implies that $x_k(\widehat{p}_k,p_{\bar{k}})$ is constant for all $\widehat{p}_k$ with
	$\sigma (\widehat{p}_{k}) \subseteq 	\sigma (p_{k})$.
	
	Now consider the case that player $k\in [2]$ is {\sf V}-constant on $\sigma(p_k)$.
	Assume on the contrary that there exist $\widetilde{p}_k, \dbtilde{p}_k$
	with $\sigma(\widetilde{p}_k) \subseteq \sigma(p_k)$ and
	$\sigma(\dbtilde{p}_k) \subseteq \sigma(p_k)$ such that 
	$ y = x_k(\widetilde{p}_k,p_{\bar{k}}) < x_k(\dbtilde{p}_k,p_{\bar{k}}) = z.$ 
	Since $\mathsf{V}_k(\widehat{p}_k,p_{\bar{k}}) = \mathsf{F}_k(x_k(\widehat{p}_k,p_{\bar{k}}))$
	and since {\sf F} is a concave function with a unique maximum, this implies $y < x < z$ where
	$x$ is the position in which the unique maximum of {\sf F} is obtained. The properties of {\sf F}
	imply additionally that $\mathsf{F}(\widehat{x}) > \mathsf{F}(y) = \mathsf{F}(z)$ for all 
	$\widehat{x}$ with $y<\widehat{x}<z$.
	Now consider the mixed strategy $q_k = \frac{1}{2}\widetilde{p}_k + \frac{1}{2}\dbtilde{p}_k$.
	It follows that $q_k \subset \sigma(p_k)$ since $\sigma(p_k)$ is a convex set. Furthermore, 
	$y< x_k(q_k,p_{\bar{k}}) < z$ and therefore $\mathsf{F}(x_k(q_k,p_{\bar{k}})) > \mathsf{F}(y) = \mathsf{F}(z)$.
	This contradicts the fact that 
	$\mathsf{V}_k(\widehat{p}_k,p_{\bar{k}}) = \mathsf{F}_k(x_k(\widehat{p}_k,p_{\bar{k}}))$
	is constant on $\sigma(p_k)$.~\qed
\end{proof}

\noindent
{\bf Remark:} {\em Since $\mathsf{E}$ and $\mathsf{V}$ are one-parameter valuations,  
if $x_k(\widehat{p}_k,p_{\bar{k}})$ is constant on $\sigma(p_k)$, then 
$\mathsf{E}_k(\widehat{p}_k,p_{\bar{k}})$ and $\mathsf{V}_k(\widehat{p}_k,p_{\bar{k}})$
are constant on $\sigma(p_k)$.}

\begin{lemma}
\label{unimodal implies weep}
{A unimodal valuation ${\mathsf{V}}$
has the {\it Optimal-Value} property and
the {\sf WEEP}.}
\end{lemma}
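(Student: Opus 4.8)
The plan is to dispatch the two assertions separately, leaning on results already in hand. The Optimal-Value property is immediate: by Lemma~\ref{painfully trivial} the one-parameter valuation $\mathsf{V}$ is concave precisely because its defining function $\mathsf{F}$ is concave, and Proposition~\ref{concavity implies optimal value property} then yields the Optimal-Value property at once. So the substance of the lemma is the {\sf WEEP}, and that is where I would concentrate.

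For the {\sf WEEP}, fix a player $k \in [2]$, a partial profile $\mathbf{p}_{-k}$, and a $\mathsf{V}_{k}$-best response $p_{k}$. I would first reduce the required equality of expectations to a statement about the single parameter $x_{k}$. Since $\mathsf{E}$ is itself a one-parameter valuation, represented by the strictly monotone function $(a-b)x + b$, the equality $\mathsf{E}_{k}(p_{k}^{\ell},\mathbf{p}_{-k}) = \mathsf{E}_{k}(p_{k}^{\ell'},\mathbf{p}_{-k})$ for a pair $\ell,\ell' \in \sigma(p_{k})$ holds if and only if $x_{k}(p_{k}^{\ell},\mathbf{p}_{-k}) = x_{k}(p_{k}^{\ell'},\mathbf{p}_{-k})$. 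Hence it suffices to show that $x_{k}$ takes one common value across all pure strategies in $\sigma(p_{k})$.

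The key step is to feed the best-response hypothesis into the machinery already developed. Since $p_{k}$ is a $\mathsf{V}_{k}$-best response and $\mathsf{V}$ has the Optimal-Value property (just established), every mixed strategy $q_{k}$ with $\sigma(q_{k}) \subseteq \sigma(p_{k})$ satisfies $\mathsf{V}_{k}(q_{k},\mathbf{p}_{-k}) = \mathsf{V}_{k}(p_{k},\mathbf{p}_{-k})$; that is, player $k$ is $\mathsf{V}$-constant on $\sigma(p_{k})$ in the sense defined just before Lemma~\ref{x-constant}. Applying the $\mathsf{V}$-constant case of Lemma~\ref{x-constant} then gives that $x_{k}(\widehat{p}_{k},\mathbf{p}_{-k})$ is constant over all $\widehat{p}_{k}$ with $\sigma(\widehat{p}_{k}) \subseteq \sigma(p_{k})$. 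Specializing to the pure strategies $\widehat{p}_{k} = p_{k}^{\ell}$ with $\ell \in \sigma(p_{k})$ delivers exactly the common value of $x_{k}$ demanded in the previous paragraph, and hence the {\sf WEEP}.

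I expect the main obstacle to be conceptual rather than computational: correctly chaining the three ingredients, namely that concavity of $\mathsf{F}$ transfers to $\mathsf{V}$, that the Optimal-Value property forces $\mathsf{V}$-constancy on the support of a best response, and that Lemma~\ref{x-constant} converts this $\mathsf{V}$-constancy into constancy of the single parameter $x_{k}$. The genuine analytic content sits in Lemma~\ref{x-constant}, which uses that a concave $\mathsf{F}$ with a unique maximum cannot be constant on a nondegenerate subinterval; here I only need to invoke it, so what remains is the monotonicity reduction for $\mathsf{E}$ and the observation that each pure strategy $p_{k}^{\ell}$ with $\ell \in \sigma(p_{k})$ is a legitimate choice of $q_{k}$.
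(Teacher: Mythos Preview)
Your proof is correct and follows essentially the same route as the paper: concavity of $\mathsf{F}$ gives concavity of $\mathsf{V}$ (Lemma~\ref{painfully trivial}) and hence the Optimal-Value property (Proposition~\ref{concavity implies optimal value property}); then $\mathsf{V}$-constancy on the support of a best response feeds into Lemma~\ref{x-constant} to force $x_{k}$ constant, from which the {\sf WEEP} follows via strict monotonicity of the one-parameter form of $\mathsf{E}$. Your version is in fact slightly more careful than the paper's, which opens by considering a $\mathsf{V}$-equilibrium rather than a single best response as the {\sf WEEP} definition demands; the argument is unaffected, but your phrasing matches Definition~\ref{optimal value property definition} more precisely.
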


\begin{proof}
{Consider a ${\mathsf{V}}$-equilibrium $\langle p_{1}, p_{2}\rangle$.}
{Since ${\mathsf{V}}$ is concave,
${\mathsf{V}}$ has the {\it Optimal-Value} property
(Proposition~\ref{concavity implies optimal value property}).}
{Hence,
for each player $k \in [2]$,
${\mathsf{V}}_{k}(\widehat{p}_{k}, p_{\overline{k}})$
remains constant
over all strategies
$\widehat{p}_{k}$
with
$\sigma (\widehat{p}_{k})
  \subseteq
  \sigma (p_{k})$.}
According to Lemma~\ref{x-constant} and the remark following
Lemma~\ref{x-constant}, this implies that
${\mathsf{E}}_{k}(\widehat{p}_{k}, p_{\overline{k}})$
is constant on $\sigma(p_{k})$.
The {\sf WEEP} follows.    \qed 
%
\remove{Since ${\mathsf{V}}$ is one-parameter,
this means that
${\mathsf{F}}(x_{i}(\widehat{p}_{i}, p_{\overline{i}}))$
remains constant
over all strategies
$\widehat{p}_{i}$
with
$\sigma (\widehat{p}_{i})
  \subseteq
  \sigma (p_{i})$.
Since
${\mathsf{F}}$ is concave
with a unique maximum,
it follows that
$x_{i}(\widehat{p}_{i}, p_{\overline{i}})$
remains constant
over all strategies
$\widehat{p}_{i}$
with
$\sigma (\widehat{p}_{i})
  \subseteq
  \sigma (p_{i})$.
In particular,
$x_{i}(p_{i}^{s}, p_{\overline{i}})$
remains constant
over all strategies
$s \in \sigma (p_{i})$.
Since ${\mathsf{E}}$
is a one-parameter valuation,
it follows that
${\mathsf{E}}_{i}(p_{i}^{s}, p_{\overline{i}})$
remains constant
over all strategies
$s \in \sigma (p_{i})$.
The {\sf WEEP} follows.
}
\end{proof}

\remove{
\noindent
{We continue to prove:}

\begin{lemma}
\label{useful}
{For a $2$-players, $2$-values game
with a unimodal valuation ${\mathsf{V}}$,
consider a mixed profile
$(p_{1}, p_{2})$.
Then,
each player $i \in [2]$
is \cgr{${\mathsf{E}}$-happy}
on $\sigma (p_{i})$
if and only if
she is \cgr{${\mathsf{F}}$-happy}
on $\sigma (p_{i})$.}\cgr{For constant?}
\end{lemma}

\begin{proof}
{Assume first that ${\bf p}$
is an ${\mathsf{E}}$-equilibrium.}
{It follows,
by the {\sf WEEP}
(Lemma~\ref{unimodal implies weep}),
that for each player $i \in [2]$,
${\mathsf{E}}_{i}(p_{i}^{s}, p_{\overline{i}})$
remains constant
over all strategies
$s \in \sigma (p_{i})$.}
{Since ${\mathsf{E}}$
is a unimodal valuation,
this implies that
$x_{i}(p_{1}^{s}, p_{2})$
remains constant over all strategies
$s \in \sigma (p_{i})$.}
{Since $x_{i}$
is linear in each of its arguments,
it follows that
$x_{i}(\widehat{p}_{i}, p_{\overline{i}})$
remains constant
over all mixed strategies
$\widehat{p}_{i}$
with
$\sigma (\widehat{p}_{i})
  \subseteq
  \sigma (p_{i})$.}
{Since
${\mathsf{V}}$ is a unimodal valuation,
it follows that
for each player $i \in [2]$,
${\mathsf{F}}_{i}(x_{i}(\widehat{p}_{i}, p_{2}))$
remains constant
over all mixed strategies
$\widehat{p}_{i}$
with
$\sigma (\widehat{p}_{i})
  \subseteq
  \sigma (p_{i})$.}
{It follows that
each player $i \in [2]$
is ${\mathsf{F}}$-happy
on $\sigma (p_{i})$.}
{Assume now that
${\bf p}$
is an 
${\mathsf{F}}$-equilibrium.}
{Since ${\mathsf{V}}$ is unimodal,
it has the ${\sf WEEP}$;
so,
${\mathsf{E}}_{i}(p_{i}^{s}, p_{\overline{i}})$
remains constant
over all strategies
$s \in \sigma (p_{i})$.}
{It follows that
each player $i \in [2]$
is ${\mathsf{E}}$-happy
on $\sigma (p_{i})$.}
\end{proof}
}

\noindent
{As a special case,
Lemma~\ref{x-constant}
immediately implies:}

\begin{corollary}
\label{completely trivial}
{For a 2-players, 
2-values game ${\mathsf{G}}$
with a unimodal valuation ${\mathsf{V}}$,
consider  the mixed profile $\langle p_{1}, p_{2}\rangle$.
For each player $k\in [2]$, if $|\sigma(p_k)| = n$,
then $k$ is {\sf E}-happy with  $\langle p_{1}, p_{2}\rangle$
if and only if
$k$ is {\sf V}-happy with  $\langle p_{1}, p_{2}\rangle$.
}
\end{corollary}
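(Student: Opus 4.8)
The plan is to observe that, because the support is full, both notions of ``happiness'' collapse onto the single condition that $x_k(\widehat p_k,p_{\overline{k}})$ is constant over all $\widehat p_k$ with $\sigma(\widehat p_k)\subseteq\sigma(p_k)$. The key bridge is Lemma~\ref{x-constant} together with the remark following it: being $\mathsf{E}$-constant on $\sigma(p_k)$, having $x_k$ constant on $\sigma(p_k)$, and being $\mathsf{V}$-constant on $\sigma(p_k)$ are mutually equivalent. So it suffices to relate ``happy'' to ``constant'' separately for $\mathsf{E}$ and for $\mathsf{V}$, and it is precisely at that last relation that the hypothesis $|\sigma(p_k)|=n$ will be used.

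For the forward implication I would suppose that $k$ is $\mathsf{E}$-happy, i.e.\ that $p_k$ is an $\mathsf{E}$-best response to $p_{\overline{k}}$. Since $\mathsf{E}$ is a one-parameter valuation with the linear (hence concave) function $\mathsf{F}(x)=(a-b)x+b$, Proposition~\ref{concavity implies optimal value property} gives it the Optimal-Value property; applied to the best response $p_k$, this says exactly that $k$ is $\mathsf{E}$-constant on $\sigma(p_k)$. Lemma~\ref{x-constant} then makes $x_k(\widehat p_k,p_{\overline{k}})$ constant over all $\widehat p_k$ with $\sigma(\widehat p_k)\subseteq\sigma(p_k)$, and the remark following it promotes this to $k$ being $\mathsf{V}$-constant on $\sigma(p_k)$. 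Here I would invoke $|\sigma(p_k)|=n$: since $\sigma(p_k)=S_k$, \emph{every} mixed strategy $q_k$ of player $k$ satisfies $\sigma(q_k)\subseteq\sigma(p_k)$, so $\mathsf{V}$-constancy on $\sigma(p_k)$ yields $\mathsf{V}_k(q_k,p_{\overline{k}})=\mathsf{V}_k(p_k,p_{\overline{k}})$ for all $q_k$. Thus $p_k$ trivially minimizes $\mathsf{V}_k(\cdot,p_{\overline{k}})$, i.e.\ $k$ is $\mathsf{V}$-happy.

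The reverse implication I would handle symmetrically: if $k$ is $\mathsf{V}$-happy, then since $\mathsf{V}$ is unimodal and hence concave (Lemma~\ref{painfully trivial}), the Optimal-Value property (Proposition~\ref{concavity implies optimal value property}, or Lemma~\ref{unimodal implies weep}) shows that $k$ is $\mathsf{V}$-constant on $\sigma(p_k)$; Lemma~\ref{x-constant} and its remark then deliver $\mathsf{E}$-constancy on $\sigma(p_k)$, and the full-support hypothesis upgrades this to $k$ being $\mathsf{E}$-happy exactly as above.

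I do not expect a genuine obstacle here, since all the analytic content has already been isolated in Lemma~\ref{x-constant} (which is why the statement is phrased as a corollary). The one point that requires care, in each direction, is the final step: recognizing that when $|\sigma(p_k)|=n$ the best-response condition degenerates, because there is no strategy outside the support to which the player could profitably deviate, so ``constant on the support'' coincides with ``globally optimal.'' Everything else is routine bookkeeping through the equivalence $\mathsf{E}$-constant $\Leftrightarrow$ $x_k$-constant $\Leftrightarrow$ $\mathsf{V}$-constant.
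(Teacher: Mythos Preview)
Your proposal is correct and matches the paper's intended argument: the paper states only that the corollary follows immediately from Lemma~\ref{x-constant}, and you have faithfully unpacked that one-line justification using the equivalence $\mathsf{E}$-constant $\Leftrightarrow$ $x_k$-constant $\Leftrightarrow$ $\mathsf{V}$-constant together with the Optimal-Value property and the full-support hypothesis. There is nothing to add.
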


\noindent
{Corollary~\ref{completely trivial}
immediately implies:}

\begin{corollary}
	\label{cor:2strategies}
{A 2-players, 2-values, 2-strategies game
with a unimodal valuation ${\mathsf{V}}$
has a ${\mathsf{V}}$-equilibrium.}
\end{corollary}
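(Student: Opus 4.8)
The plan is to begin from an expectation equilibrium and upgrade it to an $\mathsf{F}$-equilibrium via a two-case split on the support sizes. First I would invoke Nash's theorem: since $\mathsf{E}$ is the ordinary multilinear expected-cost valuation, the game $\mathsf{G}$ possesses an $\mathsf{E}$-equilibrium $\langle p_{1}, p_{2}\rangle$. Because the game has exactly $2$ strategies, for each player $k\in[2]$ we have $|\sigma(p_{k})|\in\{1,2\}$, and here $n=2$, so ``full support'' means $|\sigma(p_{k})|=2=n$. This is exactly the hypothesis needed to apply Corollary~\ref{completely trivial}.

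In the first case, suppose $|\sigma(p_{1})|=|\sigma(p_{2})|=2$. Then $|\sigma(p_{k})|=n$ for both players, so Corollary~\ref{completely trivial} tells us that for each $k\in[2]$, player $k$ is $\mathsf{E}$-happy with $\langle p_{1},p_{2}\rangle$ if and only if she is $\mathsf{V}$-happy with it. Since $\langle p_{1},p_{2}\rangle$ is an $\mathsf{E}$-equilibrium, both players are $\mathsf{E}$-happy, hence both are $\mathsf{V}$-happy, so $\langle p_{1},p_{2}\rangle$ is itself the desired $\mathsf{V}$-equilibrium.

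In the remaining case at least one player has a singleton support, so Lemma~\ref{lemma A} yields a pure equilibrium $(i,j)$, and it remains to check that a pure $\mathsf{E}$-equilibrium is automatically a $\mathsf{V}$-equilibrium. Fix player $1$ (player $2$ being symmetric) facing the pure strategy $p_{2}^{j}$; for any mixed deviation $q_{1}$ we have $\mathsf{V}_{1}(q_{1},p_{2}^{j})=\mathsf{F}(x_{1}(q_{1},p_{2}^{j}))$ with $x_{1}(q_{1},p_{2}^{j})=\sum_{i'\,:\,\mu_{1}(i',j)=a} q_{1}(i')\in[0,1]$. If $\mu_{1}(i,j)=a$, then player $1$ already attains value $\mathsf{F}(1)=a$, and concavity of $\mathsf{F}$ (lying above the chord through its endpoints) gives $\mathsf{F}(x)\ge (1-x)b+xa\ge a$ for all $x\in[0,1]$, so no deviation helps. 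If instead $\mu_{1}(i,j)=b$, then $\mathsf{E}$-optimality of the pure strategy $i$ forces $\mu_{1}(i',j)=b$ for every $i'$, whence $x_{1}(q_{1},p_{2}^{j})=0$ and $\mathsf{V}_{1}(q_{1},p_{2}^{j})=\mathsf{F}(0)=b$ for every $q_{1}$; again player $1$ cannot improve.

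I expect the main obstacle to be exactly this last verification, not the full-support case (which is immediate from Corollary~\ref{completely trivial}). The subtlety is that the $\mathsf{E}$-equilibrium condition, as used through Lemma~\ref{lemma A}, only excludes \emph{pure} profitable deviations, so to rule out \emph{mixed} deviations one must combine the chord bound $\mathsf{F}(x)\ge a$ coming from concavity of $\mathsf{F}$ with the $2$-values structure (a pure best response of cost $b$ against a pure opponent forces the entire relevant line to be $b$, pinning $x_{1}$ to $0$).
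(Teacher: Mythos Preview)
Your proof is correct. The paper's own argument is the one-line assertion that Corollary~\ref{completely trivial} ``immediately implies'' the result; your full-support case is exactly that appeal, and your explicit treatment of the singleton-support case via Lemma~\ref{lemma A} together with the concavity bound $\mathsf{F}(x)\ge (1-x)b+xa\ge a$ simply spells out the detail the paper leaves tacit.
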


\noindent
{We now prove a necessary condition
for the existence of an ${\mathsf{F}}$-equilibrium,
which we shall later use repeatedly
in the proofs of Theorems~\ref{hawaii grande} and~\ref{grande resort}:}

\begin{lemma}
\label{frequent lemma}
{Consider a 2-players, 2-values game ${\mathsf{G}}$
with a unimodal valuation ${\mathsf{F}}$,
and a mixed profile
$\langle p_{1}, p_{2}\rangle$
with the following three properties:}
\begin{enumerate}

\item[{\sf (1)}]
{There is a strategy
$\widehat{i} \in \sigma (p_{1})$
such that
${\mathsf{\mu}}_{1}(\widehat{i}, j)
  =
  b$
for all $j \in \sigma (p_{2})$.}

\item[{\sf (2)}]
{For each strategy
$j \in \sigma (p_{2})$,
there is a strategy
$i \in S_{1}$
with ${\mathsf{\mu}}_{1}(i, j) = a$.}

\item[{\sf (3)}]
{It holds that
${\mathsf{F}}(x) < b$
for all $x \geq \frac{\textstyle 1}
                               {\textstyle |\sigma (p_{2})|}$.}

\end{enumerate}
\noindent
{Then,
$p_{1}$ is not an ${\mathsf{F}}$-best response to $p_{2}$.}
\end{lemma}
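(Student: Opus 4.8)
The plan is to argue by contradiction: suppose that $p_1$ \emph{is} an $\mathsf{F}$-best response to $p_2$, and derive a contradiction by first pinning down the exact value $\mathsf{V}_1(p_1,p_2)$ and then exhibiting a strictly cheaper deviation for player $1$.

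The first step uses Condition~{\sf (1)} together with the \emph{Optimal-Value} property. Since the strategy $\widehat{i}$ from Condition~{\sf (1)} satisfies $\mu_1(\widehat{i},j)=b$ for every $j\in\sigma(p_2)$, the definition of $x_1$ gives $x_1(p_1^{\widehat{i}},p_2)=0$, and hence $\mathsf{V}_1(p_1^{\widehat{i}},p_2)=\mathsf{F}(0)=b$ by the one-parameter form of $\mathsf{V}$ together with $\mathsf{F}(0)=b$ from Definition~\ref{onedimensional valuation definition}. Because $\mathsf{V}$ is unimodal it has the \emph{Optimal-Value} property (Lemma~\ref{unimodal implies weep}): if $p_1$ is a best response to $p_2$, then every mixed strategy supported in $\sigma(p_1)$ has the same valuation as $p_1$. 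Applying this to the pure strategy $p_1^{\widehat{i}}$, which is legitimate since $\widehat{i}\in\sigma(p_1)$, yields $\mathsf{V}_1(p_1,p_2)=\mathsf{V}_1(p_1^{\widehat{i}},p_2)=b$.

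The second step builds the improving deviation from Conditions~{\sf (2)} and~{\sf (3)}. Set $m:=|\sigma(p_2)|$ and choose $j^\ast\in\sigma(p_2)$ maximizing $p_2(j)$; since the $m$ positive probabilities on $\sigma(p_2)$ sum to $1$, we have $p_2(j^\ast)\ge \frac{1}{m}$. By Condition~{\sf (2)} there is a strategy $i^\ast\in S_1$ with $\mu_1(i^\ast,j^\ast)=a$, so that
\[
  x_1(p_1^{i^\ast},p_2)
  = \sum_{s_2\in\sigma(p_2):\,\mu_1(i^\ast,s_2)=a} p_2(s_2)
  \ \ge\ p_2(j^\ast)\ \ge\ \frac{1}{m}.
\]
Condition~{\sf (3)} then forces $\mathsf{V}_1(p_1^{i^\ast},p_2)=\mathsf{F}\bigl(x_1(p_1^{i^\ast},p_2)\bigr)<b$, because $x_1(p_1^{i^\ast},p_2)\ge \frac{1}{|\sigma(p_2)|}$.

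Combining the two steps gives the contradiction: optimality of the best response $p_1$ would require $\mathsf{V}_1(p_1,p_2)\le \mathsf{V}_1(p_1^{i^\ast},p_2)<b$, which clashes with $\mathsf{V}_1(p_1,p_2)=b$ established above. I expect the only delicate point to be the first step: one cannot read off $\mathsf{V}_1(p_1,p_2)$ from $x_1(p_1,p_2)$ directly, since that quantity is an average mixing the zero contribution of $\widehat{i}$ with other terms and $\mathsf{F}$ is not monotone, so the value $b$ must instead be extracted through the \emph{Optimal-Value} property applied to the single pure strategy $\widehat{i}$. Everything else is a direct averaging estimate.
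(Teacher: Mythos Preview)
Your proof is correct and follows the same overall scheme as the paper: assume $p_1$ is a best response, establish $\mathsf{V}_1(p_1,p_2)=b$, then exhibit a pure deviation with value strictly less than $b$ via the maximum-probability column.

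The only difference lies in how the first step is carried out. You invoke the \emph{Optimal-Value} property directly: since $\widehat{i}\in\sigma(p_1)$ and $p_1$ is optimal, $\mathsf{V}_1(p_1,p_2)=\mathsf{V}_1(p_1^{\widehat{i}},p_2)=\mathsf{F}(0)=b$. The paper instead routes through the {\sf WEEP} and Lemma~\ref{domination and weep}: the existence of the all-$b$ row $\widehat{i}$ in $\sigma(p_1)$ forces, by non-domination, that \emph{every} row $i\in\sigma(p_1)$ is all-$b$ on $\sigma(p_2)$, whence $x_1(p_1,p_2)=0$ can be computed directly. Your route is shorter; the paper's route yields the extra structural information that $i(j)\notin\sigma(p_1)$ for every $j\in\sigma(p_2)$, though this is not actually needed to reach the contradiction. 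Your closing remark that ``one cannot read off $\mathsf{V}_1(p_1,p_2)$ from $x_1(p_1,p_2)$ directly'' is thus slightly off: one can, but only after the domination step forces $x_1(p_1,p_2)=0$.
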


\noindent
{Note that Property {\sf (2)}
is always fulfilled
when ${\mathsf{G}}$
is a normal game
(Definition~\ref{normal game})
due to Condition {\sf (2)}.}

\begin{proof}
{Assume,
by way of contradiction,
that $p_{1}$ is an ${\mathsf{F}}$-best response to $p_{2}$.
By Lemma~\ref{unimodal implies weep},
${\mathsf{V}}$
has the {\sf WEEP}.
Hence,
by Property {\sf (1)},
Lemma~\ref{domination and weep}
implies that
${\mathsf{\mu}}_{1}(i, j) = 
  {\mathsf{\mu}}_{1}(\widehat{i}, j)
  =
  b$
for all $(i, j) \in \sigma (p_{1}) \times \sigma (p_{2})$.
It follows that
$x_1(p_{1}, p_{2})
  =
  0$.
Hence,
${\mathsf{F}}(x_1(p_{1}, p_{2})) = 
   {\mathsf{F}}(0) = b$.}

{By Property {\sf (2)},
for each strategy $j \in \sigma (p_{2})$,
there is a strategy $i \in S_{1}$
with ${\mathsf{\mu}}_{1}(i, j) = a$;
call it $i(j)$.}
{Thus,
for each strategy $j \in \sigma (p_{2})$,
$i(j) \not\in \sigma (p_{1})$.}
{Set
$y := \max \{ p_{2}(j) \mid j \in \sigma (p_{2}) \}$.
Then,
$y \geq \frac{\textstyle 1}
                     {\textstyle |\sigma (p_{2})|}$.}
{Choose $\widetilde{j} \in \sigma (p_{2})$
with $p_{2}(\widetilde{j}) = y$,
and
set
$\widetilde{i} := i(\widetilde{j})$.}
{Then,
${\mathsf{V}}_{1}\left( p_{1}^{\widetilde{i}}, p_{2}
                              \right)
  =                            
  {\mathsf{F}}\left( x(p_{1}^{\widetilde{i}}, p_{2})
                       \right)
  =
  {\mathsf{F}}(x)$,
for some $x$
with
$x\ge y \geq \frac{\textstyle 1}
                     {\textstyle |\sigma (p_{2})|}$.}
{It follows,
by Property {\sf (3)},
that
${\mathsf{F}}(x) < b$.}                     
{So player $1$ improves
by switching to strategy $\widetilde{i}$.
A contradiction.}  \qed                                          
\end{proof}

\noindent
{We shall sometimes use
Lemma~\ref{frequent lemma}
with the roles of the two players interchanged.}

\subsection{The case {$x_0(\mathsf{F})=0$}}
\label{b-a}
\noindent
{We observe:}

\begin{lemma}
	\label{significant}
	{Assume that 
		the unimodal valuation
		${\mathsf{F}}$ is strictly monotone decreasing,
		and consider a mixed profile
		${\bf p} = \langle p_{1}, p_{2}
		\rangle$.
		Then,
		$p_{1}$
		is an ${\mathsf{E}}$-best-response
		to $p_{2}$
		if and only if
		it is an
		${\mathsf{F}}$-best-response
		to $p_{2}$.}                 
\end{lemma}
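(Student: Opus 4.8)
The plan is to prove both directions by reducing the comparison of best-responses to the behavior of the one-parameter function on the relevant values of $x_1$. The crucial observation is that when ${\mathsf{F}}$ is strictly monotone decreasing on $[0,1]$, minimizing ${\mathsf{F}}(x_1(p_1,p_2))$ over choices of $p_1$ is \emph{equivalent} to maximizing $x_1(p_1,p_2)$, because a strictly decreasing function attains its minimum exactly where its argument is maximized. On the other hand, from the earlier computation ${\mathsf{E}}_1(p_1,p_2) = (a-b)\cdot x_1(p_1,p_2) + b$, and since $a < b$ the coefficient $(a-b)$ is negative, so minimizing ${\mathsf{E}}_1$ is \emph{also} equivalent to maximizing $x_1(p_1,p_2)$. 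Thus both notions of best-response pick out precisely the strategies $p_1$ that maximize the single quantity $x_1(\cdot,p_2)$, which immediately yields that the two sets of best-responses coincide.

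First I would fix $p_2$ and let $X := \{ x_1(q_1,p_2) : q_1 \text{ a mixed strategy for player } 1 \}$; by linearity of $x_1$ in its first argument this is an interval, and let $x^{\ast} := \max X$. I would then argue that $p_1$ is an ${\mathsf{F}}$-best-response to $p_2$ if and only if $x_1(p_1,p_2) = x^{\ast}$: since ${\mathsf{F}}$ is strictly decreasing, ${\mathsf{F}}(x_1(p_1,p_2)) \le {\mathsf{F}}(x_1(q_1,p_2))$ holds for every $q_1$ exactly when $x_1(p_1,p_2) \ge x_1(q_1,p_2)$ for every $q_1$, i.e. exactly when $x_1(p_1,p_2) = x^{\ast}$. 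The identical argument, using the strictly decreasing linear map $x \mapsto (a-b)x + b$, shows that $p_1$ is an ${\mathsf{E}}$-best-response to $p_2$ if and only if $x_1(p_1,p_2) = x^{\ast}$. Chaining these two equivalences gives the stated biconditional.

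The main subtlety, rather than an obstacle, is simply making the role of strict monotonicity explicit: it is essential that ${\mathsf{F}}$ is \emph{strictly} decreasing (not merely non-increasing), since otherwise distinct values of $x_1$ could yield the same ${\mathsf{F}}$-value and the two best-response sets could differ. The hypothesis $x_0({\mathsf{F}}) = 0$ is exactly what guarantees this: if the unique maximum of the concave function ${\mathsf{F}}$ is attained at the left endpoint $0$, then ${\mathsf{F}}$ is strictly decreasing on $[0,1]$, which is the property actually used. Everything else is routine, and no appeal beyond the linearity of $x_1$ and the explicit formula for ${\mathsf{E}}_1$ is needed.
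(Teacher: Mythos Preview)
Your proof is correct and follows essentially the same approach as the paper: both arguments reduce the equivalence to the observation that, since ${\mathsf{F}}$ and ${\mathsf{E}}$ are each strictly decreasing functions of the single parameter $x_1(\cdot,p_2)$, a strategy $p_1$ minimizes one if and only if it maximizes $x_1(p_1,p_2)$, which happens if and only if it minimizes the other. The paper phrases this as a direct chain of equivalences on strict inequalities rather than introducing $x^\ast = \max X$, but the content is identical.
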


\begin{proof}
	{Consider a mixed strategy $\widehat{p}_{1}$
		for player $1$.
		Since both ${\mathsf{E}}$ and ${\mathsf{F}}$
		are monotone decreasing in $x$,
		we have that}
	{
		{
			\begin{eqnarray*}
				{\mathsf{F}}_{1}(x_{1}({\widehat{p}}_{1}, p_{2}))\ \
				<\ \
				{\mathsf{F}}_{1}(x_{1}(p_{1}, p_{2}))
				& \Longleftrightarrow &  x_{1}({\widehat{p}}_{1}, p_{2})\ \
				<\ \
				x_{1}(p_{1}, p_{2})                                                                                                                           \\
				& \Longleftrightarrow &  {\mathsf{E}}_{1}(x_{1}({\widehat{p}}_{1}, p_{2}))\ \
				<\ \
				{\mathsf{E}}_{1}(x_{1}(p_{1}, p_{2}))\, .                                       
			\end{eqnarray*}
		}
	}

\noindent Hence,
		$p_{1}$
		is an ${\mathsf{E}}$-best-response
		to $p_{2}$
		if and only if
		it is an
		${\mathsf{F}}$-best-response
		to $p_{2}$.\qed
\end{proof}

\noindent
{Since ${\mathsf{E}}$-equilibria
	are invariant to translating and scaling the cost values $a$ and $b$,
	Lemma~\ref{significant}
	implies that
	for a strictly monotone decreasing ${\mathsf{F}}$,
	computing an ${\mathsf{F}}$-equilibrium
	for a $2$-values games
	is as hard as computing an ${\mathsf{E}}$-equilibrium
	for a {\it win-lose} game,
	where $a=0$ and $b=1$.
	The latter is
	${\mathcal{PPAD}}$-hard~\cite{AKV05}.
	Hence,
	computing an ${\mathsf{F}}$-equilibrium
	for a strictly monotone decreasing ${\mathsf{F}}$
	is ${\mathcal{PPAD}}$-hard.
	{In the case $x_0(\mathsf{F})=0$, ${\mathsf{F}}$
		is strictly monotone decreasing
		for $x \in [0, 1]$.} 
	{Hence,
		we obtain:} 
	
	\begin{theorem}
		\label{2 players 2 values ppad hard}
		{For a unimodal valuation {\sf F}, if $x_0(\mathsf{F}) = 0$, then an {\sf F}-equilibrium exists for all 
			2-players, 2-values games
			and its computation is ${\mathcal{PPAD}}$-hard.}
	\end{theorem}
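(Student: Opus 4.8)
The plan is to show that, under the hypothesis $x_0(\mathsf{F}) = 0$, the notion of an ${\mathsf{F}}$-equilibrium collapses onto that of an ${\mathsf{E}}$-equilibrium, and then to invoke the two standard facts about ${\mathsf{E}}$-equilibria: that they always exist, and that computing them on win-lose games is ${\mathcal{PPAD}}$-hard. First I would record that $x_0(\mathsf{F}) = 0$, together with unimodality (concavity with a unique maximum, here attained at the left endpoint), forces ${\mathsf{F}}$ to be strictly monotone decreasing on all of $[0,1]$; indeed, concavity with $F(0)$ the unique maximum rules out any flat sub-interval, since a chord argument pushes interior values strictly below $F(0)$. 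This is precisely the hypothesis of Lemma~\ref{significant}. Applying that lemma to player $1$, and its symmetric counterpart to player $2$, shows that for any fixed partial profile a mixed strategy is an ${\mathsf{F}}$-best response if and only if it is an ${\mathsf{E}}$-best response; consequently, for every $2$-players, $2$-values game the set of ${\mathsf{F}}$-equilibria coincides exactly with the set of ${\mathsf{E}}$-equilibria.

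For existence I would then simply observe that an ${\mathsf{E}}$-equilibrium is an ordinary mixed Nash equilibrium of the underlying finite bimatrix (minimization) game, which exists by Nash's theorem. By the equilibrium correspondence just established, any such profile is also an ${\mathsf{F}}$-equilibrium, so an ${\mathsf{F}}$-equilibrium always exists.

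For the hardness I would give a reduction from the problem of computing an ${\mathsf{E}}$-equilibrium of an arbitrary win-lose bimatrix game, which is ${\mathcal{PPAD}}$-hard by Abbott et al.~\cite{AKV05}. Given a win-lose game with cost values $0$ and $1$, I would apply the affine transformation $v \mapsto a + (b-a)\,v$ to every entry, producing a $2$-values game with values $a$ and $b$ (the values pinned down by ${\mathsf{F}}(1)=a$ and ${\mathsf{F}}(0)=b$). Since ${\mathsf{E}}$-equilibria are invariant under translating and scaling the cost values, the ${\mathsf{E}}$-equilibria of the transformed game are exactly those of the original win-lose game; combined with the equilibrium correspondence above, the ${\mathsf{F}}$-equilibria of the transformed game are precisely the sought ${\mathsf{E}}$-equilibria of the win-lose game. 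As the transformation is polynomial-time computable, computing an ${\mathsf{F}}$-equilibrium is at least as hard, hence ${\mathcal{PPAD}}$-hard.

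The argument is essentially a packaging of the preceding discussion, so the only step demanding care is the equilibrium-correspondence claim: Lemma~\ref{significant} is phrased for a single player against a fixed opponent profile, so I must be explicit that promoting it to a statement about equilibria requires applying it simultaneously to both players, and that the equivalence holds at the level of best responses rather than merely equilibrium values. The affine-invariance of ${\mathsf{E}}$-equilibria should likewise be flagged (or cited) as the ingredient that actually transports the ${\mathcal{PPAD}}$-hardness across the change of cost values from $\{0,1\}$ to $\{a,b\}$.
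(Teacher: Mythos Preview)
Your proposal is correct and follows essentially the same route as the paper: deduce from $x_0(\mathsf{F})=0$ and unimodality that $\mathsf{F}$ is strictly monotone decreasing, invoke Lemma~\ref{significant} (applied to both players) to identify $\mathsf{F}$-equilibria with $\mathsf{E}$-equilibria, then use Nash's theorem for existence and affine invariance together with the Abbott--Kane--Valiant result for $\mathcal{PPAD}$-hardness. Your write-up is in fact more explicit than the paper's own treatment, which leaves the existence part and the application to both players implicit.
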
 
	
	\noindent Recall that for 
		${\mathsf{\gamma}} \cdot (b-a) \leq 1$,
		${\mathsf{EVar}}^{{\mathsf{\gamma}}}$ 
		is strictly monotone decreasing
		for $x \in [0, 1]$ and ${\mathsf{EVar}}^{{\mathsf{\gamma}}}(0)=b$.
		Hence, $x_0(\mathsf{\mathsf{EVar}}^{{\mathsf{\gamma}}})=0$.
		Therefore:
	
	\begin{corollary}
		\label{EVar-PPAD}
		Computing an ${\mathsf{EVar}}^{{\mathsf{\gamma}}}$-equilibrium,
		with ${\mathsf{\gamma}} \cdot (b-a) \leq 1$,
		is ${\mathcal{PPAD}}$-hard
		for 2-players, 2-values games.
	\end{corollary}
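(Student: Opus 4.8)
The plan is to verify that the valuation ${\mathsf{EVar}}^{{\mathsf{\gamma}}}$, under the stated restriction $\gamma\cdot(b-a)\le 1$, fits the hypotheses of Theorem~\ref{2 players 2 values ppad hard}, and then to invoke that theorem directly. Concretely, I would first recall that ${\mathsf{EVar}}^{{\mathsf{\gamma}}}$ is the one-parameter valuation represented by the concave function ${\mathsf{F}}_1^{{\mathsf{\gamma}}}(x)=a\cdot x + b\cdot(1-x) + \gamma\cdot(b-a)^2\cdot x\cdot(1-x)$, so by Lemma~\ref{painfully trivial} it is a concave one-parameter valuation; together with a unique maximum on $[0,1]$ this makes it a unimodal valuation in the sense of Definition~\ref{unimodal valuation definition}.

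The first substantive point is to pin down $x_0({\mathsf{F}}_1^{{\mathsf{\gamma}}})$ under the hypothesis. From the derivative already recorded, $({\mathsf{F}}_1^{{\mathsf{\gamma}}}(x))^{\prime}=(a-b)+\gamma\cdot(1-2x)\cdot(b-a)^2$, the stationary point is $x_0=\frac{\textstyle 1}{\textstyle 2}\left(1-\frac{\textstyle 1}{\textstyle \gamma(b-a)}\right)$. When $\gamma\cdot(b-a)\le 1$ we have $\frac{1}{\gamma(b-a)}\ge 1$, so this candidate satisfies $x_0\le 0$ and thus lies outside the open interval $(0,1)$. Since $({\mathsf{F}}_1^{{\mathsf{\gamma}}}(x))^{\prime\prime}=-2\gamma(b-a)^2<0$, the function is strictly concave, hence with no interior stationary point in $(0,1)$ it is strictly monotone decreasing on $[0,1]$. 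Its unique maximum over $[0,1]$ is therefore attained at the left endpoint, giving $x_0({\mathsf{F}}_1^{{\mathsf{\gamma}}})=0$; note also ${\mathsf{F}}_1^{{\mathsf{\gamma}}}(0)=b$ by the one-parameter normalization of Definition~\ref{onedimensional valuation definition}.

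Having established that ${\mathsf{EVar}}^{{\mathsf{\gamma}}}$ with $\gamma\cdot(b-a)\le 1$ is a unimodal valuation with $x_0=0$, the conclusion is immediate from Theorem~\ref{2 players 2 values ppad hard}: computing an ${\mathsf{F}}$-equilibrium for such an {\sf F} is ${\mathcal{PPAD}}$-hard for 2-players, 2-values games. I expect essentially no obstacle here beyond carefully checking the sign of the stationary point $x_0$ under the hypothesis $\gamma\cdot(b-a)\le 1$; the genuinely hard content—reducing win-lose ${\mathsf{E}}$-equilibrium computation to ${\mathsf{F}}$-equilibrium computation via Lemma~\ref{significant} and appealing to the ${\mathcal{PPAD}}$-hardness of~\cite{AKV05}—is already discharged in the proof of Theorem~\ref{2 players 2 values ppad hard}.
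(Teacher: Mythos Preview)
Your proposal is correct and follows essentially the same approach as the paper: verify that for $\gamma\cdot(b-a)\le 1$ the function ${\mathsf{F}}_1^{\gamma}$ is strictly monotone decreasing on $[0,1]$ so that $x_0({\mathsf{F}}_1^{\gamma})=0$, and then invoke Theorem~\ref{2 players 2 values ppad hard}. The paper simply cites the monotonicity observation already made in the examples section rather than rederiving the stationary point, but the argument is the same.
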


\section{{Inexistence of Equilibria}}
\label{sec:non-existence of equilibria}

\remove{Revise: In this section we introduce the $2$-values, $(\n+1)$-strategies bimatrix normal game $\bm{{\mathsf{C}}}_{\n}$, with $\n\ge 2$, and show that 
	it has no $\mathsf{F}$-equilibrium when ${\mathsf{F}}\left( \frac{\textstyle 1}
	{\textstyle n} \right) > b$
	and ${\mathsf{F}}\left( \frac{\textstyle 1}
	{\textstyle n-1} \right) < b$.}


{We begin with a property about uniqueness of equilibria which will have
	the essential role in the proof of the non-existence result. 
	For this purpose, we introduce the 
	$2$-values, $\n$-strategies bimatrix game ${\mathsf{D}}_{\n}$,
	with $\n \geq 2$:}
{
	\begin{eqnarray*}
		{\mathsf{D}}_{\n}
		& = &  \left( \begin{array}{llllll}
			(a, b)  & (b,a)   & (b,b)    & \ldots  & (b,b)   & (b,b)   \\
			(b,b)   & (a,b)   & (b,a)    & \ldots  & (b,b)   & (b,b)   \\
			\vdots & \vdots & \vdots & \ddots & \vdots & \vdots \\
			(b,b)   & (b,b)    & (b,b)   & \ldots  & (a,b)    & (b,a)   \\
			(b,a)   & (b,b)    & (b,b)   & \ldots  & (b,b)    & (a,b)   \\
		\end{array}
		\right)\, .           
	\end{eqnarray*}
}
\noindent Thus,
	${\mathsf{D}}_{\n}
	=
	(\alpha_{ij}, \beta_{ij})_{0 \leq i, j \leq \n-1}$
	is a {\it Toeplitz} bimatrix,  
	with:\medskip
	
		\begin{math}
		\alpha_{ij}
			 =  \left\{ \begin{array}{ll}
				a\, , & \mbox{if $i=j$} \\
				b\, , & \mbox{otherwise} \\
			\end{array}
			\right.\, \text{ and }
			\end{math}           
			 \begin{math}
	  	\beta_{ij}
	 	 =  \left\{ \begin{array}{ll}
	 		a\, , & \mbox{if $j = (i+1) \mod \n$} \\
	 		b\, , & \mbox{otherwise} \\
	 	\end{array}
	 	\right.\, . 
	 	\end{math} \medskip         

\noindent Clearly,
	${\mathsf{D}}_{\n}$ is a normal game.
	Thus:
\begin{itemize}

	\item
	{For each strategy
		$j \in \{ 0, 1, \ldots, \n-1 \}$,
		there is a strategy
		$i \in \mbox{\{0, 1, \ldots, \n-1\}}$
		with
		${\mathsf{\alpha}}_{ij} = a$.
		This implies that
		any mixed profile
		$\langle p_{1}, p_{2}\rangle$
		fulfills Property {\sf (2)}
		from Lemma~\ref{frequent lemma}.}

\end{itemize}
\noindent
{Note that for each player,
	there is exactly one $a$ in every row
	and every column.
	This property is stronger than
	Conditions {\sf (2)} and {\sf (3)} 
	together
	in the definition of a normal game.
	We show:}

\remove{
\begin{theorem}
	\label{hawaii grande}
	{Consider a unimodal valuation
		${\mathsf{F}}$ with
		${\mathsf{F}}(x) < b$
		for
		$x \geq \frac{\textstyle 1}
		{\textstyle n-1}$.
		Then,
		${\mathsf{D}}_{n}$
		has a unique ${\mathsf{F}}$-equilibrium
		$(p_{1}, p_{2})$,
		given by
		$p_{1}(j) = p_{2}(j) = \frac{\textstyle 1}
		{\textstyle n}$,
		for
		$0 \leq j \leq n-1$.}                                             
\end{theorem}
}

\begin{theorem}
	\label{hawaii grande}
	Consider a unimodal valuation ${\mathsf{F}}$. 
	${\mathsf{D}}_{\n}$ 
	has the ${\mathsf{F}}$-equilibrium
	$\langle p_{1}, p_{2}\rangle$,
	given by
	$p_{1}(j) = p_{2}(j) = \frac{\textstyle 1}
	{\textstyle \n}$,
	for
	$0 \leq j \leq \n-1$.
	${\mathsf{D}}_{\n}$ 
	has no other ${\mathsf{F}}$-equilibrium if and only if
	one of the following {three} conditions holds:
	\begin{enumerate}[$(i)$]
		\item {$\n \le 3$},
		\item $\n$ is even, {$\n \ge 4$}, $\mathsf{F}\left(\frac{2}{\n}\right)\neq b$ and $\mathsf{F}(x)<b$ for $x\ge \frac{2}{\n-2}$,
		\item $\n$ is odd, {$\n \ge 5$} and  $\mathsf{F}(x)<b$ for $x\ge \frac{2}{\n-1}$. 
	\end{enumerate}
\end{theorem}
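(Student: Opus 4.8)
The plan is to verify that the uniform profile is an $\mathsf{F}$-equilibrium, and then to classify \emph{all} equilibria of $\mathsf{D}_{\n}$ via the {\sf WEEP}, matching the resulting constraints to the three conditions. Write $I=\sigma(p_1)$, $J=\sigma(p_2)$, and $J-1:=\{(j-1)\bmod\n:j\in J\}$, and let $x^\ast$ denote the largest $x\in[0,1]$ with $\mathsf{F}(x)\ge b$; thus ``$\mathsf{F}(x)<b$ for $x\ge\theta$'' is exactly $x^\ast<\theta$, and $\mathsf{F}(\tfrac1k)=b$ is exactly $x^\ast=\tfrac1k$. Since the row player's unique $a$ in row $i$ lies in column $i$, for the uniform profile $p_1(i)=p_2(j)=1/\n$ we have $x_1(p_1^i,p_2)=p_2(i)=1/\n$ for every $i$, and likewise $x_2(p_1,p_2^j)=p_1((j-1)\bmod\n)=1/\n$ for every $j$; by linearity of $x_k$ every mixed deviation keeps $x_k=1/\n$, so both players are indifferent and the profile is an $\mathsf{F}$-equilibrium.

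For the converse, let $\langle p_1,p_2\rangle$ be \emph{any} $\mathsf{F}$-equilibrium; by Lemma~\ref{no equilibrium for normal game}, $|I|,|J|\ge2$. The {\sf WEEP} (Lemma~\ref{unimodal implies weep}) together with Lemma~\ref{x-constant} forces $x_1(p_1^i,p_2)=p_2(i)$ to equal a constant $\xi_1$ on $I$ and $x_2(p_1,p_2^j)=p_1((j-1)\bmod\n)$ to equal a constant $\xi_2$ on $J$. Reading off supports gives $\xi_1>0\Rightarrow I\subseteq J$, $\xi_1=0\Rightarrow I\cap J=\emptyset$, and symmetrically $\xi_2>0\Rightarrow J-1\subseteq I$, $\xi_2=0\Rightarrow (J-1)\cap I=\emptyset$. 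If $\xi_1,\xi_2>0$ then $J-1\subseteq I\subseteq J$; since $|J-1|=|J|$ this forces $I=J$ with $J$ invariant under the cyclic shift, hence $J=\{0,\dots,\n-1\}$ and the equilibrium is uniform. So any \emph{other} equilibrium has $\xi_1=0$ or $\xi_2=0$, which already requires $|I|+|J|\le\n$; for $\n\le3$ this contradicts $|I|,|J|\ge2$, proving uniqueness in case $(i)$.

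It remains to treat $\xi_1=0$ or $\xi_2=0$. Because each player has exactly one $a$ per row and column, each single pure deviation moves only one $x_k$-term, so when a player's value is $\mathsf{F}(0)=b$ the best-response condition forces every opponent probability to be $\le x^\ast$ (else some deviation gives value $<b$); summing $|J|$ (resp.\ $|I|$) such probabilities to $1$ yields $x^\ast\ge1/|J|$ (resp.\ $x^\ast\ge1/|I|$) --- this is exactly Lemma~\ref{frequent lemma}. In a mixed case, say $\xi_1>0,\xi_2=0$, the containment $I\subseteq J$ and disjointness $(J-1)\cap I=\emptyset$ make $I$ an independent set of the $\n$-cycle, so $|I|\le\lfloor\n/2\rfloor$; combining $\xi_1\ge x^\ast$ with the two probability sums pins down $J=I$ and $\xi_1=x^\ast=1/|I|$. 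In the pure case $\xi_1=\xi_2=0$, disjointness of $I$ from $J\cup(J-1)$ gives the ``thickening'' bound $|I|+|J|\le\n-1$, hence $\min(|I|,|J|)\le\lfloor(\n-1)/2\rfloor$ and $x^\ast\ge1/\min(|I|,|J|)$. For even $\n$ these deliver either $x^\ast=2/\n$ (the boundary $|I|=\n/2$, i.e.\ $\mathsf{F}(2/\n)=b$) or $x^\ast\ge2/(\n-2)$; for odd $\n$ they deliver $x^\ast\ge2/(\n-1)$. These are exactly the negations of $(ii)$ and $(iii)$, so conditions $(ii)$, $(iii)$ imply uniqueness.

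For the remaining ``only if'' direction I exhibit a second equilibrium whenever a condition fails. If $\n$ is even and $\mathsf{F}(2/\n)=b$, take $p_1=p_2$ uniform on the evens $\{0,2,\dots,\n-2\}$; every induced value equals $b$, so both players are indifferent. If $x^\ast\ge2/(\n-2)$ (even, $\n\ge6$) or $x^\ast\ge2/(\n-1)$ (odd, $\n\ge5$), set $k=\lfloor(\n-1)/2\rfloor$ and take $p_1,p_2$ uniform on the blocks $I=\{0,\dots,k-1\}$ and $J=\{k+1,\dots,2k\}$; one checks $I\cap J=\emptyset$ and $(J-1)\cap I=\emptyset$, each uniform weight equals $1/k\le x^\ast$, and every induced value is again $b$ (for $\n=4$ the threshold $2/(\n-2)=1$ is unattainable, so only the evens construction applies). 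Both profiles differ from the uniform one, completing the equivalence. The main obstacle is the bookkeeping of the third paragraph: extracting the exact constants $2/\n$, $2/(\n-2)$, $2/(\n-1)$ from the cyclic adjacency of $\mathsf{D}_{\n}$, where the even/odd split is forced by the maximum independent set $\lfloor\n/2\rfloor$ of the $\n$-cycle and by the thickening bound $|I|+|J|\le\n-1$, and then verifying that the boundary $|I|=\n/2$ is the \emph{only} source of the isolated value $\mathsf{F}(2/\n)=b$.
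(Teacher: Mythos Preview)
Your proof is correct and proceeds along a route that is closely related to, but framed differently from, the paper's. Both arguments split into the same three meta-cases, but the paper phrases them via the \emph{$b$-block} language of Definition~\ref{def:bblocks} and Lemma~\ref{lem:bblocks} (case~{\sf(1)}: $b$-double-block; case~{\sf(2)}: $b$-block for one player only; case~{\sf(3)}: neither), invoking Lemma~\ref{frequent lemma} as the workhorse. Your framing is more explicitly graph-theoretic: the constants $\xi_1=p_2(i)$ and $\xi_2=p_1((j-1)\bmod\n)$ extracted from the {\sf WEEP} translate ``$b$-block for player $1$'' into $\xi_1=0$, and the cyclic structure of $\mathsf{D}_{\n}$ turns the size bounds of Lemma~\ref{lem:bblocks} into the independent-set bound $|I|\le\lfloor\n/2\rfloor$ and the thickening bound $|I|+|J|\le\n-1$. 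The two derivations of the thresholds $2/\n$, $2/(\n-2)$, $2/(\n-1)$ are therefore the same arithmetic in different clothing.

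What each buys: the paper's $b$-block lemmas are stated for $\mathsf{D}_{\n}$ generally and are reused verbatim in the proof of Theorem~\ref{grande resort} for $\mathsf{C}_{\n}$, so its machinery is more portable. Your $(\xi_1,\xi_2)$ classification is tailored to $\mathsf{D}_{\n}$ and is arguably cleaner here --- in particular, your observation that $\xi_1,\xi_2>0$ forces $J-1\subseteq I\subseteq J$ and hence shift-invariance gives uniqueness of the fully mixed equilibrium in one line, whereas the paper's case~{\sf(3)} spells this out more laboriously. Your counterexample constructions also differ slightly from the paper's (e.g.\ you take $I=J=$ evens where the paper takes $\sigma(p_1)=$ evens, $\sigma(p_2)=$ odds), but both are valid equilibria of the same mixed type. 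One small point worth making explicit: when you write ``combining $\xi_1\ge x^\ast$ with the two probability sums'', the inequality $\xi_1\ge x^\ast$ comes from deviating player~$1$ to some $i'\notin J$ (which exists since $(J-1)\cap I=\emptyset$ and $|I|\ge2$ force $J\ne\{0,\dots,\n-1\}$), giving $\mathsf{F}(\xi_1)\le\mathsf{F}(0)=b$; this step is implicit in your write-up.
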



\noindent Before proceeding to the proof, we present a property concerning the distribution of the $b$-values.

\begin{definition}[$b$-blocks and $b$-double-blocks]
	\label{def:bblocks}
	Let $A,B \subseteq \{0,\dots,\n-1\}$ {with $|A|\ge 1$ and $|B|\ge 1$.} Then, 
	\begin{enumerate}[$(i)$]
		\item $(A,B)$ is a {\it $b$-block} for player 1 if $\alpha_{ij}=b$ for all $i\in A, j\in B$.
		\item $(A,B)$ is a {\it $b$--double-block} if $\alpha_{ij}=\beta_{ij}=b$ for all $i\in A, j\in B$. 
	\end{enumerate}
   \noindent In a corresponding way, a $b$-block for player 2 is defined.
\end{definition}

\noindent We show: 

\begin{lemma}
\label{lem:bblocks}
Let $A,B \subseteq \{0,\dots,\n-1\}$. 
	\begin{enumerate}[$(i)$]
		\item If $(A,B)$ is a $b$-block for some player $k\in [2]$, then $|A|+|B|\le \n$.
		\item If $(A,B)$ is a $b$-double-block, then $|A|+|B|\le \n-1$.
	\end{enumerate}	
\end{lemma}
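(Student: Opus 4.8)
The plan is to convert each block condition into a set-disjointness statement and then count, exploiting the one-$a$-per-row/column structure of $\mathsf{D}_\n$. Write $\tau$ for the cyclic successor map $\tau(i) = (i+1)\bmod \n$ on $\{0,\dots,\n-1\}$, which is a bijection, so $|\tau(A)| = |A|$ for every $A$. Since $\alpha_{ij}=a$ holds exactly when $i=j$, the pair $(A,B)$ is a $b$-block for player $1$ precisely when $i\ne j$ for all $i\in A$, $j\in B$, that is, when $A\cap B=\emptyset$. Likewise, since $\beta_{ij}=a$ holds exactly when $j=\tau(i)$, the pair $(A,B)$ is a $b$-block for player $2$ precisely when $B\cap\tau(A)=\emptyset$.

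For part $(i)$ I would argue directly. In the player-$1$ case, $A$ and $B$ are disjoint subsets of $\{0,\dots,\n-1\}$, so $|A|+|B|=|A\cup B|\le \n$. In the player-$2$ case, $\tau(A)$ and $B$ are disjoint subsets of $\{0,\dots,\n-1\}$, so $|A|+|B|=|\tau(A)|+|B|=|\tau(A)\cup B|\le \n$, using $|\tau(A)|=|A|$.

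For part $(ii)$, a $b$-double-block satisfies both conditions, so $B$ is disjoint from $A$ and from $\tau(A)$; hence $B$ is disjoint from $A\cup\tau(A)$ and $|A\cup\tau(A)|+|B|\le \n$. It therefore suffices to prove the strict bound $|A\cup\tau(A)|\ge |A|+1$, since then $|A|+|B|\le |A|+\bigl(\n-|A\cup\tau(A)|\bigr)\le \n-1$. This strict increase is the one real obstacle. To establish it I would note that $A$ is nonempty by hypothesis and proper, because $A\cap B=\emptyset$ together with $|B|\ge 1$ forces $A\ne\{0,\dots,\n-1\}$. A nonempty proper subset of $\mathbb{Z}/\n\mathbb{Z}$ cannot be invariant under the cyclic shift $\tau$ (the only $\tau$-invariant subsets are $\emptyset$ and the whole set), so $\tau(A)\ne A$; and since $|\tau(A)|=|A|$, this rules out $\tau(A)\subseteq A$, yielding an element of $\tau(A)\setminus A$ and hence $|A\cup\tau(A)|\ge|A|+1$, as required.
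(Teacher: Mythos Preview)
Your proof is correct and follows essentially the same approach as the paper: both translate the block conditions into the disjointness statements $A\cap B=\emptyset$ and $\tau(A)\cap B=\emptyset$, and for part $(ii)$ both reduce to the claim that $|A\cup\tau(A)|\ge|A|+1$ when $A$ is nonempty and proper. Your justification of that last step via $\tau$-invariance is in fact a bit more explicit than the paper's, which simply asserts the inequality from $|A|<\n$.
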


\begin{proof}
	Note that $|B|>1$ implies $|A|<\n$ and $|A|>1$ implies $|B|<\n$. 
    \begin{enumerate}[$(i)$]
    	\item We do the proof only for player 1. Note that $\alpha_{ii}=a$ for all $i\in\mbox{\{0,\ldots,\n-1\}}$.
    	So, if $i\in A$ then $i\notin B$. This implies $|B|\le \n - |A|$, and therefore $|A|+|B|\le \n$.
    	\item Note that for all $i\in \{0,\ldots,\n-1\}$, $\alpha_{ii}=\beta_{i,(i+1)\!~\!\mathrm{mod}\!~\!\n} = a$.
    	So, if $i\in A$, then $i,(i+1)~\mathrm{mod}~\n \notin B$. 
    	Since $|A|< \n$, $|\{i,(i+1)~\mathrm{mod}~\n~:~ i\in A\}| \ge |A|+1$.
    	This implies $|B| \le \n - (|A|+1)$, and therefore  $|A|+|B|\le \n-1$.
    \end{enumerate}  
The claim follows.~\qed
\end{proof}

\noindent We are now ready to prove Theorem~\ref{hawaii grande}.

\begin{proof}
	By the construction of ${\mathsf{D}}_{\n}$, 
	for each player $k \in [2]$  
	and for each strategy $i \in \{ 0, \ldots, \n-1 \}$,
	${\mathsf{E}}_{k}(p_{1}^{i}, p_{2})
	=
	a \cdot \frac{\textstyle 1}
	{\textstyle \n}
	+
	b \cdot \frac{\textstyle \n-1}
	{\textstyle \n}$,
	which is independent of $i$.
	Thus,
	$\langle p_{1}, p_{2}\rangle$
	has the {\sf WEEP}.
	Since 	$\langle p_{1}, p_{2}\rangle$ is fully mixed,
	it follows that
	$\langle p_{1}, p_{2}\rangle$ is an ${\mathsf{E}}$-equilibrium.
	Hence, 
	by Corollary~\ref{completely trivial},                                         
	$\langle p_{1}, p_{2}\rangle$
	is an ${\mathsf{F}}$-equilibrium.

\noindent We now prove that the stated conditions are necessary and sufficient for uniqueness. First we show that they are
sufficient.\smallskip\\
{\bf ``$\bm{\Leftarrow}$'':} Consider an arbitrary ${\mathsf{F}}$-equilibrium
$\langle p_{1}, p_{2}\rangle$. We distinguish three cases.
\begin{enumerate}[{\sf (1)}]
	\item \underline{$(\sigma(p_1),\sigma(p_2))$ is a $b$-double-block:}
	Due to the structure of ${\mathsf{D}}_{\n}$, such a double-block of $b$'s does
	not exist if $\n\le 4$. Now let $\n\ge 5$. Due to Lemma~\ref{lem:bblocks},
	$|\sigma(p_1)|+|\sigma(p_2)|\le \n-1$. 
	Due to the symmetry of ${\mathsf{D}}_{\n}$, we can assume that 
	$|\sigma(p_1)| \le |\sigma(p_2)|$. Then, $|\sigma(p_1)| \le \lfloor\frac{\n-1}{2}\rfloor$.
	We have to distinguish even and odd $\n$.
	\begin{itemize}
		\item [--] If $\n$ is even, then $|\sigma(p_1)| \le \frac{\n-2}{2}$, and by
		assumption $\mathsf{F}(x) < b$ for $x\ge \frac{2}{\n-2}$. 
		Therefore, $\mathsf{F}(x) < b$ for $x\ge \frac{1}{|\sigma(p_1)|}$.
		Lemma~\ref{frequent lemma} implies that $p_2$ is not an {\sf F}-best response to $p_1$. A contradiction.
		\item [--] If $\n$ is odd, then $|\sigma(p_1)| \le \frac{\n-1}{2}$, and by
		assumption $\mathsf{F}(x) < b$ for $x\ge \frac{2}{\n-1}$. 
		Therefore, $\mathsf{F}(x) < b$ for $x\ge \frac{1}{|\sigma(p_1)|}$.
		Lemma~\ref{frequent lemma} implies that $p_2$ is not an {\sf F}-best response to $p_1$. A contradiction.
	\end{itemize} 
	 \item \underline{For some $k\in [2]$, $(\sigma(p_1),\sigma(p_2))$ is a $b$-block for player $k$
	 	but not a $b$-block:} 
 	Due to the symmetry of ${\mathsf{D}}_{\n}$,
 	    we can assume that $k=2$. Then there is a pair $(\hat{i},\hat{j})\in \sigma(p_1)\times\sigma(p_2)$
 	    with $\alpha_{\hat{i}\hat{j}}=a.$
 	    We first prove that for every $i \in \sigma (p_{1}),$
 	    there is some $j \in \sigma (p_{2})$
 	    with $\alpha_{ij} = a$.\\ 
 	    Assume, by way of contradiction, that 
 	    		$\alpha_{\widetilde{i}, j} = b$
 	    		for some $\widetilde{i} \in \sigma (p_{1})$
 	    		and all $j \in \sigma (p_{2}).$
 	    		Then, for player $1$,
 	    		strategy $\widehat{i} \in \sigma (p_{1})$
 	    		dominates strategy $\widetilde{i} \in \sigma (p_{1})$
 	    		with respect to $\sigma (p_{2}).$
 	    		By Lemmas~\ref{domination and weep} and~\ref{unimodal implies weep},
 	    	    it follows that
 	    		no strategy in $\sigma (p_{1})$ dominates
 	    		some other strategy in $\sigma (p_{1})$
 	    		with respect to $\sigma (p_{2})$.
 	    		A contradiction.\\
 	    Note that for all $i,j\in\{0,\ldots,\n-1\}$, $\alpha_{ij}=a$ if and only if $j=i$.
 	    So, $i\in\sigma(p_1)$ implies $i\in\sigma(p_2)$. 
 	    Therefore, 	$|\sigma(p_1)| \le |\sigma(p_2)|$. 
 	    Lemma~\ref{lem:bblocks} implies $|\sigma(p_1)|+|\sigma(p_2)|\le \n$, and therefore
 	    $|\sigma(p_1)|\le \lfloor\frac{\n}{2}\rfloor$. {We distinguish even and odd $\n$.}
 	    \begin{itemize}
 	    	\item [--] {If $\n$ is odd, then $|\sigma(p_1)|\le \frac{\n-1}{2}$,  and by
 	    	assumption $\mathsf{F}(x) < b$ for $x\ge \frac{2}{\n-1}$. 
 	    	Therefore, $\mathsf{F}(x) < b$ for $x\ge \frac{1}{|\sigma(p_1)|}$.
 	    	Lemma~\ref{frequent lemma} implies that $p_2$ is not an {\sf F}-best response to $p_1$. A contradiction.}
 	    	\item [--] {If $\n$ is even, then $|\sigma(p_1)|\le \frac{\n}{2}$. We have two subcases,
 	    	$|\sigma(p_1)|< \frac{\n}{2}$ and $|\sigma(p_1)|= \frac{\n}{2}$.}
 	    	\begin{itemize}
 	    		\item [+] {\underline{$|\sigma(p_1)|\le \frac{\n}{2}-1$:} By assumption, $\mathsf{F}(x) < b$ for $x\ge \frac{2}{\n-2}$. 
 	    		Therefore, $\mathsf{F}(x) < b$ for $x\ge \frac{1}{|\sigma(p_1)|}$.
 	    		Lemma~\ref{frequent lemma} implies that $p_2$ is not an {\sf F}-best response to $p_1$. A contradiction.  }
 	    		\item [+] \underline{$|\sigma(p_1)| = \frac{\n}{2}$:} By assumption, $\mathsf{F}\left(\frac{2}{\n}\right)\ne b$.
 	    		If $\mathsf{F}\left(\frac{2}{\n}\right) < b$, then Lemma~\ref{frequent lemma} implies that $p_2$ is not
 	    		 an {\sf F}-best response to $p_1$.
 	    		A contradiction.  \cg{If $\mathsf{F}\left(\frac{2}{\n}\right) > b$, then the {\sf WEEP} implies
 	    		that $p_2(j)=\frac{2}{\n}$ for all $j\in \sigma(p_2)$. Thus $\mathsf{V}_1(p_1,p_2)=\mathsf{F}\left(\frac{2}{\n}\right)>b$
 	    		and hence player 1 can improve by
 	    		using some strategy $i\notin\sigma(p_1)$. A contradiction.}
 	    	\end{itemize}
 	    \end{itemize} 
 	    
 	    \remove{ 
 	    	\cgr{If  $\n$ is even (resp., odd), then $|\sigma(p_1)|\le \frac{\n}{2} < \frac{\n+2}{2}$
 	    and by assumption $\mathsf{F}(x)<b$ for $x\ge \frac{2}{\n+2}$ 
 	    (resp., $|\sigma(p_1)|\le \frac{\n-1}{2}$
 	    and by assumption $\mathsf{F}(x)<b$ for $x\ge \frac{2}{\n-1}$). 
 	    In both cases  $\mathsf{F}(x)<b$ for $x\ge \frac{1}{|\sigma(p_1)|}$.
 	    Lemma~\ref{frequent lemma} implies that $p_2$ is not an {\sf F}-best response to $p_1$,
 	    a contradiction. }} 
 	    \item \underline{$(\sigma(p_1),\sigma(p_2))$ is neither a $b$-block for player 1
 	    nor a $b$-block for player 2:} Then, there is a pair 
     	$(\hat{i},\hat{j})\in \sigma(p_1)\times\sigma(p_2)$ with $\alpha_{\hat{i}\hat{j}}=a$ and
     	there is a pair $(\widetilde{i},\widetilde{j})\in \sigma(p_1)\times\sigma(p_2)$ with $\beta_{\widetilde{i}\widetilde{j}}=a$.  			
	   Since in an equilibrium no strategy can dominate some other 
	   (Lemmas~\ref{domination and weep} and~\ref{unimodal implies weep}),  
	   for every $i\in \sigma(p_1)$ there is some $j\in \sigma(p_2)$ with $\alpha_{ij}=a$ and for
	   every $j\in \sigma(p_2)$ there is some $i\in \sigma(p_1)$ with $\beta_{ij}=a$. A more
	   elaborate proof was given in case {\sf (2)}. 
	   
	   {Due to the structure of 
	   	$\left( \alpha_{ij} \right)_{0 \leq i, j \leq \n-1}$,
	   	it follows that for every
	   	$i \in \sigma(p_{1})$,
	   	we also have 
	   	$i \in \sigma(p_{2})$.}
	   {Due to the structure of 
	   	$\left( \beta_{ij} \right)_{0 \leq i, j \leq \n-1}$,
	   	it follows that for every
	   	$j \in \sigma (p_{2})$,
	   	we also have 
	   	($j-1 \mod \n) \in \sigma (p_{1})$.}
	   {These two taken together yield  that
	   	$\sigma (p_{1}) = \sigma (p_{2}) = \{ 0, \ldots, \n-1 \}$.}
	   {By the {\sf WEEP}
	   	for player $1$,
	   	the expression 
	   	\begin{eqnarray*}
	   		a \cdot p_{1}(i)
	   		+
	   		b \cdot \sum_{i' \in \{ 0, \ldots, \n-1 \} \setminus \{ i \}} p_{1}(i')
	   	\end{eqnarray*} 
	   	is constant for $i\in\{0,\ldots,\n-1\}$, 
	   	yielding
	   	the unique solution
	   	$p_{1}(i) = \frac{\textstyle 1}
	   	{\textstyle \n}$
	   	for each strategy
	   	$i \in \mbox{\{0, \ldots, \n-1\}}$.}
	   {By the {\sf WEEP} for player $2$,
	   	we get identically the unique solution                
	   	$p_{2}(j) = \frac{\textstyle 1}
	   	{\textstyle \n}$
	   	for each strategy
	   	$j \in \mbox{\{ 0, \ldots, \n-1 \}}$.}
\end{enumerate}	

\noindent {\bf ``$\bm{\Rightarrow}$'':} We now
show that the conditions are necessary for uniqueness.
{The conditions do not hold if and only if
	(there exists even $\n\ge 4$ with $\mathsf{F}\left(\frac{2}{\n}\right)= b$ or $\mathsf{F}\left(\frac{2}{\n-2}\right) \ge b$)
	or (there exists odd $\n\ge 5$ with $\mathsf{F}\left(\frac{2}{\n-1}\right) \ge b$).}
\begin{enumerate}[{\sf(1)}]
	\item If $\n$ is even, {$\n\ge 4$} and $\mathsf{F}\left(\frac{2}{\n}\right)=b$, then
	\begin{itemize}
		\item [] {$\sigma(p_1)=\{i\in \{0,\ldots,\n-1\}~:~i~\text{even}\},~p_1(i)=\frac{2}{\n}$ for all $i\in \sigma(p_1)$}\smallskip\\
		{$\sigma(p_2)=\{j\in \{0,\ldots,\n-1\}~:~j~\text{odd}\},~p_2(j)=\frac{2}{\n}$ for all $j\in \sigma(p_2)$}
	\end{itemize}
	is an {\sf F}-equilibrium. \cg{$|\sigma(p_1)|=|\sigma(p_2)| =\frac{\n}{2}$, and {\bf p} is uniform on $\sigma(p_k)$ for $k\in[2]$.}\\ 
	For example, for $\n=4$ and $\mathsf{F}\left(\frac{1}{2}\right)=b$ we get:
	$ 		\left( \begin{array}{llllll}
	(a, b)  & {\color{green} \fbox{(b,a)}}   & (b,b)   &   {\color{green} \fbox{(b,b)}}    \\
	(b,b)   & (a,b)   & (b,a)   & (b,b)    \\
	(b,b)   &  {\color{green} \fbox{(b,b)}}     & (a,b) &  {\color{green} \fbox{(b,a)}}    \\
	(b,a)   & (b,b)    & (b,b)   & (a,b)   \\
	\end{array}
	\right)\, .$          
	\item {If $\n$ is even, $\n\ge 4$ and $\mathsf{F}\left(\frac{2}{\n-2}\right)\ge b$, then}
	\begin{itemize}
		\item [] {$\sigma(p_1)=\{i~:~\frac{\n}{2}\le i \le \n-2\},~p_1(i)=\frac{2}{\n-2}$ for all $i\in \sigma(p_1)$}\smallskip\\
		{$\sigma(p_2)=\{j~:~0 \le j \le \frac{\n}{2}-2\},~p_2(j)=\frac{2}{\n-2}$ for all $j\in \sigma(p_2)$}
	\end{itemize}
	{is an {\sf F}-equilibrium.} \cg{$|\sigma(p_1)|=|\sigma(p_2)| =\frac{\n-2}{2}$, and {\bf p} is uniform on $\sigma(p_k)$ for $k\in[2]$.}\\ 
	{For example, for $\n=6$ and $\mathsf{F}\left(\frac{1}{2}\right)\ge b$ we get:}
	$ 		\left( \begin{array}{llllll}
	(a,b)  & (b,a)  & (b,b)   &   (b,b)  &   (b,b) &   (b,b)  \\
	(b,b)   & (a,b)   & (b,a)   & (b,b) &   (b,b) &   (b,b)   \\
	(b,b)   & (b,b)   & (a,b)   & (b,a) &   (b,b) &   (b,b)   \\
	{\color{green} \fbox{(b,b)}}   &  {\color{green} \fbox{(b,b)}}   &   (b,b)  & (a,b) &  (b,a) &   (b,b)    \\
	{\color{green} \fbox{(b,b)}}   &  {\color{green} \fbox{(b,b)}}   &   (b,b)  & (b,b) &  (a,b) &   (b,a)    \\
	(b,a)   & (b,b)   & (b,b)   & (b,b) &   (b,b) &   (a,b)   \\
	\end{array}
	\right)\, .$  
	\item If $\n$ is odd, {$\n\ge 5$} and $\mathsf{F}\left(\frac{2}{\n-1}\right)\ge b$, then
	 \begin{itemize} 
		\item [] {$\sigma(p_1)=\left\{i~:~\frac{\n-1}{2}\le i\le \n-2\right\},~p_1(i)=\frac{2}{\n-1}$ for all $i\in \sigma(p_1)$}\smallskip\\
		{$\sigma(p_2)=\left\{j~:~0\le j\le \frac{\n-3}{2} \right\},~p_2(i)=\frac{2}{\n-1}$ for all $j\in \sigma(p_2)$} 
	\end{itemize}
    {is an {\sf F}-equilibrium.} \cg{$|\sigma(p_1)|=|\sigma(p_2)| =\frac{\n-1}{2}$, and {\bf p} is uniform on $\sigma(p_k)$ for $k\in[2]$.}\\
	For example, for $\n=5$ and $\mathsf{F}\left(\frac{1}{2}\right)\ge b$ we get:
	$ 		\left( \begin{array}{llllll}
	(a, b)  & (b,a)   & (b,b)   &   (b,b) & (b,b)    \\
	(b,b)   & (a,b)   & (b,a)   & (b,b)  & (b,b)  \\
	{\color{green} \fbox{(b,b)}}   &  {\color{green} \fbox{(b,b)}}     & (a,b) &  (b,a)  & (b,b)  \\
	{\color{green} \fbox{(b,b)}}   & {\color{green} \fbox{(b,b)}}    & (b,b)   & (a,b)   & (b,a) \\
	(b,a)   & (b,b)    & (b,b)   & (b,b)   & (a,b) \\
	\end{array}
	\right)\, .$\smallskip
\end{enumerate}	
     
\noindent This completes the proof.\qed
\end{proof}

\remove{
\begin{proof}
	{By the construction of ${\mathsf{D}}_{n}$, 
		for each player $i \in [2]$  
		and for each strategy $k \in \{ 0, \ldots, n-1 \}$,
		${\mathsf{E}}_{i}(p_{1}^{k}, p_{2})
		=
		a \cdot \frac{\textstyle 1}
		{\textstyle n}
		+
		b \cdot \frac{\textstyle n-1}
		{\textstyle n}$,
		which is independent of $k$.
		Thus,
		$(p_{1}, p_{2})$
		has the {\sf WEEP}.
		Since $(p_{1}, p_{2})$ is fully mixed,
		it follows that
		$(p_{1}, p_{2})$ is an ${\mathsf{E}}$-equilibrium.
		Hence, 
		by Corollary~\ref{completely trivial},                                         
		$(p_{1}, p_{2})$
		is an ${\mathsf{F}}$-equilibrium.}
	{To prove uniqueness,}
	{consider 
		an arbitrary ${\mathsf{F}}$-equilibrium
		$(p_{1}, p_{2})$.
		We distinguish two cases.}
	\begin{enumerate}
		\item[{\sf (1)}]
		{\underline{$\alpha_{ij} = b$
				for all $(i, j)
				\in 
				\sigma (p_{1}) \times \sigma (p_{2})$:}}           
		 {We shall apply
			Lemma~\ref{frequent lemma}
			to player $1$.} 
		{Property {\sf (1)}
			holds vacuously due to the assumption
			for Case {\sf (1)}.} 
		{For Property {\sf (3)},
			note that
			$|\sigma (p_{2})| \leq n-1$.
			To see this,
			consider
			some $i \in \sigma (p_{1})$.
			{Due to the structure of ${\mathsf{D}}_{n}$,}
			there is a 
			$j \in \{ 0, \ldots, n-1 \}$
			with $\alpha_{ij} = a$.
			From the condition for Case {\sf (1)},
			it follows that $j \not\in \sigma (p_{2})$.}
		{Thus,
			$|\sigma (p_{2})| \leq n-1$.}
		{Since
			${\mathsf{F}}(x)
			<
			b$
			for
			$x \geq \frac{\textstyle 1}
			{\textstyle n-1}$,  
			it follows that
			${\mathsf{F}}(x)
			<
			b$
			for
			$x \geq \frac{\textstyle 1}
			{\textstyle |\sigma (p_{2})|}$,
			which is
			Condition {\sf (3)}.
			It follows,
			by Lemma~\ref{frequent lemma},
			that $p_{1}$
			is not an ${\mathsf{F}}$-best response
			to $p_{2}$.
			A contradiction.}                               
		\item[{\sf (2)}]
		\underline{{There is a pair
				$(\widehat{i},
				\widehat{j})
				\in
				\sigma (p_{1}) \times \sigma (p_{2})$
				with
				$\alpha_{ij} = a$.}}
		{We first prove that
			for every $i \in \sigma (p_{1})$
			(resp., for every $j \in \sigma (p_{2})$),
			there is some $j \in \sigma (p_{2})$
			(resp., some $i \in \sigma (p_{1})$)
			with $\alpha_{ij} = a$
			(resp., with $\beta_{ij} = a$):}
		\begin{quote}
			{Assume,
				by way of contradiction,
				that 
				$\alpha_{\widetilde{i}, j}
				=
				b$
				for some $\widetilde{i} \in \sigma (p_{1})$
				and all
				$j \in \sigma (p_{2})$
				(resp.,
				$\beta_{i, \widetilde{j}}
				=
				b$
				for some $\widetilde{j} \in \sigma (p_{2})$
				and all
				$i \in \sigma (p_{1})$).}
			{Then,
				for player $1$
				(resp., for player $2$),
				strategy $\widehat{i} \in \sigma (p_{1})$
				dominates strategy $\widetilde{i} \in \sigma (p_{1})$
				with respect to $\sigma (p_{2})$
				(resp.,
				strategy $\widehat{j} \in \sigma (p_{2})$
				dominates strategy $\widetilde{j} \in \sigma (p_{2})$
				with respect to $\sigma (p_{1})$).}
			{By Lemmas~\ref{unimodal implies weep}
				and~\ref{domination and weep},
				it follows that
				no strategy in $\sigma (p_{1})$
				(resp., $\sigma (p_{2})$)
				dominates
				some other strategy in $\sigma (p_{1})$
				(resp., $\sigma (p_{2})$)
				with respect to $\sigma (p_{2})$
				(resp., $\sigma (p_{1})$).
				A contradiction.}
		\end{quote}
		{Due to the structure of 
			$\left( \alpha_{ij} \right)_{0 \leq i, j \leq n-1}$,
			it follows that for every
			$i \in \sigma (p_{1})$,
			we also have 
			$i \in \sigma (p_{2})$.}
		{Due to the structure of 
			$\left( \beta_{ij} \right)_{0 \leq i, j \leq n-1}$,
			it follows that for every
			$j \in \sigma (p_{2})$,
			we also have 
			$j-1 \mod n \in \sigma (p_{1})$.}
		{These two taken together,
			it follows that
			$\sigma (p_{1}) = \sigma (p_{2}) = \{ 0, \ldots, n-1 \}$.}
		{By the {\sf WEEP}
			for player $1$,
			for all strategy pairs
			$k, k^{\prime} \in \{ 0, 1, \ldots, n-1 \}$,
			\begin{eqnarray*}
				a \cdot p_{1}(k)
				+
				b \cdot \sum_{k \in \{ 0, 1, \ldots, n-1 \} \setminus \{ i \}} p_{1}(k)
				& = &
				a \cdot p_{1}(i^{\prime})
				+
				b \cdot \sum_{k \in \{ 0, 1, \ldots, n-1 \} \setminus \{ i^{\prime} \}} p_{1}(k)\, ,
			\end{eqnarray*} 
			yielding
			the unique solution
			$p_{1}(k) = \frac{\textstyle 1}
			{\textstyle n}$
			for each strategy
			$k \in \{ 0, 1, \ldots, n-1 \}$.}
		{By the {\sf WEEP} for player $2$,
			we get identically the unique solution                
			$p_{2}(k) = \frac{\textstyle 1}
			{\textstyle n}$
			for each strategy
			$k \in \{ 0, \ldots, n-1 \}$.}                                               
	\end{enumerate}
	\noindent
	{The claim follows.}
\end{proof}
} 


\noindent
We now proceed to prove the inexistence result.    
Game ${\mathsf{C}}_{\n}$ is a  
%
$2$-values,
$(\n+1)$-strategies bimatrix game,
with $\n \geq 2$,
which 
is derived by adding one row and one column to game ${\mathsf{D}}_{\n}$,
as follows:
{
	\begin{eqnarray*}
		{\mathsf{C}}_{\n}
		& = &  \left( \begin{array}{lllll|l}
			&          &                              &           &         & (a,b)   \\
			&          &                              &           &         & (b,b)   \\
			&          & {\mathsf{D}}_{\n}&           &          & \vdots \\
			&          &                              &           &          &            \\
			&          &                              &           &          & (b,b)   \\
			\hline            
			(b,b)   & (b,b)  & \ldots                   & (b,b)  & (b,b) & (b,a)              \\
		\end{array}
		\right)\, .           
	\end{eqnarray*}
}

{\color{black}
\noindent Thus,
	${\mathsf{C}}_{\n}
	=
	(\alpha_{ij}, \beta_{ij})_{0 \leq i, j \leq \n}$,
	where: 
	\begin{itemize}
		\item [] $\alpha_{ij}, \beta_{ij}$ for $0 \leq i, j \leq \n-1$ are
	defined as in the definition of $\mathsf{D}_{\n}$,
	  \item []  $\alpha_{\n j} = \alpha_{i\n}=b$ for $1\le i\le \n,~0\le j\le \n$,
	  \item []   $\alpha_{0\n} =a$,
	  \item []  $\beta_{\n j} = \beta_{i\n}=b$ for $0\le i\le \n-1,~0\le j\le \n-1$, and
	  \item []   $\beta_{\n\n} =a$.
	  \end{itemize}
  }
	
\noindent {Clearly, ${\mathsf{C}}_{\n}$ is a normal game. Due to the structure of ${\mathsf{C}}_{\n}$,
	there hold two properties
	with corresponding consequences
	to Properties {\sf (1)} and {\sf (2)}
	from Lemma~\ref{frequent lemma}:}
\begin{itemize}

	\item
	{${\mathsf{\alpha}}_{\n j} = b$
		for all
		$j \in \{ 0, 1, \ldots, \n \}$.
		{Thus,
			${\mathsf{\mu}}_{1}(\n, j)
			=
			b$
			for all $j \in \{ 0, 1, \ldots, \n \}$.}
		Taking
		$\widehat{i} := \n$,
		this implies Property {\sf (1)}
		from Lemma~\ref{frequent lemma}
		for any mixed profile
		$(p_{1}, p_{2})$
		with $\n \in \sigma (p_{1})$.}

	{${\mathsf{\beta}}_{i\n} = b$
		for all $i \in \{ 0, \ldots, \n-1 \}$.}
	{Thus,
		${\mathsf{\mu}}_{2}(i, \n)
		=
		b$
		for all $i \in \{ 0, \ldots, \n-1 \}$.}  
	{Taking
		$\widehat{j} := \n$,
		this implies
		Property {\sf (1)}
		from Lemma~\ref{frequent lemma}
		for any mixed profile
		$(p_{1}, p_{2})$
		with 
		$\n \in \sigma (p_{2})$
		and
		$\n \not\in \sigma (p_{1})$.}

	\item
	Since $\mathsf{C}_{\n}$ is a normal game, any mixed profile
	$(p_{1}, p_{2})$ fulfills Property {\sf (2)}
	from Lemma~\ref{frequent lemma}.
	\remove{For each strategy 
		$j \in \{ 0, 1, \ldots, \n \}$
		(resp., $i \in \{ 0, 1, \ldots, \n \}$),
		there is a strategy 
		$i \in \{ 0, 1, \ldots, \n \}$
		(resp., $j \in \{ 0, 1, \ldots, \n \}$)
		with
		${\mathsf{\alpha}}_{ij}
		=
		a$
		(resp.,
		${\mathsf{\beta}}_{ij}
		=
		a$).
		(This extends the corresponding property for the game ${\mathsf{D}}_{\n}$
		to row and column $\n$.)
	This implies 
		that any mixed profile
		$(p_{1}, p_{2})$
		fulfills
		Property {\sf (2)}
		from Lemma~\ref{frequent lemma}.}

\end{itemize}
\noindent
{We show:}

\begin{theorem}
	\label{grande resort}
    Let $\n\ge 2$. Consider a unimodal valuation {\sf F}. $\mathsf{C}_{\n}$ has no
    {\sf F}-equilibrium if and only if 
    ${\mathsf{F}}\left(\frac{\textstyle 1}
    {\textstyle \n}\right) > b$ and one of the following conditions holds: 
    \begin{enumerate}[$(i)$]
    	\item $\n$ is even, $\mathsf{F}(x)< b$ for $x\ge \frac{2}{\n}$,
    	\item $\n$ is odd, $\mathsf{F}\left(\frac{2}{\n+1}\right) \ne b$  and
    		$\mathsf{F}(x)< b$ for $x\ge \frac{2}{\n-1}$.
    \end{enumerate}
\end{theorem}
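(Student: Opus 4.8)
The whole argument will pivot on the \emph{punishment strategy} $\n$. Since row $\n$ carries only $b$'s for player $1$, deviating to it costs $\mathsf{F}(0)=b$; since column $\n$ carries only $b$'s for player $2$ against rows $0,\dots,\n-1$, deviating to it costs $\mathsf{F}(p_1(\n))$, which is $b$ whenever $\n\notin\sigma(p_1)$. Hence in any $\mathsf{F}$-equilibrium $\langle p_1,p_2\rangle$ player $1$ has cost at most $b$, and player $2$ has cost at most $b$ when $\n\notin\sigma(p_1)$. By Lemmas~\ref{unimodal implies weep} and~\ref{x-constant} the quantity $x_k$ is constant, say $x_k^\ast$, over player $k$'s support; so, setting $r:=\max\{x\in[0,1]:\mathsf{F}(x)\ge b\}$ (well-defined, with $r\in(1/\n,1)$ exactly when $\mathsf{F}(1/\n)>b$, by concavity and $\mathsf{F}(0)=b>a=\mathsf{F}(1)$), the cost bound for player $1$ reads $x_1^\ast\in\{0\}\cup[r,1]$. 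This already gives the necessity of $\mathsf{F}(1/\n)>b$: if $\mathsf{F}(1/\n)\le b$, the uniform profile of $\mathsf{D}_\n$ --- an $\mathsf{F}$-equilibrium there by Theorem~\ref{hawaii grande} --- survives in $\mathsf{C}_\n$, since each player's cost $\mathsf{F}(1/\n)\le b$ makes the one new deviation (to strategy $\n$, of cost $b$) non-improving.

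For the implication ``the stated conditions $\Rightarrow$ no $\mathsf{F}$-equilibrium'', I assume $\mathsf{F}(1/\n)>b$ together with $(i)$/$(ii)$ and suppose an equilibrium exists; both supports have size $\ge 2$ by Lemma~\ref{no equilibrium for normal game}. Using Lemmas~\ref{domination and weep} and~\ref{unimodal implies weep} I run the $b$-block trichotomy of Theorem~\ref{hawaii grande}. If $(\sigma(p_1),\sigma(p_2))$ is a $b$-block for neither player, domination forbids the all-$b$ strategy $\n$ from either support, so the profile lives in $\mathsf{D}_\n$, and (exactly as in Theorem~\ref{hawaii grande}) the Toeplitz structure with {\sf WEEP} forces $\sigma(p_1)=\sigma(p_2)=\{0,\dots,\n-1\}$ and uniform weights; then $x_1^\ast=1/\n$ gives cost $\mathsf{F}(1/\n)>b$, contradicting $x_1^\ast\in\{0\}\cup[r,1]$. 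For the block cases I will first establish the $\mathsf{C}_\n$-analogues of Lemma~\ref{lem:bblocks}: a $b$-double-block satisfies $|A|+|B|\le\n$, and a one-sided $b$-block satisfies $|A|+|B|\le\n+1$ once strategy $\n$ is removed from the $a$-carrying player's support (it would be dominated). Because $\alpha_{ij}=a\iff i=j$, the $a$-carrying player's support injects into the other's, so the smaller support has size at most $\lfloor\n/2\rfloor$ (double-block) or $\lfloor(\n+1)/2\rfloor$ (one-sided). Feeding $1/(\text{smaller support})$ into Lemma~\ref{frequent lemma} (property $(1)$ is free, as every row/column of the blocked side is all-$b$) yields a contradiction whenever $\mathsf{F}$ is $<b$ at that threshold --- namely $2/\n$ (even) or $2/(\n-1)$ (odd) for double-blocks, and additionally $2/(\n+1)$ for one-sided blocks when $\n$ is odd --- which is precisely what $(i)$/$(ii)$ assert. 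The lone residual case, $\n$ odd with smaller support $(\n+1)/2$ and $\mathsf{F}(2/(\n+1))>b$, escapes Lemma~\ref{frequent lemma}; I dispatch it by a weight count: {\sf WEEP} forces $p_2(i)=x_1^\ast\ge r$ on the $(\n+1)/2$ columns indexed by $\sigma(p_1)$, whence $\tfrac{\n+1}{2}\,r\le 1$, i.e.\ $r\le 2/(\n+1)$, contradicting $\mathsf{F}(2/(\n+1))>b$.

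For the converse I construct equilibria whenever $\mathsf{F}(1/\n)>b$ but $(i)$/$(ii)$ fail, i.e.\ when $\mathsf{F}(2/\n)\ge b$ (even), or $\mathsf{F}(2/(\n-1))\ge b$ or $\mathsf{F}(2/(\n+1))=b$ (odd). When the relevant $\mathsf{F}$-value is $\ge b$ I take a uniform $b$-double-block of size $\lfloor\n/2\rfloor$ that \emph{uses} strategy $\n$; for even $\n\ge 4$ the pattern $\sigma(p_1)=\{\tfrac{\n}{2}+1,\dots,\n\}$, $\sigma(p_2)=\{1,\dots,\tfrac{\n}{2}\}$ works, both players costing $b$ while every deviation hits at most one $a$ and so costs $\mathsf{F}(2/\n)\ge b$ (the deviation to $\n$ costing $b$); the odd $b$-double-block of size $(\n-1)/2$ is analogous. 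In the knife-edge $\mathsf{F}(2/(\n+1))=b$ (odd $\n$) I instead use a size-$(\n+1)/2$ one-sided $b$-block through strategy $\n$ --- the verified pattern for $\mathsf{C}_3$ is $\sigma(p_1)=\{2,3\}$, $\sigma(p_2)=\{0,1\}$ --- where the equality $\mathsf{F}(2/(\n+1))=b$ is exactly what equalizes, via {\sf WEEP}, the two support costs of the non-blocked player. The small cases $\n\le 3$ reduce to the uniform profile or, for $\mathsf{C}_3$, to this last block.

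The hard part will be the parity-sensitive bookkeeping of block sizes in $\mathsf{C}_\n$. The extra row and column shift the $b$-block bound from $\n$ (in $\mathsf{D}_\n$) to $\n+1$, and it is this shift that, for odd $\n$ only, opens the new threshold $2/(\n+1)$ at which a balanced one-sided block of size $(\n+1)/2$ becomes feasible and must be isolated by the extra hypothesis $\mathsf{F}(2/(\n+1))\ne b$. Carrying the two exceptional entries $\alpha_{0\n}=a$ and $\beta_{\n\n}=a$ correctly through the domination and {\sf WEEP} steps is the most error-prone point: they spoil the clean injection of supports at index $0$ and must be folded into the weight count, and they force the player-$2$ deviation to column $\n$ to be scored as $\mathsf{F}(p_1(\n))$, not $\mathsf{F}(0)$, whenever $\n\in\sigma(p_1)$.
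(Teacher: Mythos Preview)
Your outline follows the paper's proof closely: the reduction to $\mathsf{D}_\n$ via Theorem~\ref{hawaii grande}, the $b$-block trichotomy, the appeal to Lemma~\ref{frequent lemma} with support-size bounds coming from Lemma~\ref{lem:bblocks}, and explicit block constructions for the converse are all exactly what the paper does. However, two concrete points in your sketch do not go through as stated.

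\textbf{The one-sided block bound.} Your claim that a one-sided $b$-block satisfies $|A|+|B|\le \n+1$ ``once strategy $\n$ is removed from the $a$-carrying player's support (it would be dominated)'' fails precisely when the block is a $b$-block for player~$1$ with $\n\in\sigma(p_1)$ \emph{and} $\n\in\sigma(p_2)$: column~$\n$ then carries $\beta_{\n\n}=a$ against $\sigma(p_1)$, so it is \emph{not} dominated for player~$2$, and the bound degrades to $|\sigma(p_1)|+|\sigma(p_2)|\le \n+2$. For even $\n$ this gives smaller support $\le \n/2+1$ and threshold $2/(\n+2)$, which condition~$(i)$ does not cover. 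The paper isolates this as a separate subcase and closes it not by Lemma~\ref{frequent lemma} directly but by exploiting that row~$0$ hits \emph{two} columns of $\sigma(p_2)$ simultaneously (namely $\alpha_{00}=\alpha_{0\n}=a$), so the number of distinct deviation targets for player~$1$ is only $|\sigma(p_2)|-1$, restoring the needed threshold $1/\lfloor\n/2\rfloor$. Your final paragraph flags this as ``error-prone'' but does not supply the fix; without it the even-$\n$ forward direction has a hole.

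\textbf{The knife-edge construction.} Your proposed equilibrium for $\mathsf{C}_3$ at $\mathsf{F}(1/2)=b$, namely $\sigma(p_1)=\{2,3\}$, $\sigma(p_2)=\{0,1\}$ with uniform weights, is not an equilibrium: on that block player~$2$'s $\beta$-matrix is $\left(\begin{smallmatrix}a&b\\b&b\end{smallmatrix}\right)$, so $x_2(p_1,p_2^{0})=1/2\ne 0=x_2(p_1,p_2^{1})$, violating {\sf WEEP} and hence the Optimal-Value property. What is needed is a block on which the $a$-carrying player's $a$-entries form a \emph{permutation} pattern so that {\sf WEEP} holds at uniform weights; the paper uses $\sigma(p_1)=\{1,3\}$, $\sigma(p_2)=\{2,3\}$ (and the alternating pattern $\sigma(p_1)=\{i\text{ odd}\}$, $\sigma(p_2)=\{\n\}\cup\{j\text{ even},\,j\ge 2\}$ in general), where player~$2$'s $\beta$-submatrix is $\left(\begin{smallmatrix}a&b\\b&a\end{smallmatrix}\right)$ and both players sit at cost $b$.
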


\begin{proof}
		We first prove that the conditions are sufficient for non-existence. {In the proof we will make use
			of the following observation: if $\mathsf{F}(z)<b$ then $\mathsf{F}(x)$ decreases strictly monotone
			for $x\in(z,1)$; in particular, if $\mathsf{F}(z)<b$ and $z<x$, then $\mathsf{F}(x)<b$.}\smallskip\\
{\bf ``$\bm{\Leftarrow}$'':}
	{Assume, by way of contradiction,
		that ${\mathsf{C}}_{\n}$
		has an ${\mathsf{F}}$-equilibrium $\langle p_{1}, p_{2}\rangle$
		with $\n \not\in \sigma (p_{1})$
		and
		$\n \not\in \sigma (p_{2})$.}
	{Due to the structure of ${\mathsf{C}}_{\n}$,
		it follows that $\langle p_{1}, p_{2}\rangle$
		is an ${\mathsf{F}}$-equilibrium
		for ${\mathsf{D}}_{\n}$.}
	{Since
		${\mathsf{F}}(x) < b$
		for
		$x \geq \frac{2}
		{\n}$, with $\n$ even, or 
		$x \geq \frac{2}
		{\n-1}$, with $\n$ odd, by the above observation and 
		Theorem~\ref{hawaii grande}, it follows
		that
		$p_{1}(j) = p_{2}(j) = \frac{1}
		{\n}$
		with $0 \leq j \leq \n-1$.}
		This implies that
		$x_{1}(p_{1}, p_{2})
		=
		\frac{1}
		{\n}$.
	{Thus,
		${\mathsf{V}}_{1}(p_{1}, p_{2})
		=
		{\mathsf{F}}\left( \frac{\textstyle 1}
		{\textstyle \n}
		\right)
		>
		b$.
		Then,
		player $1$ could improve
		by switching to  strategy $\n$
		with 
		${\mathsf{V}}(p_{1}^{\n}, p_{2})
		=
		b$. A contradiction.} \\  
	{Assume now that
		there is an ${\mathsf{F}}$-equilibrium $\langle p_{1}, p_{2}\rangle$
		with $\n \in \sigma (p_{1})$ or $\n \in \sigma (p_{2})$.
		Since game $\mathsf{C}_{\n}$ is normal, as mentioned above, Property {\sf (2)}
		from Lemma~\ref{frequent lemma} is fulfilled.  
		There remain two cases to consider.} 
	\begin{enumerate}[{\sf (1)}]
		\item
		\underline{$\n \in \sigma (p_{1})$:}
		Then $(\sigma(p_1)\setminus\{\n\},\sigma(p_2)\setminus\{\n\})$ is 
	a $b$-block for player 1. This leads to two subcases:
	    \begin{enumerate}
	    	\item [{\sf (1.1)}] \underline{$(\sigma(p_1)\setminus\{\n\},\sigma(p_2)\setminus\{\n\})$ is 
	    		a $b$-double-block:} This implies that $\n\notin\sigma(p_2)$,
    		hence $|\sigma(p_2)\setminus\{\n\}| = |\sigma(p_2)|$. 
    		From Lemma~\ref{lem:bblocks} we have that 
    		$|\sigma(p_1)\setminus\{\n\}| + |\sigma(p_2)\setminus\{\n\}| \le \n-1$.
    		Hence, $|\sigma(p_1)| + |\sigma(p_2)| \le \n$.
    		Let $k\in[2]$ with $|\sigma(p_k)|\le |\sigma(p_{\bar{k}})|$.
    		Then $|\sigma(p_k)|\le \lfloor \frac{\n}{2}\rfloor$.
    		 If  $\n$ is even (resp., odd), then $|\sigma(p_k)|\le \frac{\n}{2}$
    		and by assumption $\mathsf{F}(x)<b$ for $x\ge \frac{2}{\n}$ 
    		(resp., $|\sigma(p_k)|\le \frac{\n-1}{2}$
    		and by assumption $\mathsf{F}(x)<b$ for $x\ge \frac{2}{\n-1}$). 
    		In both cases,  $\mathsf{F}(x)<b$ for $x\ge \frac{1}{|\sigma(p_k)|}$.
    		Lemma~\ref{frequent lemma} implies that $p_{\bar{k}}$ is not an {\sf F}-best response to $p_k$. A contradiction. 
	    	\item [{\sf (1.2)}] 
	    	\underline{$(\sigma(p_1)\setminus\{\n\},\sigma(p_2)\setminus\{\n\})$ is 
	    	not	a $b$-block for player 2:} Then, for every $j\in\sigma(p_2)$ there
    	exists $i\in\sigma(p_1)$ with $\beta_{ij}=a$. For $j\in \sigma(p_2)\setminus \{\n\}$ we get that $(j-1) \mod\n \in \sigma(p_2)\setminus \{\n\}$ and hence
    	$|\sigma(p_2)\setminus \{\n\}|\le |\sigma(p_{\bar{1}})\setminus \{\n\}|$.
    	Lemma~\ref{lem:bblocks} implies that 
    		$|\sigma(p_1)\setminus\{\n\}| + |\sigma(p_2)\setminus\{\n\}| \le \n$.
    		We consider two subcases:
    		\begin{enumerate}
    			\item [{\sf (1.2.1)}] 
    			\underline{$\n\notin\sigma(p_2)$:} Then $|\sigma(p_1)| + |\sigma(p_2)| \le \n+1$ and $|\sigma(p_2)| \le |\sigma(p_1)| - 1.$
    			Thus, $2\cdot|\sigma(p_2)| \le |\sigma(p_1)|+|\sigma(p_2)| - 1 \le \n$,
    	        which yields $|\sigma(p_2)| \leq \lfloor \frac{\n}{2}\rfloor$.
    	        Following the exact same reasoning as in case {\sf (1.1)} and using
    	        Lemma~\ref{frequent lemma}, we conclude that $p_1$ is not an 
    	        {\sf F}-best response to $p_2$. A contradiction. 
    			\item [{\sf (1.2.2)}] \underline{$\n\in\sigma(p_2)$:} Because
    			of the non-symmetry of $\mathsf{C}_\n$ we now need to employ new ideas.
    			Observe that $\n\in\sigma(p_2)$ implies $0\notin\sigma(p_1)$ 
    			(since $\alpha_{0\n}=a$ and $(\sigma(p_1)\setminus\{\n\}, \sigma(p_2)\setminus\{\n\})$ is a  $b$-block for player 1).
    			We distinguish two subcases:
    			\begin{enumerate}
    				\item [{\sf (1.2.2.1)}] \underline{$0\notin\sigma(p_2)$:}
    				For $i\in\{0,\ldots,\n-1\}$, $i\in\sigma(p_1)$ implies
    				$i\notin\sigma(p_2)$. Hence, 
    				$(\sigma(p_{1})\setminus\{\n\})\cap (\sigma(p_{2})\setminus \{\n\})=\emptyset$. 
    				Furthermore, 
    				$\sigma(p_1)\setminus\{\n\}\subseteq\{1,\ldots,\n-1\},~\sigma(p_2\setminus\{\n\}\subseteq\{1,\ldots,\n-1\}$
    				implies	$|\sigma(p_1)\setminus\{\n\}| + |\sigma(p_2)\setminus\{\n\}| \le \n-1$, which
    				implies $|\sigma(p_1)| + |\sigma(p_2)| \le \n+1$, and hence 
    				$|\sigma(p_2)|\leq \lfloor\frac{\n+1}{2}\rfloor$. {We distinguish between $\n$ even and $\n$ odd.}
    				\begin{itemize}
    					\item [--] {If $\n$ is even, then $|\sigma(p_2)|\le \frac{\n}{2}$
    					and by assumption $\mathsf{F}(x)<b$ for $x\ge \frac{2}{\n}$. Hence, 
    					$\mathsf{F}(x)<b$ for $x\ge \frac{1}{|\sigma(p_2)|}$. Lemma~\ref{frequent lemma} implies that 
    					$p_{1}$ is not an {\sf F}-best response to $p_2$. A contradiction. }   
    					\item [--] {If $\n$ is odd, then $|\sigma(p_2)|\le \frac{\n+1}{2}$. We have two subcases,
    						$|\sigma(p_2)|< \frac{\n+1}{2}$ and $|\sigma(p_2)|= \frac{\n+1}{2}$.}
    					    \begin{itemize}
    						\item [+] {\underline{$|\sigma(p_2)|\le \frac{\n+1}{2}-1$:} By assumption, $\mathsf{F}(x) < b$ for $x\ge \frac{2}{\n-1}$. 
    							Therefore, $\mathsf{F}(x) < b$ for $x\ge \frac{1}{|\sigma(p_2)|}$.
    							Lemma~\ref{frequent lemma} implies that $p_1$ is not an {\sf F}-best response to $p_2$.
    							A contradiction.}
    						\item [+] \underline{$|\sigma(p_2)| = \frac{\n+1}{2}$:} By assumption, $\mathsf{F}\left(\frac{2}{\n+1}\right)\ne b$.
    							If $\mathsf{F}\left(\frac{2}{\n+1}\right) < b$, then Lemma~\ref{frequent lemma} implies that $p_1$ is not
    							an {\sf F}-best response to $p_2$.
    							A contradiction.  \cg{For $\mathsf{F}\left(\frac{2}{\n+1}\right) > b$,  the {\sf WEEP} implies
    								that $p_1(i)=\frac{2}{\n}$ for all $i\in \sigma(p_1)$. Thus $\mathsf{V}_2(p_1,p_2)=\mathsf{F}\left(\frac{2}{\n}\right)>b$
    								and hence player 2 can improve by
    							using some strategy $j\notin\sigma(p_2)$. A contradiction.}
    					\end{itemize}
    				\end{itemize} 
    				\item [{\sf (1.2.2.2)}] \underline{$0\in\sigma(p_2)$:} Then,
    				$|\sigma(p_1)\setminus\{\n\}| + |\sigma(p_2)\setminus\{\n\}| \le \n$, which implies,
    				$|\sigma(p_1)| + |\sigma(p_2)| \le \n+2$, and hence
    				$|\sigma(p_2)|\le \lfloor\frac{\n+2}{2}\rfloor$. 
    				{Unlike the previous cases, we do not get the optimal result
    				by just applying Lemma~\ref{frequent lemma}. Instead, we will show,
    				by {using} the method in the proof of Lemma~\ref{frequent lemma}, that $p_1$ is not an {\sf F}-best response to $p_2$
    				if $\mathsf{F}(x)<b$ for $x\ge \frac{1}{|\sigma(p_2)|-1}$, deriving a contradiction.} 
    				
    				{For every $j\in\{0,\ldots,\n\}$ there is exactly
    				one strategy $i\in\{0,\ldots,\n-1\}$ with $\alpha_{ij}=a$,
    				call it $i(j)$.
    				Then, for every strategy $j\in \sigma(p_2),~i(j)\in \sigma(p_1)$.
    				It is $i(0)=i(\n)=0$ and $i(j_1) \ne i(j_2)$ for $j_1\ne j_2\ne 0$. Then, 
    				$V_1(p_1^{0},p_2)=\mathsf{F}(p_2(0)+p_2(\n))$ and 
    				for $j\in \sigma(p_2)\setminus \{0,\n\}$,   
    				$V_1(p_1^{i(j)},p_2)=\mathsf{F}(p_2(j))$. Then,
    				$p_2(0)+p_2(\n) \ge \frac{1}{|\sigma(p_2)|-1}$
    				or there exists some $j\in\sigma(p_2)\setminus \{0,\n\}$
    			with $p_2(j) \ge \frac{1}{|\sigma(p_2)|-1}$.
    			So, when $\mathsf{F}(x)<b$ for $x\ge \frac{1}{|\sigma(p_2)|-1}$,
    			player 1 can improve by switching either to strategy $0$
    			or to a strategy $i(j)$ with  $p_2(j) \ge \frac{1}{|\sigma(p_2)|-1}$. Recall that
    		$|\sigma(p_2)|\le \lfloor\frac{\n+2}{2}\rfloor$. So, player 1
    	can improve for $\mathsf{F}\left(\frac{1}{\lfloor\frac{\n}{2}\rfloor}\right)<b$. A contradiction.}
    				
    				{To better understand this case, we provide an example
    				with $\n=4$.\smallskip\\}
    				$ 		\left( \begin{array}{llllll}
    				(a, b)  & (b,a)   & (b,b)   &   (b,b) & (a,b)    \\
    				{\color{green} \fbox{(b,b)}}   & (a,b)   & {\color{green} \fbox{(b,a)}}   & (b,b)  & {\color{green} \fbox{(b,b)}}  \\
    				(b,b)   &  (b,b)     & (a,b) &  (b,a)  & (b,b)  \\
    				{\color{green} \fbox{(b,a)}}   & (b,b)    & {\color{green} \fbox{(b,b)}}   & (a,b)   & {\color{green} \fbox{(b,b)}} \\
    				{\color{green} \fbox{(b,b)}}   & (b,b)    & {\color{green} \fbox{(b,b)}}   & (b,b)   & {\color{green} \fbox{(b,a)}} \\
    				\end{array}
    				\right)$\smallskip\\
    				{Then, $\sigma(p_1)=\{1,3,4\},~\sigma(p_2)=\{0,2,4\}$, $p_2(0)+p_2(4)\ge \frac{1}{2}$ or $p_2(2)\ge \frac{1}{2}$. For $\mathsf{F}\left(\frac{1}{2}\right)<b$, player 1 can improve by switching to strategy $0$ or to strategy $2$.}			
    			\end{enumerate}
    		\end{enumerate} 
    	
	    \end{enumerate}
      \item
      \underline{$\n\notin\sigma(p_1)$, $\n \in \sigma (p_{2})$:}
      Then $(\sigma(p_1),\sigma(p_2)\setminus\{\n\})$ is 
      a $b$-block for player 2. We distinguish two cases:
      \begin{enumerate}
      \item [{\sf (2.1)}] \underline{$(\sigma(p_1),\sigma(p_2)\setminus\{\n\})$ is 
      	a $b$-double-block:}  
      From Lemma~\ref{lem:bblocks} we have that 
      $|\sigma(p_1)| + |\sigma(p_2)\setminus\{\n\}| \le \n-1$.
      Hence, $|\sigma(p_1)| + |\sigma(p_2)| \le \n$.
      Let $k\in[2]$ with $|\sigma(p_k)|\le |\sigma(p_{\bar{k}})|$.
      Then $|\sigma(p_k)|\le \lfloor \frac{\n}{2}\rfloor$.
      Following exactly the same reasoning as in case {\sf (1.1)}
      and using Lemma~\ref{frequent lemma}, we conclude that 
      $p_{\bar{k}}$ is not an {\sf F}-best response to $p_k$.
      A contradiction. 
      \item [{\sf (2.2)}] 
      \underline{$(\sigma(p_1),\sigma(p_2)\setminus\{\n\})$ is 
      	not	a $b$-block for player 1:} Then, for every $i\in\sigma(p_1)$ there
      exists $j\in\sigma(p_2)$ with $\alpha_{ij}=a$, i.e.,  
      for $i\in \sigma(p_1)\setminus\{0\}$,
     we get that $i\in \sigma(p_2)$. This implies 
      that $\sigma(p_1) \setminus \{0\}\subseteq \sigma(p_2) \setminus \{\n\}$,
      and hence $|\sigma(p_1)|\le |\sigma(p_2)|$.
      Lemma~\ref{lem:bblocks} implies  
      $|\sigma(p_1)| + |\sigma(p_2)\setminus\{\n\}| \le \n$, which
      implies $|\sigma(p_1)| + |\sigma(p_2)| \le \n + 1$, 
      and hence $|\sigma(p_1)| \le \lfloor\frac{\n+1}{2}\rfloor$.
      Following exactly the same reasoning as in case {\sf (1.2.2.1)},
      we reach a contradiction.
      \remove{ 
      If  $\n$ is even (resp., odd), then $|\sigma(p_1)|\le \frac{\n}{2}$
      and by assumption $\mathsf{F}(x)<b$ for $x\ge \frac{2}{\n}$ 
      (resp., $|\sigma(p_1)|\le \frac{\n+1}{2}$
      and by assumption $\mathsf{F}(x)<b$ for $x\ge \frac{2}{\n+1}$). 
      In both cases,  $\mathsf{F}(x)<b$ for $x\ge \frac{1}{|\sigma(p_1)|}$.
      Lemma~\ref{frequent lemma} implies that $p_{2}$ is not an {\sf F}-best response to $p_1$,
      a contradiction.} 
      \end{enumerate}
	\end{enumerate}
{\noindent {\bf ``$\bm{\Rightarrow}$'':} We now
show that the conditions are necessary, that is, if they do not hold, then there is an {\sf F}-equilibrium.}
\cg{We first show the necessity of  ${\mathsf{F}}\left(\frac{1}
	{\n}\right) > b$. If ${\mathsf{F}}\left(\frac{1}
	{\n}\right) \le b$ then $\mathsf{C}_{\n}$ has an {\sf F}-equilibrium $\langle p_1,p_2\rangle$ with}
	\begin{itemize}
		\item [] \cg{$\sigma(p_1)=\{i\in \{0,\ldots,\n-1\}\},~p_1(i)=\frac{1}{m}$ for all $i\in\sigma(p_1)$}\smallskip\\
			\cg{$\sigma(p_2)=\{j\in \{0,\ldots,\n-1\}\},~p_2(j)=\frac{1}{m}$ for all $j\in\sigma(p_2)$}\smallskip\\
			\cg{$|\sigma(p_1)|=|\sigma(p_2)| =\n$, and {\bf p} is uniform on $\sigma(p_k)$ for $k\in[2]$ (this is the case
				$\n\notin \sigma(p_1)$ and $\n\notin \sigma(p_2)$).}
	\end{itemize}		
\cg{Then, given that ${\mathsf{F}}\left(\frac{1}
	{\n}\right) > b$, the remaining conditions do not hold if and only if (there exists $\n$ even with 
	$\mathsf{F}\left(\frac{2}{\n}\right)\ge b$) or (there exists $\n$ odd with $\mathsf{F}\left(\frac{2}{\n+1}\right)= b$ or 
	$\mathsf{F}\left(\frac{2}{\n-1}\right)\ge b$).}
\begin{enumerate}[{\sf (1)}]
	\item For $\n$ even and $\mathsf{F}\left(\frac{2}{\n}\right) \ge b$, $\mathsf{C}_{\n}$ has an {\sf F}-equilibrium $\langle p_1,p_2\rangle$ with
	\begin{itemize}
		\item [] \cg{$\sigma(p_1)=\{\n\}\cup \{i~:~\frac{\n}{2}\le i \le \n-2\},$\smallskip\\
		$\sigma(p_2)=\{j~:~0\le j \le \frac{\n}{2}-1\},$\smallskip\\
		$|\sigma(p_1)|=|\sigma(p_2)| =\frac{\n}{2}$, and {\bf p} is uniform on $\sigma(p_k)$ for $k\in[2]$.}
	\end{itemize}
     	\cg{Observe that we are in case {\sf (1.1)}: $\n\in\sigma(p_1),~\n\notin\sigma(p_2)$, and $(\sigma(p_1)\setminus\{\n\},\sigma(p_2)\setminus\{\n\})$ is a $b$-double-block.}
     
     {\color{black}
     \noindent For example, for $\n=4$ and the above case, $\mathsf{C}_{4}~=$
     $ 		\left( \begin{array}{llllll}
     (a,b)  & (b,a) & (b,b) & (b,b) & (a,b) \\
     (b, b)  & (a,b) & (b,a)   & (b,b)   &   (b,b)   \\
     {\color{green} \fbox{(b,b)}}   &  {\color{green} \fbox{(b,b)}}  & (a,b)   & (b,a) & (b,b)    \\
     (b,a)   & (b,b)    & (b,b)  & (a,b)  & (b,b) \\
     {\color{green} \fbox{(b,b)}}   &  {\color{green} \fbox{(b,b)}} & (b,b)   &   (b,b) & (b,a)    \\
     \end{array}
     \right)\, .$}
	\item For $\n$ odd and $\mathsf{F}\left(\frac{2}{\n+1}\right) =b$, $\mathsf{C}_{\n}$ has an {\sf F}-equilibrium $\langle p_1,p_2\rangle$ with
	\begin{itemize}
		\item [] $\sigma(p_1)=\{i\in \{0,\ldots,\n\}~:~i~\text{odd}\},$\smallskip\\
		$\sigma(p_2)=\{\n\} \cup \{j\in 2,\ldots,\n-1\}~:~j~\text{even}\},$\smallskip\\
		$|\sigma(p_1)|=|\sigma(p_2)| =\frac{\n+1}{2}$, \cg{and {\bf p} is uniform on $\sigma(p_k)$ for $k\in[2]$.}
	\end{itemize}
	{Observe that we are in case {\sf (1.2.2.1)}: $\n\in\sigma(p_1),~\n\in\sigma(p_2),~\text{and}~0\notin\sigma(p_2)$.}
	
	\noindent For example, for $\n=5$ and the above case, $\mathsf{C}_{5}~=$
	$ 		\left( \begin{array}{llllll}
	(a,b)  & (b,a) & (b,b) & (b,b) & (b,b) & (a,b) \\
	(b, b)  & (a,b) & {\color{green} \fbox{(b,a)}}   & (b,b)   &   {\color{green} \fbox{(b,b)}}   &   {\color{green} \fbox{(b,b)}}  \\
	(b,b)   & (b,b)   & (a,b)   & (b,a) & (b,b)     & (b,b)    \\
	(b,b)   &  (b,b)   & {\color{green} \fbox{(b,b)}}     & (a,b) &  {\color{green} \fbox{(b,a)}} &   {\color{green} \fbox{(b,b)}}   \\
	(b,a)   & (b,b)    & (b,b)  & (b,b)  & (a,b) & (b,b)  \\
	(b,b)   &  (b,b)   & {\color{green} \fbox{(b,b)}}     & (b,b) &  {\color{green} \fbox{(b,b)}} &   {\color{green} \fbox{(b,a)}}   \\
	\end{array}
	\right)\, .$
	\item For $\n$ odd and $\mathsf{F}\left(\frac{2}{\n-1}\right) \ge b$, $\mathsf{C}_{\n}$ has an {\sf F}-equilibrium $\langle p_1,p_2\rangle$ with
	\begin{itemize}
		\item [] \cg{$\sigma(p_1)=\{\n\}\cup \{i~:~\frac{\n+1}{2}\le i \le \n-2\},$\smallskip\\
			$\sigma(p_2)=\{j~:~0\le j \le \frac{\n-3}{2}\},$\smallskip\\
			$|\sigma(p_1)|=|\sigma(p_2)| =\frac{\n-1}{2}$, and {\bf p} is uniform on $\sigma(p_k)$ for $k\in[2]$.}
	\end{itemize}
\cg{Observe that we are again in case {\sf (1.1)}: $\n\in\sigma(p_1),~\n\notin\sigma(p_2)$, and $(\sigma(p_1)\setminus\{\n\},\sigma(p_2)\setminus\{\n\})$ is a $b$-double-block.}

{\color{black}
\noindent For example, for $\n=5$ and the above case, $\mathsf{C}_{5}~=$
$ 		\left( \begin{array}{llllll}
(a,b)  & (b,a) & (b,b) & (b,b) & (b,b) & (a,b) \\
(b, b)  & (a,b) &  (b, a)  & (b,b)   &  (b, b)  &  (b, b)   \\
(b,b)   & (b,b)   & (a,b)   & (b,a) & (b,b)     & (b,b)    \\
{\color{green} \fbox{(b,b)}}   &  {\color{green} \fbox{(b,b)}} &  (b,b)      & (a,b) &  (b,a)   &    (b,b)    \\
(b,a)   & (b,b)    & (b,b)  & (b,b)  & (a,b) & (b,b)  \\
{\color{green} \fbox{(b,b)}}   &  {\color{green} \fbox{(b,b)}}  &  (b,b)     & (b,b) &  (b,b)   &   (b,a)     \\
\end{array}
\right)\, .$  }
\end{enumerate}
This completes the proof.\qed
\end{proof}


\noindent {Now, putting together the results of Theorems~\ref{2 players 2 values ppad hard} and \ref{grande resort}, we
	obtain a panorama on the (in)existence of {\sf F}-equilibria, summarized in the following theorem:}

\begin{theorem}
	\label{panorama}
	{For a unimodal valuation {\sf F} the following properties hold:}
	\begin{enumerate}[$(i)$]
		\item {If $x_0(\mathsf{F}) = 0$ then an {\sf F}-equilibrium exists for all 2-players, 2-values games
			and its computation is ${\mathcal{PPAD}}$-hard.}\label{pan-i}
		
		\item {If $x_0(\mathsf{F}) > 0$ and $\mathsf{F}\left(\frac{\textstyle 1}{\textstyle 2}\right)\ne b$,
			then there exists a normal 2-players, 2-values game without {\sf F}-equilibrium. }
		
	\end{enumerate}
\end{theorem}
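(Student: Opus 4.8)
The plan is to obtain both parts directly from the two results the statement combines, namely Theorem~\ref{2 players 2 values ppad hard} and Theorem~\ref{grande resort}. Part $(i)$ is precisely Theorem~\ref{2 players 2 values ppad hard}, so I would simply invoke it: when $x_0(\mathsf{F})=0$ the function $\mathsf{F}$ is strictly monotone decreasing, an {\sf F}-equilibrium always exists, and computing one is $\mathcal{PPAD}$-hard. For part $(ii)$ the entire task reduces to producing a single integer $\n\ge 2$ for which the game $\mathsf{C}_{\n}$ meets the non-existence criterion of Theorem~\ref{grande resort}; since every $\mathsf{C}_{\n}$ is a normal game, this yields the desired normal counterexample.

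First I would record the geometry of a unimodal $\mathsf{F}$ with $x_0(\mathsf{F})>0$. Because $\mathsf{F}$ is concave with $\mathsf{F}(0)=b$, a unique maximum at $x_0>0$, and $\mathsf{F}(1)=a<b$, it is strictly increasing on $[0,x_0]$ (whence $\mathsf{F}(x_0)>b$) and strictly decreasing on $[x_0,1]$. I would then split on the sign of $\mathsf{F}(\tfrac12)-b$, which is nonzero by hypothesis.

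If $\mathsf{F}(\tfrac12)>b$, I would take $\n=2$. The even branch of Theorem~\ref{grande resort} then requires exactly $\mathsf{F}(\tfrac1\n)=\mathsf{F}(\tfrac12)>b$ and $\mathsf{F}(x)<b$ for all $x\ge\tfrac2\n=1$; the first is the hypothesis and the second is just $\mathsf{F}(1)=a<b$. Hence $\mathsf{C}_2$ has no {\sf F}-equilibrium. If instead $\mathsf{F}(\tfrac12)<b$, I would first note $0<x_0<\tfrac12$: were $x_0\ge\tfrac12$, then $\mathsf{F}$ would be increasing on $[0,\tfrac12]$, forcing $\mathsf{F}(\tfrac12)>\mathsf{F}(0)=b$, a contradiction. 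Continuity of $\mathsf{F}$ on the open interval $(0,1)$ together with its strict monotonicity there then gives, via the intermediate value theorem, a unique crossing point $x^*\in(x_0,\tfrac12)$ with $\mathsf{F}(x^*)=b$, $\mathsf{F}(x)>b$ on $(0,x^*)$, and $\mathsf{F}(x)<b$ on $(x^*,1]$. For an even $\n$ the even branch of Theorem~\ref{grande resort} asks precisely that $1/\n<x^*<2/\n$; writing $\n=2k$ this means $\tfrac{1}{2x^*}<k<\tfrac{1}{x^*}$. Since $x^*<\tfrac12$, this open interval has length $\tfrac{1}{2x^*}>1$ and so contains an integer $k$, necessarily $k\ge 2$; the resulting even $\n=2k\ge 4$ makes $\mathsf{C}_{\n}$ a normal game with no {\sf F}-equilibrium.

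The one delicate point is the existence of the crossing value $x^*$: I must be sure that $\mathsf{F}$ descends strictly through $b$ at a point below $\tfrac12$, which rests on concavity supplying continuity on the open interval and on $x_0>0$ placing the peak strictly inside $[0,1]$. Everything else is bookkeeping---rewriting the threshold conditions of Theorem~\ref{grande resort} as the single requirement $1/\n<x^*<2/\n$ and applying the elementary fact that an open interval of length exceeding one contains an integer. It is worth remarking that the odd branch of Theorem~\ref{grande resort} is never invoked; the even branch alone disposes of both sign cases.
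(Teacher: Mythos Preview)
Your proof is correct and takes a genuinely different route from the paper's. The paper also reduces part $(ii)$ to Theorem~\ref{grande resort}, but instead of splitting on the sign of $\mathsf{F}(\tfrac12)-b$, it locates the crossing point $x_1$ (your $x^*$), picks the unique integer $\n\ge 2$ with $\tfrac1\n<x_1\le\tfrac1{\n-1}$, and then verifies the hypotheses of Theorem~\ref{grande resort} separately for even and odd~$\n$. The odd case is where the work lies: one must check the extra condition $\mathsf{F}\!\left(\tfrac{2}{\n+1}\right)\neq b$, which for $\n=3$ is exactly the hypothesis $\mathsf{F}(\tfrac12)\neq b$, and for $\n\ge 5$ follows because $\tfrac{2}{\n+1}>\tfrac{1}{\n-1}\ge x_1$.

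Your argument sidesteps the odd branch entirely. By observing that the even-case requirement is simply $\tfrac1\n<x^*<\tfrac2\n$ and that for $\n=2k$ the admissible $k$ range $(\tfrac{1}{2x^*},\tfrac{1}{x^*})$ has length exceeding~$1$ whenever $x^*<\tfrac12$, you guarantee an even $\n\ge 4$ that works, handling the case $\mathsf{F}(\tfrac12)>b$ separately with $\n=2$. This is cleaner: it never touches condition $(ii)$ of Theorem~\ref{grande resort}, and in particular it makes transparent why the hypothesis $\mathsf{F}(\tfrac12)\neq b$ is needed only at the single value $\n=2$ (or rather, only to separate the two sign cases). The paper's approach, by contrast, shows that the ``nearest'' $\mathsf{C}_\n$ always works regardless of parity, which is marginally more informative about the structure of the counterexample family but costs the odd-case verification.
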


\begin{proof}
	Property~\ref{pan-i} follows directly from Theorem~\ref{2 players 2 values ppad hard}. 
		{Now consider the case $x_0(\mathsf{F}) > 0$ and $\mathsf{F}\left(\frac{1}{2}\right)\ne b$. 
		Then, $x_1=x_1(\mathsf{F})$ exists with $\mathsf{F}(x_1)=b$.}
		{Choose $\n\in\mathbb{N}$ with 
			$\frac{1}{\n} < x_1(\mathsf{F}) \le \frac{1}{\n-1}$. Then,
			$\mathsf{F}\left(\frac{1}{\n}\right) > b$,
			$\mathsf{F}\left(\frac{1}{\n-1}\right) \le b$
			and $\mathsf{F}(x)$ decreases strictly monotone for 
			$x\in \left[\frac{1}{\n-1},1\right]$. We distinguish between $\n$ even and $\n$ odd.}
			\begin{itemize}
				\item If $\n\ge 4$ is even, then $\frac{2}{\n}>\frac{1}{\n-1}$ and hence 
				$\mathsf{F}\left(\frac{2}{\n}\right) < \mathsf{F}\left(\frac{1}{\n-1}\right) \le b$. \cg{For $\n=2$, we have 
				$\mathsf{F}\left(\frac{2}{\n}\right)=\mathsf{F}(1) = a < b$.} 
				\item If $\n\ge 5$ is odd, then 	$\mathsf{F}\left(\frac{2}{\n-1}\right) < \mathsf{F}\left(\frac{1}{\n-1}\right) \leq b.$
				Furthermore, $\mathsf{F}\left(\frac{2}{\n+1}\right)\ne b$ (otherwise, $x_1(\mathsf{F})=\frac{2}{\n+1}>\frac{1}{\n-1}$, contradicting
				the choice of $\n$ s.t. $x_1(\mathsf{F}) \le \frac{1}{\n-1}$). \cg{For $\n=3$ we have $\mathsf{F}\left(\frac{2}{\n-1}\right)=\mathsf{F}(1) = a< b$; furthermore, $\frac{2}{\n+1} = \frac{1}{2}$, and by assumption, $\mathsf{F}\left(\frac{1}{2}\right)\ne b$.}  
			\end{itemize}
		{So, for all $\n\ge 2$, the conditions of Theorem~\ref{grande resort} hold,  and hence the game 
		$\mathsf{C}_m$ has no {\sf F}-equilibrium.}~\qed 
	\end{proof}

\noindent\cg{Theorem~\ref{panorama} leaves open the case on whether there exists, or not, an {\sf F}-equilibrium when
	$\mathsf{F}\left(\frac{1}{2}\right) = b$. This is the subject of Section~\ref{sec:existence of equilibria}.}

\section{Existence of Equilibria} 
\label{sec:existence of equilibria}

\cg{In this section we show the existence of equilibria for a 2-players,
2-values game {\sf G} when {\sf G} is a normal game with $\mathsf{F}\left(\frac{1}{2}\right) = b$ (Section~\ref{equal b})
and when players have three strategies (Section~\ref{three strategies}), complementing the investigation of the previous section.}\vspace{-1em}
}

\subsection{Normal Games with $\mathbf{\mathsf{F}\left(\frac{1}{2}\right)=b}$} 
	\label{equal b}                                      

We now consider normal games. Specifically, in this section $\mathsf{G}$ is a normal 2-players game with two values $a,b,~a<b$.
Furthermore, we assume that 
${{\mathsf{F}}}\left( \frac{1}{2}\right)=b$.

Recall that in a normal game, for each row $i$ there is exactly one column $j$ with $\beta_{ij}=a$, and for each column $j$ there is exactly one row $i$ with $\alpha_{ij}=a$.
For convenience of presentation, in this section we define the set of strategies over the set 
$[n]=\{1,\ldots,n\}$. Then,  
for $i\in[n]$ 
	 let $col(i)$ be the uniquely determined column with $\beta_{icol(i)}=a$.
For $j\in[n]$ let $row(j)$ be the uniquely determined row with $\alpha_{row(j)j}=a$. 
Set:  \vspace{-.2em}
$$C = \{ col(i)~:~i\in[n]  \} \text{ and } R = \{ row(j)~:~j\in[n]  \}.\vspace{-.2em}$$

Note that in a normal game,  always $|C|\ge 2,~|R|\ge 2$ holds.  
To see this, assume that $|C|=1$; then $\exists j~\forall i: \gamma(i,j)=(b,a)$. But in a normal game $\gamma(row(j),j)=(a,b)$, a contradiction. In the same way $|R|\ge 2$ follows.  

In the remainder of the section we will show that $\mathsf{G}$ 
always has an ${{\mathsf{F}}}$-equilibrium whose support sets both have size 2, and that this equilibrium can be computed in linear time, with respect to the input of the problem (Theorem~\ref{thm:winning-pair} and Corollary~\ref{cor:final-result}).

For the complexity consideration we use the condensed input form where the game is given
by the the two arrays $Col$ and $Row$ with $Col(i)=col(i)$ and $Row(i)=row(i)$ for all 
$i\in [n]$. Using this input convention, the sets $C$ and $R$, represented as Boolean arrays,
can be computed in time $O(n)$. We will show that a {\em winning pair} (c.f. Definition~\ref{def:winpair}) can be found in time
$O(n)$. Note that if the input is given by the two matrices $(\alpha_{ij})_{1\leq i,j\leq n}$,
$(\beta_{ij})_{1\leq i,j\leq n}$, then the input needs $O(n^2)$ space. In this case, the arrays
$Col$ and $Row$ can be computed in time $O(n^2)$ and using afterwards our algorithm leads
to an algorithm for computing a winning pair in time $O(n^2)$. Since the input is of size
$O(n^2)$, this is also a linear time algorithm. We now define the notion of a winning pair.\vspace{-1em}

\begin{center}
	\fbox{
		\begin{minipage}{6in}
			\begin{definition}[Winning Pair]
				\label{def:winpair}
				A pair $(\sigma_1,\sigma_2)$ with $\sigma_1,\sigma_2 \subseteq [n]$, $|\sigma_1|=|\sigma_2|=2$,\break $\sigma_1=\{i_1,i_2\},\sigma_2=\{j_1,j_2\}$ is called
				a {\it winning pair} if the three 
				following conditions are fulfilled:
				
				\begin{enumerate}[(i)]

					\item $\left( 
					\begin{array}{ll}
					\alpha_{i_1j_1} & \alpha_{i_1j_2}  \\
					\alpha_{i_2j_1} & \alpha_{i_2j_2}  
					\end{array}
					\right)$ and 
					$\left( 
					\begin{array}{ll}
					\beta_{i_1j_1} & \beta_{i_1j_2}  \\
					\beta_{i_2j_1} & \beta_{i_2j_2}  
					\end{array}
					\right)$ have the form
					$\left( 
					\begin{array}{ll}
					a & b  \\
					b & a  
					\end{array}
					\right)$ or 
					$\left( 
					\begin{array}{ll}
					b & a  \\
					a & b  
					\end{array}
					\right)$  \\ or $\left( 
					\begin{array}{ll}
					b & b  \\
					b & b  
					\end{array}
					\right)$

					\item $row(j_1) \ne row(j_2)$

					\item $col(i_1) \ne col(i_2)$

				\end{enumerate}
				
			\end{definition}  
		\end{minipage}
	}
\end{center}

\noindent We first show:

\begin{lemma}
	\label{lem:psFequil}
	Let  $(\sigma_1,\sigma_2)$, $\sigma_1=\{i_1,i_2\},\sigma_2=\{j_1,j_2\}$ be a
	winning pair and define a strategy vector $\langle p_1,p_2\rangle$ 
	by $p_{1,i_1} = p_{1,i_2} = 1/2$, $p_{2,j_1} = p_{2,j_2} = 1/2$.
	Then, $\langle p_1,p_2\rangle$ is an  ${{\mathsf{F}}}$-equilibrium.
\end{lemma}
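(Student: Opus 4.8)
The plan is to show that at the profile $\langle p_1,p_2\rangle$ each player incurs cost exactly $b$, and that no unilateral deviation can bring either player strictly below $b$. The backbone is the shape of $\mathsf{F}$ forced by the standing hypothesis: since $\mathsf{F}$ is concave with $\mathsf{F}(0)=b$ (from the one-parameter definition) and $\mathsf{F}\!\left(\tfrac12\right)=b$ (the section assumption), $\mathsf{F}$ lies above the chord joining $(0,b)$ and $\left(\tfrac12,b\right)$, which is the constant line $b$; hence $\mathsf{F}(x)\ge b$ for every $x\in[0,\tfrac12]$. I will use this fact to convert the statement ``the $a$-probability stays at most $\tfrac12$'' into ``the cost is at least $b$''.

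First I would pin down the value at the profile. Writing $x_k=x_k(p_1,p_2)$, the quantity $x_1$ equals the number of $a$-entries of the $2\times2$ block $(\alpha_{ij})_{i\in\sigma_1,\,j\in\sigma_2}$ divided by $4$; by Condition $(i)$ that block has either zero or two $a$-entries, so $x_1\in\{0,\tfrac12\}$ and therefore $\mathsf{V}_1(p_1,p_2)=\mathsf{F}(x_1)=b$, and symmetrically $\mathsf{V}_2(p_1,p_2)=b$. This is precisely where Condition $(i)$ is used: it forces the candidate equilibrium value to be the minimum possible value $b$.

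Next comes the best-response check for player $1$. For a pure strategy $i$, $x_1(p_1^{i},p_2)=\tfrac12\big(\mathbf{1}[\alpha_{ij_1}=a]+\mathbf{1}[\alpha_{ij_2}=a]\big)\in\{0,\tfrac12,1\}$, and it equals $1$ exactly when $\alpha_{ij_1}=\alpha_{ij_2}=a$, i.e.\ when $i=row(j_1)=row(j_2)$. Condition $(ii)$, namely $row(j_1)\ne row(j_2)$, rules this out, so $x_1(p_1^{i},p_2)\le\tfrac12$ for every $i$. Since $x_k$ is linear in each argument (as recorded just before Definition~\ref{onedimensional valuation definition}), for any mixed $q_1$ we have $x_1(q_1,p_2)=\sum_i q_1(i)\,x_1(p_1^{i},p_2)\le\tfrac12$; combined with $\mathsf{F}\ge b$ on $[0,\tfrac12]$ this yields $\mathsf{V}_1(q_1,p_2)=\mathsf{F}(x_1(q_1,p_2))\ge b=\mathsf{V}_1(p_1,p_2)$, so $p_1$ is an $\mathsf{F}$-best response to $p_2$. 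Player $2$ is treated identically, with Condition $(iii)$, $col(i_1)\ne col(i_2)$, preventing $x_2(p_1,p_2^{\,j})$ from reaching $1$ and hence keeping every $x_2(p_1,q_2)\le\tfrac12$.

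The main obstacle, and the reason the three winning-pair conditions are calibrated exactly as they are, is guaranteeing the best-response property against \emph{all} mixed deviations rather than only pure ones. The clean way around it is to first bound the $a$-probability $x_k$ by $\tfrac12$ on every pure strategy and then invoke linearity of $x_k$ to propagate the bound to arbitrary mixtures, after which concavity of $\mathsf{F}$ (the inequality $\mathsf{F}\ge b$ on $[0,\tfrac12]$) closes the argument without ever needing to evaluate $\mathsf{F}$ in the region $(\tfrac12,1]$ where it falls below $b$. In this scheme Conditions $(ii)$ and $(iii)$ are precisely what keep every deviation's $x_k$ out of that bad region, while Condition $(i)$ is what forces the profile value itself to equal $b$, so that $\langle p_1,p_2\rangle$ genuinely attains the minimum and is therefore an $\mathsf{F}$-equilibrium.
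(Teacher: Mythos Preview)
Your proof is correct and follows essentially the same route as the paper: use Condition~(i) to see that the profile value is $b$, then use Conditions~(ii) and~(iii) to rule out any row (respectively column) hitting two $a$'s against $\sigma_2$ (respectively $\sigma_1$), so every pure deviation yields value $b$. The paper's version is terser---it simply records ${\mathsf V}_1(p_1^{i},p_2)=b$ for every pure $i$ and stops---whereas you additionally spell out the passage to mixed deviations via linearity of $x_k$ together with the concavity bound $\mathsf{F}\ge b$ on $[0,\tfrac12]$; this extra step is a welcome clarification but not a different argument.
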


\begin{proof}
	By assumption, ${{\mathsf{F}}}\left( \frac{1}{2}\right)=b$
	and therefore $\mathsf{V}_1(p_1,p_2) = \mathsf{V}_2(p_1,p_2) = b$.
	
	\noindent If player 1 chooses an arbitrary pure strategy $i$, then, due to condition 
	$(ii)$, player 1 has two $b$'s or one $a$ and one $b$.  
	So $\mathsf{V}_1(p^{i}_1,p_2) = b$ for all $i\in[n]$. Hence, player 1 cannot improve by
	choosing strategy $i$. Likewise, due to condition $(iii)$, player 2 cannot improve
	by choosing an arbitrary pure strategy $j$.\qed
\end{proof}

\noindent In the following lemmas we will show results about the distribution of $(a,b)$ entries and
$(b,a)$ entries in the case where there exists no winning pair. 

\begin{lemma}
	\label{lem:s1xs2-winpair}
	Let the two pairs $\sigma_1=\{i_1,i_2\},\sigma_2=\{j_1,j_2\}$ fulfill the following properties:
	
	\begin{enumerate}[(i)]
		
		\item $\{i_1,i_2\} \cap \{row(j_1),row(j_2)\} = \emptyset$\\
		$\{j_1,j_2\} \cap \{col(i_1),col(i_2)\} = \emptyset$
		
		\item $row(j_1) \ne row(j_2)$
		
		\item $col(i_1) \ne col(i_2)$
		
	\end{enumerate}	
	
	\noindent Then, $\sigma_1 \times \sigma_2$ is a winning pair.
\end{lemma}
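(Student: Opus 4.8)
The plan is to verify directly the three defining conditions of a winning pair (Definition~\ref{def:winpair}), exploiting the fact that in a normal game the $a$-entries are characterized exactly: $\alpha_{ij}=a$ if and only if $i=row(j)$, and $\beta_{ij}=a$ if and only if $j=col(i)$ (this is just the uniqueness statement defining $row$ and $col$). Conditions $(ii)$ and $(iii)$ of the winning-pair definition coincide verbatim with hypotheses $(ii)$ and $(iii)$ of the lemma, so the only substantive thing to check is condition $(i)$. I would first record that hypotheses $(ii)$ and $(iii)$ already force $j_1\ne j_2$ and $i_1\ne i_2$ (since $row$ and $col$ are functions, $row(j_1)\ne row(j_2)$ cannot happen if $j_1=j_2$, and similarly for $col$); hence $|\sigma_1|=|\sigma_2|=2$, as required by the definition.

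For condition $(i)$, I would show that both $2\times 2$ submatrices are the all-$b$ matrix, i.e.\ the third admissible form. Fix any $k,\ell\in\{1,2\}$. If $\alpha_{i_k j_\ell}=a$ held, the characterization would give $i_k=row(j_\ell)$, contradicting the first part of hypothesis $(i)$, namely $\{i_1,i_2\}\cap\{row(j_1),row(j_2)\}=\emptyset$; hence $\alpha_{i_k j_\ell}=b$ for all $k,\ell$. Symmetrically, $\beta_{i_k j_\ell}=a$ would force $j_\ell=col(i_k)$, contradicting the second part of hypothesis $(i)$, namely $\{j_1,j_2\}\cap\{col(i_1),col(i_2)\}=\emptyset$; hence $\beta_{i_k j_\ell}=b$ for all $k,\ell$. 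Thus the $\alpha$-submatrix and the $\beta$-submatrix both equal $\left(\begin{array}{ll} b & b \\ b & b \end{array}\right)$, so condition $(i)$ is satisfied.

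Combining this with conditions $(ii)$ and $(iii)$, which are inherited directly from the hypotheses, shows that $(\sigma_1,\sigma_2)$ is a winning pair. There is no genuine obstacle in this argument: the lemma is essentially a translation of the two disjointness hypotheses into the all-$b$ case of a winning pair, via the normal-game characterization of $a$-entries. The only point requiring care is to invoke the \emph{if-and-only-if} direction of that characterization, so that emptiness of the intersections genuinely rules out \emph{every} $a$-entry in the relevant submatrices rather than merely some of them.
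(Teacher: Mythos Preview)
Your proof is correct and follows essentially the same approach as the paper: both reduce to showing the $2\times 2$ submatrices are all-$b$ by using the uniqueness of $row(j)$ and $col(i)$ together with the disjointness hypothesis $(i)$. Your version is in fact slightly more careful, since you explicitly verify $|\sigma_1|=|\sigma_2|=2$ from hypotheses $(ii)$ and $(iii)$, a point the paper leaves implicit.
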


\begin{proof}
	Observe that properties $(ii)$ and $(iii)$ are equal to 
	the conditions $(ii)$ and $(iii)$ of Definition~\ref{def:winpair}. 
	Therefore, only condition $(i)$ has to be verified.
	
	\begin{itemize}
		
		\item $\alpha_{row(j_v)j_v} = a$ for $v\in\{1,2\}$ and normality
		implies that $\alpha_{ij_v}=b$ for all $i\ne row(j_v)$. So the first 
		line of property $(i)$ implies that $\alpha_{i_\mu j_v} = b$ for $\mu,v \in \{1,2\}$.
		
		\item  $\beta_{i_v col(i_v)} = a$ for $v\in\{1,2\}$ and normality
		implies that $\beta_{i_v j}=b$ for all $j\ne col(i_v)$. So the second 
		line of property $(i)$ implies that $\beta_{i_\mu j_v} = b$ for $\mu,v \in \{1,2\}$.
	\end{itemize}
	
	\noindent The proof is demonstrated in Figure~\ref{fig:proofL2}.\qed
\end{proof}

\begin{figure}[h]
	%
	%
	\setlength{\unitlength}{2000sp}%
	\begingroup\makeatletter\ifx\SetFigFont\undefined%
	\gdef\SetFigFont#1#2#3#4#5{%
		\reset@font\fontsize{#1}{#2pt}%
		\fontfamily{#3}\fontseries{#4}\fontshape{#5}%
		\selectfont}%
	\fi\endgroup%
	\begin{picture}(6929,5078)(827,-4913)
	{\color[rgb]{0,0,0}\put(1426,-136){\line( 0,-1){4755}}
	}%
	{\color[rgb]{0,0,0}\put(2378,-129){\line( 0,-1){4755}}
	}%
	{\color[rgb]{0,0,0}\put(3323,-129){\line( 0,-1){4755}}
	}%
	{\color[rgb]{0,0,0}\put(4268,-129){\line( 0,-1){4755}}
	}%
	{\color[rgb]{0,0,0}\put(5213,-114){\line( 0,-1){4755}}
	}%
	{\color[rgb]{0,0,0}\put(961,-571){\line( 1, 0){6615}}
	}%
	{\color[rgb]{0,0,0}\put(954,-1531){\line( 1, 0){6615}}
	}%
	{\color[rgb]{0,0,0}\put(954,-2476){\line( 1, 0){6615}}
	}%
	{\color[rgb]{0,0,0}\put(939,-3436){\line( 1, 0){6615}}
	}%
	{\color[rgb]{0,0,0}\put(939,-4381){\line( 1, 0){6615}}
	}%
	{\color[rgb]{0,1,0}\put(871,-4651){\dashbox{150}(2130,1710){}}
	}%
	\put(7741,-631){\makebox(0,0)[lb]{\smash{{\SetFigFont{12}{14.4}{\rmdefault}{\mddefault}{\updefault}{\color[rgb]{0,1,0}$row(j_1)$}%
	}}}}
	\put(7696,-1569){\makebox(0,0)[lb]{\smash{{\SetFigFont{12}{14.4}{\rmdefault}{\mddefault}{\updefault}{\color[rgb]{0,1,0}$row(j_2)$}%
	}}}}
	\put(7636,-3466){\makebox(0,0)[lb]{\smash{{\SetFigFont{12}{14.4}{\rmdefault}{\mddefault}{\updefault}{\color[rgb]{0,1,0}$i_1$}%
	}}}}
	\put(7614,-4419){\makebox(0,0)[lb]{\smash{{\SetFigFont{12}{14.4}{\rmdefault}{\mddefault}{\updefault}{\color[rgb]{0,1,0}$i_2$}%
	}}}}
	\put(1276, -9){\makebox(0,0)[lb]{\smash{{\SetFigFont{12}{14.4}{\rmdefault}{\mddefault}{\updefault}{\color[rgb]{0,1,0}$j_1$}%
	}}}}
	\put(2254, -9){\makebox(0,0)[lb]{\smash{{\SetFigFont{12}{14.4}{\rmdefault}{\mddefault}{\updefault}{\color[rgb]{0,1,0}$j_2$}%
	}}}}
	\put(3811, -9){\makebox(0,0)[lb]{\smash{{\SetFigFont{12}{14.4}{\rmdefault}{\mddefault}{\updefault}{\color[rgb]{0,1,0}$col(i_1)$}%
	}}}}
	\put(4846,-9){\makebox(0,0)[lb]{\smash{{\SetFigFont{12}{14.4}{\rmdefault}{\mddefault}{\updefault}{\color[rgb]{0,1,0}$col(i_2)$}%
	}}}}
	\put(1050,-466){\makebox(0,0)[lb]{\smash{{\SetFigFont{12}{14.4}{\rmdefault}{\mddefault}{\updefault}{\color[rgb]{1,0,0}$(a,b)$}%
	}}}}
	\put(2000,-1449){\makebox(0,0)[lb]{\smash{{\SetFigFont{12}{14.4}{\rmdefault}{\mddefault}{\updefault}{\color[rgb]{1,0,0}$(a,b)$}%
	}}}}
	\put(3925,-3355){\makebox(0,0)[lb]{\smash{{\SetFigFont{12}{14.4}{\rmdefault}{\mddefault}{\updefault}{\color[rgb]{1,0,0}$(b,a)$}%
	}}}}
	\put(4855,-4284){\makebox(0,0)[lb]{\smash{{\SetFigFont{12}{14.4}{\rmdefault}{\mddefault}{\updefault}{\color[rgb]{1,0,0}$(b,a)$}%
	}}}}
	\put(1060,-4284){\makebox(0,0)[lb]{\smash{{\SetFigFont{12}{14.4}{\rmdefault}{\mddefault}{\updefault}{\color[rgb]{1,0,0}$(b,b)$}%
	}}}}
	\put(2020,-4284){\makebox(0,0)[lb]{\smash{{\SetFigFont{12}{14.4}{\rmdefault}{\mddefault}{\updefault}{\color[rgb]{1,0,0}$(b,b)$}%
	}}}}
	\put(1050,-3355){\makebox(0,0)[lb]{\smash{{\SetFigFont{12}{14.4}{\rmdefault}{\mddefault}{\updefault}{\color[rgb]{1,0,0}$(b,b)$}%
	}}}}
	\put(2020,-3355){\makebox(0,0)[lb]{\smash{{\SetFigFont{12}{14.4}{\rmdefault}{\mddefault}{\updefault}{\color[rgb]{1,0,0}$(b,b)$}%
	}}}}
	\end{picture}%
	\caption{Demonstrating the proof of Lemma~\ref{lem:s1xs2-winpair}.}
	\label{fig:proofL2}
\end{figure}

\begin{lemma}
	\label{lem:alg-inout}
	
	There exists an algorithm running in time $O(n)$ with the following input-output
	relation.
	
	\begin{enumerate}[(i)]
		
		\item input: $j_1,j_2 \in [n]$ with $row(j_1) \ne row(j_2)$\\
		output:
		
		\begin{enumerate}
			
			\item[(i/1)] column $j\in [n]$ such that 
			$\{col(i)~:~ i \ne row(j_1), i \ne row(j_2)\} \subset \{j_1,j_2,j\}$, or
			
			\item[(i/2)] rows $i_1,i_2\in [n]$, $i_1\ne i_2$, such that
			$\{i_1,i_2\} \times \{j_1,j_2\}$ is a winning pair. 
			
		\end{enumerate}		
		
		\item input: $i_1,i_2 \in [n]$ with $col(i_1) \ne col(i_2)$\\
		output:
		
		\begin{enumerate}
			
			\item[(ii/1)] row $i\in [n]$ such that 
			$\{row(j)~:~ j \ne col(i_1), j \ne col(i_2)\} \subset \{i_1,i_2,i\}$, or
			
			\item[(ii/2)] columns $j_1,j_2\in [n]$, $j_1\ne j_2$, such that
			$\{i_1,i_2\} \times \{j_1,j_2\}$ is a winning pair. 
			
		\end{enumerate}		
		
	\end{enumerate}	
	
\end{lemma}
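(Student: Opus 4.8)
The plan is to give a single linear-time scan and reduce the nontrivial output case to Lemma~\ref{lem:s1xs2-winpair}. By the symmetry of the two parts, I would only describe the algorithm for part $(i)$; part $(ii)$ is obtained by interchanging the roles of rows and columns, i.e.\ by swapping the arrays $Row$ and $Col$.

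Given $j_1, j_2$ with $row(j_1) \neq row(j_2)$, set $I = [n] \setminus \{row(j_1), row(j_2)\}$ and consider the values $\{col(i) : i \in I\}$. The key dichotomy is whether at most one of these values lies outside $\{j_1, j_2\}$, or at least two distinct such values occur. First I would scan $i$ from $1$ to $n$, skipping $row(j_1)$ and $row(j_2)$, and look up $col(i)$; using the array $Col$ each lookup is $O(1)$. I keep the first index $i_1 \in I$ encountered with $col(i_1) \notin \{j_1, j_2\}$ (if any). If during the scan a second index $i_2 \in I$ appears with $col(i_2) \notin \{j_1, j_2\}$ and $col(i_2) \neq col(i_1)$, I halt and output the rows $i_1, i_2$, claiming outcome (i/2). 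If the scan finishes without producing such a pair, I output outcome (i/1), taking $j := col(i_1)$ if some $i_1$ was recorded and an arbitrary column otherwise.

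Correctness of (i/1) is immediate: if no second distinct outside value was found, then every $col(i)$ with $i \in I$ lies in $\{j_1, j_2, j\}$, which is exactly the required containment $\{col(i) : i \neq row(j_1),\, i \neq row(j_2)\} \subseteq \{j_1, j_2, j\}$. For (i/2) I would invoke Lemma~\ref{lem:s1xs2-winpair} with the pair $\sigma_1 = \{i_1, i_2\}$, $\sigma_2 = \{j_1, j_2\}$: by construction $i_1, i_2 \notin \{row(j_1), row(j_2)\}$ and $col(i_1), col(i_2) \notin \{j_1, j_2\}$, which is property $(i)$ of that lemma; property $(ii)$ is the hypothesis $row(j_1) \neq row(j_2)$; and property $(iii)$, namely $col(i_1) \neq col(i_2)$, holds because the scan selected $i_2$ precisely when $col(i_2) \neq col(i_1)$. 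Hence $\{i_1, i_2\} \times \{j_1, j_2\}$ is a winning pair. The whole procedure is one pass over $[n]$ with $O(1)$ work per index, so it runs in $O(n)$ time.

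There is no serious obstacle here; the content is essentially combinatorial bookkeeping, and the only thing to get right is the case split guaranteeing that a failure to fall into (i/1) always yields two rows outside $\{row(j_1), row(j_2)\}$ whose $col$-values are distinct and avoid $\{j_1, j_2\}$, so that Lemma~\ref{lem:s1xs2-winpair} applies verbatim. I would just be careful to handle the degenerate subcase where no index of $I$ has $col(i) \notin \{j_1, j_2\}$ at all, in which case (i/1) holds with an arbitrary $j$.
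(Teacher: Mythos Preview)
Your proposal is correct and follows essentially the same approach as the paper: a single linear scan over the rows $i\in[n]\setminus\{row(j_1),row(j_2)\}$ to either certify that all $col(i)$ lie in $\{j_1,j_2,j\}$ for some $j$, or to extract two rows $i_1,i_2$ with distinct $col$-values outside $\{j_1,j_2\}$ and then invoke Lemma~\ref{lem:s1xs2-winpair}. Your write-up is in fact slightly more explicit than the paper's about the bookkeeping (maintaining the first outside value and the degenerate case where none exists), but the argument is the same.
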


\begin{proof}
	Due to the symmetry of rows and columns, it is sufficient to prove part $(i)$.
	
	\noindent We describe the algorithm, that starts with the input $j_1,j_2$ with 
	$row(j_1)\ne row(j_2)$:
	
	\begin{itemize}
		
		\item It checks $col(i)$, $i\in [n]$.
		
		\item If $\exists j\in [n]$ with $\{col(i)~:~ i \ne row(j_1), i \ne row(j_2)\} \subset \{j_1,j_2,j\}$ then it exists $(i/1)$ and outputs $j$. 
		
		\item Otherwise the algorithm has found some $A\subseteq [n]$ with $|A|=2$, $A\cap\{j_1,j_2\}=\emptyset$ and
		$A\subseteq \{i~:~i\ne~row(j_1),~i\ne~row(j_2)\}$. Take some $i_1,i_2$ with $A = \{col(i_1), col(i_2)\}$. Then $col(i_1)\ne col(i_2)$.
		Due to Lemma~\ref{lem:s1xs2-winpair}, $\{i_1,i_2\} \times \{j_1,j_2\}$ is a winning pair. Therefore, the algorithm 
		exits $(i/2)$ and outputs $i_1,i_2$.
		
	\end{itemize}	
	
	\noindent The algorithm makes its decision following just one scan of $col(i)$, $i\in [n]$. So, the time bound is $O(n$).\qed
\end{proof}

\begin{lemma}
	\label{lem:alg-n5}
	
	There exists an algorithm that computes a winning pair in time $O(n)$ provided that $n\ge 5$ and $(|R|\ge 4$ or $|C|\ge 4)$.
	
\end{lemma}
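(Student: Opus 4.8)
The plan is to reduce to the case $|C|\ge 4$, run the subroutine of Lemma~\ref{lem:alg-inout} once, and finish off the single degenerate configuration it leaves by a direct combinatorial construction, all in a constant number of $O(n)$ scans. I would first invoke the symmetry of normal games (Definition~\ref{normal game}), together with the symmetry between parts $(i)$ and $(ii)$ of Lemma~\ref{lem:alg-inout} and between conditions $(ii)$ and $(iii)$ of Definition~\ref{def:winpair}: transposing the bimatrix swaps the roles of $row$ and $col$ and of $C$ and $R$, and maps winning pairs to winning pairs. Hence it suffices to treat the case $|C|\ge 4$; the case $|R|\ge 4$ follows by running the same algorithm on the transposed game.

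Since $|R|\ge 2$ always holds, I would next find in $O(n)$ time two columns $j_1,j_2$ with $row(j_1)\ne row(j_2)$ and call part $(i)$ of Lemma~\ref{lem:alg-inout} on $(j_1,j_2)$. If it returns output $(i/2)$, it hands us rows $i_1,i_2$ with $\{i_1,i_2\}\times\{j_1,j_2\}$ a winning pair, and we are done (it defines an $\mathsf{F}$-equilibrium by Lemma~\ref{lem:psFequil}). Otherwise it returns $(i/1)$: a column $j$ with $\{col(i):i\ne row(j_1),\,i\ne row(j_2)\}\subseteq\{j_1,j_2,j\}$. Writing $r_1=row(j_1)$, $r_2=row(j_2)$, this certifies $C\subseteq\{j_1,j_2,j,\,col(r_1),\,col(r_2)\}$, so $|C|\le 5$. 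Because $|C|\ge 4$ while $|\{j_1,j_2,j\}|\le 3$, at least one of $col(r_1),col(r_2)$ is a column outside $\{j_1,j_2,j\}$; these ``surplus'' columns, whose only $col$-preimages are the two excluded rows $r_1,r_2$, are the witnesses I would use. Concretely, I would use the surplus columns to select a fresh pair of rows $i_1',i_2'$ with $col(i_1')\ne col(i_2')$, and, exploiting $n\ge 5$, two columns with distinct $row$-values avoiding $r_1,r_2$, arranged so that the avoidance hypotheses of Lemma~\ref{lem:s1xs2-winpair} hold; the resulting pair is then a winning pair by that lemma.

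The main obstacle is the fully degenerate residual configuration in which, after output $(i/1)$, not only the $col$-image but also the $row$-image is concentrated on very few values, so that there are too few columns with distinct $row$-values avoiding $\{r_1,r_2\}$ to assemble an all-$(b,b)$ block via Lemma~\ref{lem:s1xs2-winpair}. Handling this case is exactly where the hypotheses $n\ge 5$ and $|C|\ge 4$ (rather than $\ge 3$) are needed: a counting argument over the at most five columns of $C$ against the at least five rows forces enough distinct $row$-values to apply Lemma~\ref{lem:s1xs2-winpair}, or else exhibits directly an anti-diagonal winning pair of the second type permitted by Definition~\ref{def:winpair}$(i)$. I would isolate this case by a second targeted call to Lemma~\ref{lem:alg-inout}$(ii)$ followed by one final scan, so that the whole procedure performs only a constant number of $O(n)$ scans and therefore runs in time $O(n)$.
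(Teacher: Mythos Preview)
Your approach is genuinely different from the paper's. Rather than making one or two calls to Lemma~\ref{lem:alg-inout} and then patching the residual case, the paper (working symmetrically under $|R|\ge 4$) first selects four columns $j_1,\dots,j_4$ with pairwise distinct $row$-values and runs Lemma~\ref{lem:alg-inout}$(i)$ on all six $2$-subsets $\{j_\mu,j_\nu\}$. It then argues by contradiction: if all six calls return exit $(i/1)$, each with a witness column $j_{\{\mu,\nu\}}$, one picks (using $n\ge 5$) a row $i\notin\{row(j_1),\dots,row(j_4)\}$, intersects the constraints coming from complementary pairs $\{\mu,\nu\}$ and $\overline{\{\mu,\nu\}}$ to force all six witnesses to coincide with a single column $\hat{\jmath}$, and then shows $col(i)=\hat{\jmath}$ for every row $i$, giving $|C|=1$ and contradicting $|C|\ge 2$. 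The choice of four (rather than two or three) elements is exactly what makes the complementary-pair intersection argument go through cleanly.

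Your single-call route, by contrast, leaves the hardest part unspecified. After exit $(i/1)$ you correctly deduce $C\subseteq\{j_1,j_2,j,col(r_1),col(r_2)\}$, but the promised ``counting argument over the at most five columns of $C$ against the at least five rows'' does not by itself produce distinct $row$-values in the right places. For instance, when $|R|=2$ (perfectly possible even though $|C|\ge 4$), every admissible column pair $(j_1',j_2')$ satisfies $\{row(j_1'),row(j_2')\}=R=\{r_1,r_2\}$, which forces $i_1',i_2'\in[n]\setminus\{r_1,r_2\}$ in Lemma~\ref{lem:s1xs2-winpair}; hence $col(i_1'),col(i_2')\in\{j_1,j_2,j\}$, and whether two columns with distinct $row$-values survive outside $\{col(i_1'),col(i_2')\}$ depends on which two of $\{j_1,j_2,j\}$ are excluded and where the rare $row$-value sits. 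Your ``second targeted call'' to part~$(ii)$ is not given an input, can itself return exit $(ii/1)$, and the ``one final scan'' that is supposed to finish the job is not described. It may be possible to complete your argument with a careful case split, but as written the degenerate branch is a sketch, not a proof; the paper's six-call contradiction sidesteps all of this case analysis.
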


\begin{proof}
	Due to the symmetry of rows and columns, it is sufficient to describe the algorithm 
	in the case that $|R|\ge 4$. 
	
	\begin{itemize}
		
		\item The algorithm scans $row(j), j\in [n]$ and determines $j_v, v\in[4]$, 
		with $|\{row(j_v)~:~ v\in[n]\}|=4$. 
		\item Then it simulates the algorithm described in the proof of Lemma~\ref{lem:alg-inout} for the six $2$-sets $\{j_v,j_{\mu}\}, v\ne \mu, v,\mu\in [4]$.
		
	\end{itemize}
	Overall, the algorithm runs in time $O(n)$. In the remainder of the proof we will show
	that in the case $n\ge 5$, for at least one of the pairs $\{j_v,j_{\mu}\}$ exit $(i/2)$
	is reached, and thus a winning pair is found. 
	
	We will do so, by proof by contradiction: assume that $n\ge 5$ and for all six $2$-sets $\{j_v,j_{\mu}\}$ exit $(i/1)$ is reached. 
	
	Let $j_v, v\in[4]$, with $|\{row(j_v)~:~ v\in[4]\}|=4$. We apply Lemma~\ref{lem:alg-inout} 
	six times, i.e., for each pair $\{j_v,j_{\mu}\}, v\ne \mu, v,\mu\in [4]$. 
	Due to Lemma~\ref{lem:alg-inout}, there exist constructs $j_{\{v,\mu\}}$ 
	such that $col(i)\in \{j_v,j_{\mu},j_{\{v,\mu\}}\}$ for all $i \not\in \{row(j_{\mu}),row(j_v)\}$.
	
	First choose $i\not\in \{row(j_{\mu})~:~ \mu\in[4]\}$. To be able to do this, we need the
	assumption that $n\ge 5$. For each $2$-set $\{\mu,v\} \subset [4]$ we denote with 
	$\overline{\{\mu,v\}}$ the complement of $\{\mu,v\}$  in $[4]$. 
	Then, due to Lemma~\ref{lem:alg-inout}, 
	
	$$col(i)\in (\{j_k~:~ k\in \{\mu,v\}\} \cup \{j_{\{v,\mu\}}\}) \cap 
	(\{j_k~:~ k\in \overline{\{\mu,v\}}\} \cup \{j_{\overline{\{v,\mu\}}}\}).$$
		Since $\{\mu,v\} \cap \overline{\{\mu,v\}} = \emptyset$, this implies that 
	$col(i)=j_{\{v,\mu\}} = j_{\overline{\{v,\mu\}}}.$ Now, since the $2$-set
	$\{\mu,v\}\subset [4]$ were chosen arbitrarily, this shows that all the 
	$j_{\{v,\mu\}}, \{v,\mu\} \subset [4]$ have the same value. Let this value
	be $\hat{j}$, i.e., $col(i)=\hat{j}$ for $i\not\in \{row(j_{\mu})~:~ \mu\in [4]\}$.
	
	Now consider $i=row(j_{\rho})$ for some $\rho\in[4]$. Then information about $col(i)$ is
	given by Lemma~\ref{lem:alg-inout} using the 3 pairs $\{\mu,v\}\subset [4]$, 
	$\rho\not\in \{\mu,v\}$, i.e., for all $i\in [n]$, 
	$\displaystyle col(i)\in \bigcap_{\{\mu,v\}\subset [4]\setminus\{\rho\}} \{j_{\mu},j_v,\hat{j}\} =\hat{j},$ contradicting the fact that $|C|\ge 2$ (since the game is normal). \qed 
\end{proof}

\begin{lemma}
	\label{lem:alg-winpair}
	There exists an algorithm that computes a winning pair in time $O(n)$, provided that 
	$|\{col(i)~:~ i\not\in R\}| \ge 2$, and $|\{row(j)~:~ j\not\in C\}| \ge 2$.
\end{lemma}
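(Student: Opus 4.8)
The plan is to reduce directly to Lemma~\ref{lem:s1xs2-winpair}, whose three hypotheses are engineered to be satisfiable precisely under the two cardinality assumptions of the present statement. The guiding observation is that membership in $R$ and $C$ supplies exactly the disjointness needed for condition $(i)$ of Lemma~\ref{lem:s1xs2-winpair}: every value $row(j)$ lies in $R$ and every value $col(i)$ lies in $C$, so choosing the two row indices from outside $R$ and the two column indices from outside $C$ will automatically force the relevant intersections to be empty.

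Concretely, first I would use the hypothesis $|\{row(j) : j \notin C\}| \ge 2$ to select two columns $j_1, j_2$ with $j_1, j_2 \notin C$ and $row(j_1) \ne row(j_2)$; such a pair exists exactly because the set $\{row(j) : j \notin C\}$ contains at least two distinct values. Symmetrically, I would use $|\{col(i) : i \notin R\}| \ge 2$ to select two rows $i_1, i_2$ with $i_1, i_2 \notin R$ and $col(i_1) \ne col(i_2)$. Conditions $(ii)$ and $(iii)$ of Lemma~\ref{lem:s1xs2-winpair} then hold by construction.

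It remains to verify condition $(i)$ of Lemma~\ref{lem:s1xs2-winpair}. Since $row(j_1), row(j_2) \in R$ while $i_1, i_2 \notin R$, we obtain $\{i_1, i_2\} \cap \{row(j_1), row(j_2)\} = \emptyset$; and since $col(i_1), col(i_2) \in C$ while $j_1, j_2 \notin C$, we obtain $\{j_1, j_2\} \cap \{col(i_1), col(i_2)\} = \emptyset$. Hence Lemma~\ref{lem:s1xs2-winpair} applies and $\{i_1, i_2\} \times \{j_1, j_2\}$ is a winning pair.

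For the running time, recall that $C$ and $R$ can be produced as Boolean arrays in $O(n)$ time from the condensed input arrays $Col$ and $Row$. Each of the two selection steps is a single scan: to find $j_1, j_2$ I scan the columns $j \notin C$, remembering the first value $row(j)$ seen and stopping as soon as a second, distinct value appears, and the selection of $i_1, i_2$ is symmetric over the rows $i \notin R$. Each scan costs $O(n)$, so the whole algorithm runs in $O(n)$ time. I do not expect a genuine obstacle here: the only care required is the bookkeeping of the two dual membership relations ($i \notin R$ versus $row(j) \in R$, and $j \notin C$ versus $col(i) \in C$), which is exactly what renders the disjointness in condition $(i)$ automatic rather than something to be checked entry by entry.
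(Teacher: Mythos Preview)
Your proposal is correct and follows essentially the same approach as the paper: select $i_1,i_2\notin R$ with $col(i_1)\ne col(i_2)$ and $j_1,j_2\notin C$ with $row(j_1)\ne row(j_2)$ by a linear scan, observe that the disjointness conditions of Lemma~\ref{lem:s1xs2-winpair} are automatic because $row(\cdot)$ lands in $R$ and $col(\cdot)$ lands in $C$, and conclude. The paper's write-up is terser but the argument is the same.
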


\begin{proof}
	We describe the algorithm: it scans $col(i), row(i), 1\leq i\leq n$, and determines \break
	$i_1,i_2\not\in R$ with $col(i_1)\ne col(i_2)$ and $j_1,j_2\not\in C$ with $row(j_1)\ne row(j_2)$. This takes $O(n)$ time.
	Since $i_1,i_2\not\in R$, $\{i_1,i_2\} \cap \{row(j_1),row(j_2)\} = \emptyset$, and 
	since $j_1,j_2\not\in C$, $\{j_1,j_2\} \cap \{col(i_1),col(i_2)\} = \emptyset$.
	Therefore, due to Lemma~\ref{lem:s1xs2-winpair}, $\{i_1,i_2\} \times \{j_1,j_2\}$
	is a winning pair.\qed 
\end{proof}

{\color{black}
\begin{lemma}
	\label{lem:nis4}
	If $n=4$ and $((|C|=4$ and $|R|\le 3)$ or $(|R|=4$ and $|C|\le 3))$, then there exists a winning pair.
\end{lemma}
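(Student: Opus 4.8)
The plan is to reduce the statement to a clean combinatorial claim about a single fixed‑point‑free function on $\{1,\dots ,4\}$ and then settle that claim by a short analysis of its functional graph. First I would invoke the player‑symmetry built into Definition~\ref{def:winpair} (transposing and swapping the roles of $\alpha,\beta$ preserves all three admissible submatrix forms and interchanges conditions $(ii)$ and $(iii)$) to assume without loss of generality that $|C|=4$ and $|R|\le 3$. Since $n=4$, $|C|=4$ forces $col$ to be a bijection of $\{1,\dots ,4\}$, so after relabeling player~$2$'s strategies by $col^{-1}$ I may assume $col(i)=i$ for every $i$, whence $\beta_{ij}=a\iff i=j$. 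Writing $f:=row\circ col$ for the $row$‑function of the relabeled game, the absence of $(a,a)$‑entries is exactly fixed‑point‑freeness $f(i)\ne i$, and $|\mathrm{image}(f)|=|R|\le 3$ because $col$ is a bijection; moreover $\alpha_{ij}=a\iff i=f(j)$.

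With $col$ the identity, condition $(iii)$ is automatic for any two distinct rows, so the only work is checking which pairs give admissible $\alpha$‑ and $\beta$‑submatrices. Inspecting the three forms in Definition~\ref{def:winpair} I would record the reduction: for a row‑pair $I=\{i_1,i_2\}$ and a column‑pair $J=\{j_1,j_2\}$, the $\beta$‑submatrix is admissible iff $I\in\{J,J^{c}\}$, and, granting this, the $\alpha$‑submatrix is admissible together with $(ii)$ iff $f(j_1)\ne f(j_2)$ and $\{f(j_1),f(j_2)\}\in\{J,J^{c}\}$ (here $J^{c}=\{1,\dots ,4\}\setminus J$). Hence it suffices to produce a single $2$‑subset $J$ on which $f$ is injective with $f(J)\in\{J,J^{c}\}$; the matching $I$ may then be taken as $J$ or as $J^{c}$. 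The delicate point — and the main obstacle — is getting this overlap bookkeeping exactly right, since a pair $(I,J)$ sharing exactly one index produces a forbidden single‑$a$ submatrix; once the characterization is pinned down correctly the rest is immediate.

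The heart of the matter is then the combinatorial claim: every fixed‑point‑free $f:\{1,\dots ,4\}\to\{1,\dots ,4\}$ with $|\mathrm{image}(f)|\le 3$ admits such a $J$. I would prove it through the functional graph of $f$, in which each weakly connected component carries exactly one directed cycle. Either $f$ has a $2$‑cycle, or it has none; in the latter case every cycle has length $\ge 3$ (fixed‑point‑freeness excludes $1$‑cycles), and since four vertices host neither two vertex‑disjoint cycles of length $\ge 3$ nor a single $4$‑cycle (the latter would force $|\mathrm{image}(f)|=4$), the graph must be one $3$‑cycle with a single pendant vertex mapping into it.

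Finally I would exhibit the winning pair in each case and verify Definition~\ref{def:winpair}. If $f(a)=b,\ f(b)=a$ is a $2$‑cycle, take $I=J=\{a,b\}$; then the $\alpha$‑submatrix is $\left(\begin{smallmatrix}b&a\\ a&b\end{smallmatrix}\right)$ and the $\beta$‑submatrix is $\left(\begin{smallmatrix}a&b\\ b&a\end{smallmatrix}\right)$, both admissible, and $row(a)=b\ne a=row(b)$, so $(I,J)$ is winning. Otherwise let the $3$‑cycle sit on $\{p,q,r\}$ with pendant $d$ (so $d\notin\mathrm{image}(f)$) and set $t:=f(d)$; take $J=\{d,\,f(t)\}$ and $I=J^{c}$. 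A direct check, symmetric under rotating the $3$‑cycle, shows the two values $f(d),f(f(t))$ are distinct and equal to $J^{c}$, making the $\alpha$‑submatrix antidiagonal and the $\beta$‑submatrix all‑$b$, with $row(j_1)\ne row(j_2)$; hence $(I,J)$ is a winning pair. Translating back through the relabeling yields a winning pair for the original game, completing the proof.
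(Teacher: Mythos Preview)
Your proof is correct and takes a genuinely different route from the paper.  Both arguments begin identically: use the row/column symmetry to assume $|C|=4$, $|R|\le 3$, and relabel columns so that $col(i)=i$, i.e.\ $\beta_{ij}=a\iff i=j$.  From there the two diverge.

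The paper proceeds by brute enumeration: it introduces the $2\times 2$ diagonal blocks $D_i$, separates $|R|=2$ from $|R|=3$, and in the latter case lists (up to reordering) the four concrete $4\times 4$ bimatrices for which no $D_i$ is already a winning pair, exhibiting an explicit winning pair in each.  Your argument instead abstracts the remaining data to the single function $f=row$, observes that normality is exactly fixed-point-freeness of $f$ and that $|R|\le 3$ bounds its image, and then reduces the existence of a winning pair to finding a $2$-subset $J$ with $f$ injective on $J$ and $f(J)\in\{J,J^{c}\}$.  The functional-graph dichotomy (either a $2$-cycle exists, or the graph is a $3$-cycle with one pendant) then settles this in two lines.

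What each buys: the paper's enumeration is entirely elementary and doubles as a catalogue of the extremal matrices, but it is longer and somewhat fragile (indeed the displayed matrices in cases (1.1)/(1.2) appear to have their labels swapped).  Your reduction is cleaner, shows transparently \emph{why} the bound $|R|\le 3$ matters (it rules out the $4$-cycle, the one derangement of $[4]$ for which no such $J$ exists), and would scale more gracefully if one wanted analogous statements for other small $n$.  The key reduction step---pinning down that the $\beta$-block forces $I\in\{J,J^{c}\}$ and that, under condition $(ii)$, the $\alpha$-block then forces $f(J)\in\{I,I^{c}\}=\{J,J^{c}\}$---is the delicate bookkeeping you flagged, and you have it right.
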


\begin{proof}
	Without loss of generality, assume that $|C|=4$ and $|R|\le 3$. Now, we order the rows so that $R=\{1,2,3\}$,
    and we order the columns so that $\gamma_{ii}=(b,a)$ for all $i\in[4]$, i.e., $row(j)=j$ for all $j\in[4]$.
    If $|R|=2$, then $col(1)=2,~col(2)=1$ and $\gamma_{ij}=(b,b)$ for $i\in \{3,4\}$, $j\in\{1,2\}$. So, 
    $\{3,4\}\times\{1,2\}$ is a winning pair.
    
   \noindent  Now let $|R|=3$. Then,\smallskip\\
    $  		{\mathsf{G}} =  
     		\left( \begin{array}{llllll}
    (b,a)  & \cdot   & \cdot   &   \cdot    \\
    \cdot   & (b,a)   & \cdot   & \cdot    \\
    \cdot   &  \cdot     & (b,a) &  \cdot    \\
    (b,b)   & (b,b)    & (b,b)   & (b,a)   \\
    \end{array}
    \right)\, .$  \smallskip\\ 

 \noindent We give special attention to the diagonal block\smallskip\\
    $  		D_i =  
    \left( \begin{array}{llllll}
    \gamma_{ii}  & \gamma_{i(i+1)}   \\
    \gamma_{(i+1)i}   & \gamma_{(i+1)(i+1)} \\
    \end{array}
    \right)\, $ for $i\in[3].$\smallskip\\
    
    \noindent Recall that $\gamma_{ii}=(b,a)$ for all $i\in[4]$. We say that $D_i$ is of type 1 if
    $\gamma_{i(i+1)}=\gamma_{(i+1)i}=(a,b)$ and of type 2 if $\gamma_{i(i+1)}=\gamma_{(i+1)i}=(b,b)$.
    Note that $D_i$ is a winning pair if and only if $D_i$ is of type 1 or ($D_i$ is of type 2 and
    $row(i)\ne row(i+1)$). We will construct all matrices whose $2\times 2$ diagonal blocks are 
    not a winning pair and name the winning pair of these matrices. The proof is divided into two steps: 
    \begin{enumerate}[{\sf (1)}]
    	\item In the first step we construct all matrices without $2\times 2$ diagonal blocks of type 1 or type 2.
    	Note that in this case, $\gamma_{34}=(a,b)$ and $\gamma_{i4}=(b,b)$ for $i\in[2]$ and that
    	the matrix is determined uniquely by $\gamma_{12}$. There are two cases:
    	\begin{itemize}
    		\item [{\sf (1.1)}] $\gamma_{12}=(a,b)$. Then 
    		$\left( \begin{array}{llllll}
    		(b,a)  &  (b,b)  & (a,b)   &   (b,b)    \\
    		{\color{green}\fbox{(a,b)}}   & (b,a)   & (b,b)   & {\color{green}\fbox{(b,b)}}    \\
    		{\color{green}\fbox{(b,b)}}   &  (a,b)    & (b,a) &  {\color{green}\fbox{(a,b)}}    \\
    		(b,b)   & (b,b)    & (b,b)   & (b,a)   \\
    		\end{array}
    		\right)\,$  
    		and $\{2,3\}\times\{1,4\}$ is a winning pair.
    		\item [{\sf (1.2)}] $\gamma_{12}=(b,b)$. Then 
    		$\left( \begin{array}{llllll}
    			(b,a)  &   {\color{green}\fbox{(a,b)}}  & (b,b)   &  {\color{green}\fbox{(b,b)}}    \\
    		(b,b)   & (b,a)   & (a,b)   & (b,b)    \\
    		(a,b)   &  {\color{green}\fbox{(b,b)}}    & (b,a) &  {\color{green}\fbox{(a,b)}}    \\
    		(b,b)   & (b,b)    & (b,b)   & (b,a)   \\
    		\end{array}
    		\right)\,$  
    		and $\{1,3\}\times \{2,4\}$ is a winning pair.
    	\end{itemize}
         \item In this step we allow in the construction of the matrices no blocks $D_i$ of type 1 and 
         a block of type 2 only if $row(i)=row(i+1)$. Since $|R|=3$, only one such $i$ can exist. Furthermore,
         if $row(3)\ne row(4)$, then $row(4)=3$. There are four cases: 
         \begin{itemize}
         	\item [{\sf (2.1)}] $row(1)=row(2)$. Then, $row(1)=row(2)=3$, which implies that $row(4)=3$. But
         	then $|R|\le 2$, a contradiction. 
         	\item [{\sf (2.2)}] $row(2)=row(3)$. Then, $row(2)=row(3)=1$, which implies that $row(4)=3$ and $row(1)=2$. 
         	Hence, $\gamma_{12}=\gamma_{21}=(a,b)$, and thus, $D_1$ is of type 1.
         	\item [{\sf (2.3)}] $row(3)=row(4)=1$. Then $\gamma_{23}=(b,b)$ and this determines the matrix uniquely:\smallskip\\
         	$\left( \begin{array}{llllll}
         	(b,a)   &  (b,b)  & (a,b)   &   (a,b)    \\
         	(a,b)   &  {\color{green}\fbox{(b,a)}}   & (b,b)   & {\color{green}\fbox{(b,b)}}    \\
         	(b,b)   & (a,b)    & (b,a)   & (b,b)  \\
         	(b,b)   &  {\color{green}\fbox{(b,b)}}     & (b,b) &  {\color{green}\fbox{(b,a)}}    \\
         	\end{array}
         	\right)\,$ and $\{1,3\}\times\{2,4\}$ is a winning pair.
         	\item [{\sf (2.4)}] $row(3)=row(4)=2$. Then  $\gamma_{23}=(a,b)$ and this determines the matrix uniquely:\smallskip\\
         	$\left( \begin{array}{llllll}
         	{\color{green}\fbox{(b,a)}}   &  (a,b)  & (b,b)   &   {\color{green}\fbox{(b,b)}}    \\
         	(b,b)   &  (b,a)  & (a,b)   &  (a,b)   \\
         	(a,b)   & (b,b)    & (b,a)   & (b,b)  \\
         	 {\color{green}\fbox{(b,b)}}  &  (b,b)   & (b,b) &  {\color{green}\fbox{(b,a)}}    \\
         	\end{array}
         	\right)\,$ and $\{1,4\}\times\{1,4\}$ is a winning pair.
         \end{itemize}
      \end{enumerate}
         So, up to reordering of rows and columns, there are 4 matrices for which no of the $D_i$'s is a winning
         pair. We have shown that these 4 matrices have a winning pair. \qed 
\end{proof}
}

\noindent We are now ready to prove the key result.

\begin{theorem}
	\label{thm:winning-pair}
	{Let $\mathsf{G}$ be a 2-players, 2-values, $n$-strategies normal game with
	$n\ge 4$ and not $n=|C|=|R|=4$. Then a winning pair exists and can be found in time $O(n)$.}
\end{theorem}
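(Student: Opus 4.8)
The plan is to reduce, via Lemma~\ref{lem:psFequil}, the existence of an $\mathsf{F}$-equilibrium to the existence of a \emph{winning pair}, and then to produce one by a case analysis on $n$ and on the two cardinalities $|R|$ and $|C|$. The algorithm first reads the arrays $Col$ and $Row$ and computes, in $O(n)$ time, the membership information for $R$ and $C$, the cardinalities $|R|,|C|$, and the auxiliary sets $\{col(i): i\notin R\}$ and $\{row(j): j\notin C\}$. Throughout, any core entry can be recovered in $O(1)$ from $Row$ and $Col$, since $\alpha_{ij}=a \iff Row(j)=i$ and $\beta_{ij}=a \iff Col(i)=j$, so the full bimatrix is never needed.

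First I would dispatch the cases already covered by the preceding lemmas. If $n\ge 5$ and $\max\{|R|,|C|\}\ge 4$, Lemma~\ref{lem:alg-n5} returns a winning pair in $O(n)$. If $n=4$ and exactly one of $|R|,|C|$ equals $4$ (the other being $\le 3$), Lemma~\ref{lem:nis4} provides one, the excluded configuration $n=|R|=|C|=4$ being ruled out by hypothesis. Finally, whenever $|\{col(i): i\notin R\}|\ge 2$ and $|\{row(j): j\notin C\}|\ge 2$, Lemma~\ref{lem:alg-winpair} applies directly. This leaves only the \emph{degenerate} configurations: $|R|\le 3$ and $|C|\le 3$ together with the failure of at least one cardinality condition of Lemma~\ref{lem:alg-winpair}.

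The crux of the proof is this degenerate case, which I would handle directly. By the row/column symmetry of normality I may assume $|\{row(j): j\notin C\}|\le 1$. Since $n\ge 4>|C|$ there is at least one player-$2$-dead column $j\notin C$, so this set is a singleton $\{r\}$: every dead column has its unique player-$1$ $a$-entry in the single row $r$. This rigidly constrains the game, since outside the $\le 3$ rows of $R$ and $\le 3$ columns of $C$ every entry is $b$ for at least one player. A winning pair is then read off from one of two bounded templates. Either two player-$1$-dead rows $i_1,i_2$ exist with $col(i_1)\ne col(i_2)$ and $row(col(i_1))\ne row(col(i_2))$, in which case $(\{i_1,i_2\},\{col(i_1),col(i_2)\})$ is a winning pair, its $\alpha$-block being all-$b$ because $i_1,i_2$ are dead and its $\beta$-block being $\bigl(\begin{smallmatrix} a & b \\ b & a \end{smallmatrix}\bigr)$ because the two $a$-columns match; or no such pair of dead rows exists, and then the player-$1$ and player-$2$ $a$'s confined to $R\times C$ yield an anti-diagonal $\mathsf{D}_2$-type block inside the $\le 3\times 3$ core, which is itself a winning pair. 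Because $|R|,|C|\le 3$ there are only finitely many core configurations, and—exactly as in the enumeration of Lemma~\ref{lem:nis4}—each contains a winning pair, located by inspecting $O(1)$ recoverable entries.

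The main obstacle is precisely this degenerate analysis: the cardinality hypotheses of Lemma~\ref{lem:alg-winpair} can genuinely fail even for $|R|=|C|=2$ (for instance when all dead columns funnel their $a$ into one row), so the three algorithmic lemmas do not by themselves exhaust the parameter space, and existence of a winning pair must be argued by hand on the bounded core. Once this is settled, correctness is immediate from Lemma~\ref{lem:psFequil}, and the overall running time is $O(n)$, as every case is decided, and every winning pair located, after a single $O(n)$ preprocessing pass together with $O(1)$ core inspections.
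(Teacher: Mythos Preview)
Your high-level structure matches the paper's: dispatch the easy cases via Lemmas~\ref{lem:alg-n5}, \ref{lem:alg-winpair}, and \ref{lem:nis4}, then handle the residual ``degenerate'' configuration $|R|\le 3$, $|C|\le 3$ with one of the two cardinality conditions of Lemma~\ref{lem:alg-winpair} failing. Your Case~1 in the degenerate analysis (two dead rows $i_1,i_2\notin R$ with $col(i_1)\ne col(i_2)$ and $row(col(i_1))\ne row(col(i_2))$) is correct and gives a valid winning pair.

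The gap is in your Case~2. First, the claim that ``the player-$1$ and player-$2$ $a$'s [are] confined to $R\times C$'' is false: every dead row $i\notin R$ carries a player-$2$ entry $\beta_{i,col(i)}=a$, which lies outside $R\times C$. Second, the failure of Case~1 does not by itself force a $\mathsf{D}_2$-type block in any $3\times 3$ core; in particular the subcase where all dead rows share a single $col$ value (i.e.\ $|\{col(i):i\notin R\}|=1$) is exactly the situation the paper works hardest on, and there is no shortcut to a $2\times 2$ anti-diagonal block without further argument. Appealing to ``finitely many core configurations'' does not close this, because the relevant structure is not confined to a bounded core: the dead rows (of which there may be $n-3$) still carry player-$2$ $a$'s that interact with conditions~(ii) and~(iii) of a winning pair.

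The paper resolves the degenerate case by renumbering so that $col(i)=1$ for all $i\notin R$ and $R\subseteq\{1,2,3\}$ with $row(1)=1$, $row(2)=2$, and then doing an explicit case split on the values of $col(1)$, $col(2)$ (and, in one branch, $col(3)$), exhibiting a concrete winning pair in each subcase. You would need either to carry out an analogous explicit analysis or to give a genuine structural argument replacing it; the sketch you give does not do either.
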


\begin{proof}
	Consider an algorithm that simulates the algorithms described in Lemmas~\ref{lem:alg-n5} and \ref{lem:alg-winpair}.
	So this algorithm finds a winning pair {for $n\ge 5$ and} $|R|\ge 4$ or $|C|\ge 4$, or  $|\{col(i)~:~ i\not\in R\}| \ge 2$ 
	and $|\{row(j)~:~ j\not\in C\}| \ge 2$. \cg{The case $n=4$ and ($|C|=4$ and $|R|\le 3)$) or $(|R|=4$ and $|C|\le 3))$
	is taken care by the construction described in Lemma~\ref{lem:nis4}.} Because of the symmetry of rows and columns we can assume now 
	$|R|\le 3$ and $|C|\le 3$, and $|\{col(i)~:~ i\not\in R\}| = 1$.
	
	In the proof we use some appropriate numbering of rows and columns. We are aware that the
	algorithm would operate with the actual row and column numbers, but we feel that using the renumbering
	allows for an easier understanding of the algorithmic idea. 
	
	By renumbering the columns we can obtain $col(i)=1$ for $i\not\in R$. By renumbering the rows and the columns
	$2,\ldots, n$ we can obtain additionally $R=\{1,2\}$ if $|R|=2$ and $R=\{1,2,3\}$ if $|R|=3$, and
	$row(1)=1$, $row(2)=2$. Note that in this setting $\gamma(i,j)=(b,b)$ for all $i\ge 4,~j\ge 2$. Figure~\ref{fig:initial} illustrates this initial setting.
	
	\begin{figure}[h]
		%
		%
		\setlength{\unitlength}{2000sp}%
		\begingroup\makeatletter\ifx\SetFigFont\undefined%
		\gdef\SetFigFont#1#2#3#4#5{%
			\reset@font\fontsize{#1}{#2pt}%
			\fontfamily{#3}\fontseries{#4}\fontshape{#5}%
			\selectfont}%
		\fi\endgroup%
		\begin{picture}(5992,5340)(489,-5175)
		{\color[rgb]{0,0,0}\put(1426,-136){\line( 0,-1){4755}}
		}%
		{\color[rgb]{0,0,0}\put(2378,-129){\line( 0,-1){4755}}
		}%
		{\color[rgb]{0,0,0}\put(3323,-129){\line( 0,-1){4755}}
		}%
		{\color[rgb]{0,0,0}\put(4268,-129){\line( 0,-1){4755}}
		}%
		{\color[rgb]{0,0,0}\put(5213,-114){\line( 0,-1){4755}}
		}%
		{\color[rgb]{0,0,0}\put(946,-586){\line( 1, 0){4740}}
		}%
		{\color[rgb]{0,0,0}\put(931,-1531){\line( 1, 0){4740}}
		}%
		{\color[rgb]{0,0,0}\put(931,-2491){\line( 1, 0){4740}}
		}%
		{\color[rgb]{0,0,0}\put(931,-3421){\line( 1, 0){4740}}
		}%
		{\color[rgb]{0,0,0}\put(946,-4366){\line( 1, 0){4740}}
		}%
		
		{\color[rgb]{1,0,0}\put(6196,-2941){\line(-1, 0){4020}}
			\put(2176,-2956){\line( 0,-1){2175}}
		}%
		\thicklines
		{\color[rgb]{1,0,0}\put(3826,-3856){\line( 5,-2){2125.862}}
		}%
		\put(1350,-25){\makebox(0,0)[lb]{\smash{{\SetFigFont{12}{14.4}{\rmdefault}{\mddefault}{\updefault}{\color[rgb]{0,1,0}$1$}%
		}}}}
		\put(1050,-466){\makebox(0,0)[lb]{\smash{{\SetFigFont{12}{14.4}{\rmdefault}{\mddefault}{\updefault}{\color[rgb]{1,0,0}$(a,b)$}%
		}}}}
		\put(2000,-1449){\makebox(0,0)[lb]{\smash{{\SetFigFont{12}{14.4}{\rmdefault}{\mddefault}{\updefault}{\color[rgb]{1,0,0}$(a,b)$}%
		}}}}
		\put(6046,-4891){\makebox(0,0)[lb]{\smash{{\SetFigFont{12}{14.4}{\rmdefault}{\mddefault}{\updefault}{\color[rgb]{1,0,0}$\bm{(b,b)}$}%
		}}}}
		\put(1070,-4270){\makebox(0,0)[lb]{\smash{{\SetFigFont{12}{14.4}{\rmdefault}{\mddefault}{\updefault}{\color[rgb]{1,0,0}$(b,a)$}%
		}}}}
		\put(1070,-3346){\makebox(0,0)[lb]{\smash{{\SetFigFont{12}{14.4}{\rmdefault}{\mddefault}{\updefault}{\color[rgb]{1,0,0}$(b,a)$}%
		}}}}
		\put(2300,-25){\makebox(0,0)[lb]{\smash{{\SetFigFont{12}{14.4}{\rmdefault}{\mddefault}{\updefault}{\color[rgb]{0,1,0}$2$}%
		}}}}
		\put(3250,-25){\makebox(0,0)[lb]{\smash{{\SetFigFont{12}{14.4}{\rmdefault}{\mddefault}{\updefault}{\color[rgb]{0,1,0}$3$}%
		}}}}
		\put(4180,-25){\makebox(0,0)[lb]{\smash{{\SetFigFont{12}{14.4}{\rmdefault}{\mddefault}{\updefault}{\color[rgb]{0,1,0}$4$}%
		}}}}
		\put(5150, -25){\makebox(0,0)[lb]{\smash{{\SetFigFont{12}{14.4}{\rmdefault}{\mddefault}{\updefault}{\color[rgb]{0,1,0}$5$}%
		}}}}
		\put(650,-700){\makebox(0,0)[lb]{\smash{{\SetFigFont{12}{14.4}{\rmdefault}{\mddefault}{\updefault}{\color[rgb]{0,1,0}$1$}%
		}}}}
		\put(650,-1620){\makebox(0,0)[lb]{\smash{{\SetFigFont{12}{14.4}{\rmdefault}{\mddefault}{\updefault}{\color[rgb]{0,1,0}$2$}%
		}}}}
		\put(650,-2600){\makebox(0,0)[lb]{\smash{{\SetFigFont{12}{14.4}{\rmdefault}{\mddefault}{\updefault}{\color[rgb]{0,1,0}$3$}%
		}}}}
		\put(650,-3500){\makebox(0,0)[lb]{\smash{{\SetFigFont{12}{14.4}{\rmdefault}{\mddefault}{\updefault}{\color[rgb]{0,1,0}$4$}%
		}}}}
		\put(650,-4500){\makebox(0,0)[lb]{\smash{{\SetFigFont{12}{14.4}{\rmdefault}{\mddefault}{\updefault}{\color[rgb]{0,1,0}$5$}%
		}}}}
		\put(6466,-601){\makebox(0,0)[lb]{\smash{{\SetFigFont{12}{14.4}{\rmdefault}{\mddefault}{\updefault}{\color[rgb]{0,1,0}$R=\{1,2\}$}%
		}}}}
		\put(6466,-950){\makebox(0,0)[lb]{\smash{{\SetFigFont{12}{14.4}{\rmdefault}{\mddefault}{\updefault}{\color[rgb]{0,1,0}if $|R|=2$}%
		}}}}
		\put(6421,-1486){\makebox(0,0)[lb]{\smash{{\SetFigFont{12}{14.4}{\rmdefault}{\mddefault}{\updefault}{\color[rgb]{0,1,0}$R=\{1,2,3\}$}%
		}}}}
		\put(6421,-1850){\makebox(0,0)[lb]{\smash{{\SetFigFont{12}{14.4}{\rmdefault}{\mddefault}{\updefault}{\color[rgb]{0,1,0}if $|R|=3$}%
		}}}}
		\end{picture}%
		\caption{Illustrating the initial setting of the proof of Theorem~\ref{thm:winning-pair}.}
		\label{fig:initial}	
	\end{figure}	
	
	Since we do not need the information whether $|R|=2$ or $|R|=3$ in this first part of the proof,
	we make no assumptions about the entries in the third row at this stage of the proof. 
	
	Next we discuss the influence of the values $col(1)$ and $col(2)$. If $col(1)=2$ and $col2)=1$ or
	if $col(1)\ne 2$ and $col(2)\ne 1$ and $col(1) \ne col(2)$, then $\{1,2\} \times \{1,2\}$ is a 
	winning pair. In both cases $row(1)\ne row(2)$ and $col(1)\ne col(2)$. So, the following three cases
	remain: 
	\begin{enumerate}[$(i)$]
		\item $col(1)=2,~col(2)\ne 1$
		\item $col(2)=1,~col(1)\ne 2$
		\item $col(1) = col(2)$
	\end{enumerate}	
	
	\noindent Case $(ii)$ is the more elaborate case. We start by discussing cases $(i)$ and $(iii)$.\vspace{.2em}
	
	\noindent{\bf \underline{Cases \bm{$(i)/(iii)$}}}\vspace{.4em}
	
	\noindent Here, in both cases, $\gamma(2,1)=(b,b)$ and $col(2)=j \ge 3$, $\gamma(2,j)=(b,a)$. 
	Then $\{2,4\}\times\{1,j\}$ is a winning pair. 
	This is true, since $\gamma(4,1)=(b,a)$, $\gamma(4,j)=(b,b)$, $col(4)=1\ne j = col(2)$ and
	$row(1)=2 \ne row(j)$, since $\gamma(2,j)=(b,a)$. This situation is illustrated in Figure~\ref{fig:casesi-iii}.
	
	\begin{figure}[h]
		%
		%
		\setlength{\unitlength}{2000sp}%
		\begingroup\makeatletter\ifx\SetFigFont\undefined%
		\gdef\SetFigFont#1#2#3#4#5{%
			\reset@font\fontsize{#1}{#2pt}%
			\fontfamily{#3}\fontseries{#4}\fontshape{#5}%
			\selectfont}%
		\fi\endgroup%
		\begin{picture}(5992,5340)(489,-5175)
		{\color[rgb]{0,0,0}\put(1426,-136){\line( 0,-1){4755}}
		}%
		{\color[rgb]{0,0,0}\put(2378,-129){\line( 0,-1){4755}}
		}%
		{\color[rgb]{0,0,0}\put(3323,-129){\line( 0,-1){4755}}
		}%
		{\color[rgb]{0,0,0}\put(4268,-129){\line( 0,-1){4755}}
		}%
		{\color[rgb]{0,0,0}\put(5213,-114){\line( 0,-1){4755}}
		}%
		{\color[rgb]{0,0,0}\put(946,-586){\line( 1, 0){4740}}
		}%
		{\color[rgb]{0,0,0}\put(931,-1531){\line( 1, 0){4740}}
		}%
		{\color[rgb]{0,0,0}\put(931,-2491){\line( 1, 0){4740}}
		}%
		{\color[rgb]{0,0,0}\put(931,-3421){\line( 1, 0){4740}}
		}%
		{\color[rgb]{0,0,0}\put(946,-4366){\line( 1, 0){4740}}
		}%
		
		{\color[rgb]{1,0,0}\put(6196,-2941){\line(-1, 0){4020}}
			\put(2176,-2956){\line( 0,-1){2175}}
		}%
		\thicklines
		{\color[rgb]{1,0,0}\put(3826,-3856){\line( 5,-2){2125.862}}
		}%
		\put(1350,-25){\makebox(0,0)[lb]{\smash{{\SetFigFont{12}{14.4}{\rmdefault}{\mddefault}{\updefault}{\color[rgb]{0,1,0}$1$}%
		}}}}
		\put(1050,-466){\makebox(0,0)[lb]{\smash{{\SetFigFont{12}{14.4}{\rmdefault}{\mddefault}{\updefault}{\color[rgb]{1,0,0}$(a,b)$}%
		}}}}
		\put(1060,-1449){\makebox(0,0)[lb]{\smash{{\SetFigFont{12}{14.4}{\rmdefault}{\mddefault}{\updefault}{\color[rgb]{1,0,0}$(b,b)$}%
		}}}}
		\put(3920,-1449){\makebox(0,0)[lb]{\smash{{\SetFigFont{12}{14.4}{\rmdefault}{\mddefault}{\updefault}{\color[rgb]{1,0,0}$(b,a)$}%
		}}}}
		\put(2010,-1449){\makebox(0,0)[lb]{\smash{{\SetFigFont{12}{14.4}{\rmdefault}{\mddefault}{\updefault}{\color[rgb]{1,0,0}$(a,b)$}%
		}}}}
		\put(6046,-4891){\makebox(0,0)[lb]{\smash{{\SetFigFont{12}{14.4}{\rmdefault}{\mddefault}{\updefault}{\color[rgb]{1,0,0}$\bm{(b,b)}$}%
		}}}}
		\put(1070,-4270){\makebox(0,0)[lb]{\smash{{\SetFigFont{12}{14.4}{\rmdefault}{\mddefault}{\updefault}{\color[rgb]{1,0,0}$(b,a)$}%
		}}}}
		\put(1070,-3346){\makebox(0,0)[lb]{\smash{{\SetFigFont{12}{14.4}{\rmdefault}{\mddefault}{\updefault}{\color[rgb]{1,0,0}$(b,a)$}%
		}}}}
		\put(2300,-25){\makebox(0,0)[lb]{\smash{{\SetFigFont{12}{14.4}{\rmdefault}{\mddefault}{\updefault}{\color[rgb]{0,1,0}$2$}%
		}}}}
		\put(3250,-25){\makebox(0,0)[lb]{\smash{{\SetFigFont{12}{14.4}{\rmdefault}{\mddefault}{\updefault}{\color[rgb]{0,0,0}}%
		}}}}
		\put(3950,-25){\makebox(0,0)[lb]{\smash{{\SetFigFont{12}{14.4}{\rmdefault}{\mddefault}{\updefault}{\color[rgb]{0,1,0}$col(2)$}%
		}}}}
		\put(5150, -25){\makebox(0,0)[lb]{\smash{{\SetFigFont{12}{14.4}{\rmdefault}{\mddefault}{\updefault}{\color[rgb]{0,0,0}}%
		}}}}
		\put(650,-700){\makebox(0,0)[lb]{\smash{{\SetFigFont{12}{14.4}{\rmdefault}{\mddefault}{\updefault}{\color[rgb]{0,1,0}$1$}%
		}}}}
		\put(650,-1620){\makebox(0,0)[lb]{\smash{{\SetFigFont{12}{14.4}{\rmdefault}{\mddefault}{\updefault}{\color[rgb]{0,1,0}$2$}%
		}}}}
		\put(650,-2600){\makebox(0,0)[lb]{\smash{{\SetFigFont{12}{14.4}{\rmdefault}{\mddefault}{\updefault}{\color[rgb]{0,1,0}$3$}%
		}}}}
		\put(650,-3500){\makebox(0,0)[lb]{\smash{{\SetFigFont{12}{14.4}{\rmdefault}{\mddefault}{\updefault}{\color[rgb]{0,1,0}$4$}%
		}}}}
		\put(650,-4500){\makebox(0,0)[lb]{\smash{{\SetFigFont{12}{14.4}{\rmdefault}{\mddefault}{\updefault}{\color[rgb]{0,1,0}$5$}%
		}}}}
		\end{picture}%
		\caption{Illustrating cases $(i)/(iii)$ of the proof of Theorem~\ref{thm:winning-pair}.}
		\label{fig:casesi-iii}
	\end{figure}
	
	\noindent{\bf \underline{Case \bm{$(ii)$}}}\vspace{.4em}
	
	\noindent In this case $\gamma(1,2)=(b,b)$ and $\gamma(2,1)=(b,a)$. By renumbering the columns 
	that are greater or equal to $3$, we can obtain that $col(1)=3$, i.e., $\gamma(1,3)=(b,a)$. 
	If $row(3)=2$, then $\{1,2\}\times \{1,3\}$ is a winning pair. This true since in this case
	$\gamma(1,1)=\gamma(2,3)=(a,b)$ and $\gamma(1,3)=\gamma(2,1)=(b,a)$, and 
	$row(1)=1\ne 2 = row(2)$ and $col(1)=3\ne 1=col(1)$.
	
	So, there remain the case $row(3)\ne 2$. It is $row(3)\ne 1$ since $\gamma(1,3)=(b,a)$. We are now
	in the case $|R|=3$ and $row(3)=3$. This situation is illustrated in Figure~\ref{fig:caseii}.
	
	\begin{figure}[h]
		%
		%
		\setlength{\unitlength}{2000sp}%
		\begingroup\makeatletter\ifx\SetFigFont\undefined%
		\gdef\SetFigFont#1#2#3#4#5{%
			\reset@font\fontsize{#1}{#2pt}%
			\fontfamily{#3}\fontseries{#4}\fontshape{#5}%
			\selectfont}%
		\fi\endgroup%
		\begin{picture}(5992,5340)(489,-5175)
		{\color[rgb]{0,0,0}\put(1426,-136){\line( 0,-1){4755}}
		}%
		{\color[rgb]{0,0,0}\put(2378,-129){\line( 0,-1){4755}}
		}%
		{\color[rgb]{0,0,0}\put(3323,-129){\line( 0,-1){4755}}
		}%
		{\color[rgb]{0,0,0}\put(4268,-129){\line( 0,-1){4755}}
		}%
		{\color[rgb]{0,0,0}\put(5213,-114){\line( 0,-1){4755}}
		}%
		{\color[rgb]{0,0,0}\put(946,-586){\line( 1, 0){4740}}
		}%
		{\color[rgb]{0,0,0}\put(931,-1531){\line( 1, 0){4740}}
		}%
		{\color[rgb]{0,0,0}\put(931,-2491){\line( 1, 0){4740}}
		}%
		{\color[rgb]{0,0,0}\put(931,-3421){\line( 1, 0){4740}}
		}%
		{\color[rgb]{0,0,0}\put(946,-4366){\line( 1, 0){4740}}
		}%
		
		{\color[rgb]{1,0,0}\put(6196,-2941){\line(-1, 0){4020}}
			\put(2176,-2956){\line( 0,-1){2175}}
		}%
		\thicklines
		{\color[rgb]{1,0,0}\put(3826,-3856){\line( 5,-2){2125.862}}
		}%
		\put(1350,-25){\makebox(0,0)[lb]{\smash{{\SetFigFont{12}{14.4}{\rmdefault}{\mddefault}{\updefault}{\color[rgb]{0,1,0}$1$}%
		}}}}
		\put(1050,-466){\makebox(0,0)[lb]{\smash{{\SetFigFont{12}{14.4}{\rmdefault}{\mddefault}{\updefault}{\color[rgb]{1,0,0}$(a,b)$}%
		}}}}
		\put(2040,-466){\makebox(0,0)[lb]{\smash{{\SetFigFont{12}{14.4}{\rmdefault}{\mddefault}{\updefault}{\color[rgb]{1,0,0}$(b,b)$}%
		}}}}
		\put(2970,-466){\makebox(0,0)[lb]{\smash{{\SetFigFont{12}{14.4}{\rmdefault}{\mddefault}{\updefault}{\color[rgb]{1,0,0}$(b,a)$}%
		}}}}
		\put(1065,-1449){\makebox(0,0)[lb]{\smash{{\SetFigFont{12}{14.4}{\rmdefault}{\mddefault}{\updefault}{\color[rgb]{1,0,0}$(b,a)$}%
		}}}}
		\put(2980,-1449){\makebox(0,0)[lb]{\smash{{\SetFigFont{12}{14.4}{\rmdefault}{\mddefault}{\updefault}{\color[rgb]{1,0,0}$(b,b)$}%
		}}}}
		\put(2010,-1449){\makebox(0,0)[lb]{\smash{{\SetFigFont{12}{14.4}{\rmdefault}{\mddefault}{\updefault}{\color[rgb]{1,0,0}$(a,b)$}%
		}}}}
		\put(6046,-4891){\makebox(0,0)[lb]{\smash{{\SetFigFont{12}{14.4}{\rmdefault}{\mddefault}{\updefault}{\color[rgb]{1,0,0}$\bm{(b,b)}$}%
		}}}}
		\put(1070,-4270){\makebox(0,0)[lb]{\smash{{\SetFigFont{12}{14.4}{\rmdefault}{\mddefault}{\updefault}{\color[rgb]{1,0,0}$(b,a)$}%
		}}}}
		\put(1070,-3346){\makebox(0,0)[lb]{\smash{{\SetFigFont{12}{14.4}{\rmdefault}{\mddefault}{\updefault}{\color[rgb]{1,0,0}$(b,a)$}%
		}}}}
		\put(2950,-2400){\makebox(0,0)[lb]{\smash{{\SetFigFont{12}{14.4}{\rmdefault}{\mddefault}{\updefault}{\color[rgb]{1,0,0}$(a,b)$}%
		}}}}
		\put(2300,-25){\makebox(0,0)[lb]{\smash{{\SetFigFont{12}{14.4}{\rmdefault}{\mddefault}{\updefault}{\color[rgb]{0,1,0}$2$}%
		}}}}
		\put(3250,-25){\makebox(0,0)[lb]{\smash{{\SetFigFont{12}{14.4}{\rmdefault}{\mddefault}{\updefault}{\color[rgb]{0,1,0}$3$}%
		}}}}
		\put(4160,-25){\makebox(0,0)[lb]{\smash{{\SetFigFont{12}{14.4}{\rmdefault}{\mddefault}{\updefault}{\color[rgb]{0,1,0}$4$}%
		}}}}
		\put(5150, -25){\makebox(0,0)[lb]{\smash{{\SetFigFont{12}{14.4}{\rmdefault}{\mddefault}{\updefault}{\color[rgb]{0,1,0}$5$}%
		}}}}
		\put(650,-700){\makebox(0,0)[lb]{\smash{{\SetFigFont{12}{14.4}{\rmdefault}{\mddefault}{\updefault}{\color[rgb]{0,1,0}$1$}%
		}}}}
		\put(650,-1620){\makebox(0,0)[lb]{\smash{{\SetFigFont{12}{14.4}{\rmdefault}{\mddefault}{\updefault}{\color[rgb]{0,1,0}$2$}%
		}}}}
		\put(650,-2600){\makebox(0,0)[lb]{\smash{{\SetFigFont{12}{14.4}{\rmdefault}{\mddefault}{\updefault}{\color[rgb]{0,1,0}$3$}%
		}}}}
		\put(650,-3500){\makebox(0,0)[lb]{\smash{{\SetFigFont{12}{14.4}{\rmdefault}{\mddefault}{\updefault}{\color[rgb]{0,1,0}$4$}%
		}}}}
		\put(650,-4500){\makebox(0,0)[lb]{\smash{{\SetFigFont{12}{14.4}{\rmdefault}{\mddefault}{\updefault}{\color[rgb]{0,1,0}$5$}%
		}}}}
		\end{picture}%
		\caption{Illustrating case $(ii)$ of the proof of Theorem~\ref{thm:winning-pair}.}
		\label{fig:caseii}
	\end{figure}

	We now study the influence of $col(3)$. If $col(3)=1$, then $\{1,3\}\times \{1,3\}$ is a winning
	pair, since $\gamma(1,1)=\gamma(3,3)=(a,b)$ and $\gamma(1,3)=\gamma(3,1)=(b,a)$, and
	$row(1)=1\ne 3 = row(3)$ and $col(1)=3\ne 1=col(3)$.
	
	Now let $col(3)=2$, i.e., $\gamma(3,1)=(b,b)$ and $\gamma(3,2)=(b,a)$. 
	Then $\{3,4\}\times \{1,2\}$ is a winning pair, since  
	$\gamma(3,1)=\gamma(4,2)=(b,b)$ and $\gamma(3,3)=\gamma(4,1)=(a,b)$, and
	$col(3)=2\ne 1 = col(4)$ and $row(1)=1\ne 2=row(2)$.
	
	Finally, consider the case $col(3)\ge 3$. Since $\gamma(3,3)=(a,b)$ this implies
	that $col(3)\ge 4$. Then $\gamma(3,2)=(b,b)$ and $\{2,3\}\times \{2,3\}$ is a winning
	pair, since $\gamma(2,2)=\gamma(3,3)=(a,b)$ and $\gamma(2,3)=\gamma(3,2)=(b,b)$, and
	$row(2)2\ne 3 = row(3)$ and $col(2)=1\ne col(3)$, completing proof of the theorem.\qed
\end{proof}

\noindent Theorem~\ref{thm:winning-pair}, Lemma~\ref{lem:psFequil} and the definition of a winning pair
(Definition~\ref{def:winpair}) yield the following result:

\begin{corollary}
	\label{cor:final-result}
    Consider a 2-players, 2-values, $n$-strategies 
    normal game $\mathsf{G}$, 
	and a unimodal valuation ${\mathsf{F}}$
	with
	${{\mathsf{F}}}\left( \frac{1}{2}\right)=b$.
	Then,
	${\mathsf{G}}$ has an ${\mathsf{F}}$-equilibrium
	that can be computed in
	$O(n)$ time.
\end{corollary}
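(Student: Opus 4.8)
The plan is to reduce the whole statement to the existence and efficient computation of a \emph{winning pair} (Definition~\ref{def:winpair}) and then to invoke Lemma~\ref{lem:psFequil}. The hypothesis ${\mathsf{F}}\left(\frac{1}{2}\right)=b$ enters at exactly one point: it is precisely what Lemma~\ref{lem:psFequil} needs in order to conclude that the uniform profile supported on the two rows and two columns of a winning pair is an ${\mathsf{F}}$-equilibrium. Consequently, once a winning pair $(\sigma_1,\sigma_2)$ with $\sigma_1=\{i_1,i_2\},\sigma_2=\{j_1,j_2\}$ is in hand, setting $p_{1,i_1}=p_{1,i_2}=p_{2,j_1}=p_{2,j_2}=\frac{1}{2}$ costs only $O(1)$ extra time and yields the desired equilibrium.

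First I would dispose of the generic situation. Starting from the condensed input ($Col$ and $Row$), the Boolean arrays for $C$ and $R$ are built in $O(n)$ time, after which Theorem~\ref{thm:winning-pair} applies verbatim whenever $n\ge 4$ and not $n=|C|=|R|=4$: it returns a winning pair in $O(n)$ time. Combined with the previous paragraph, this settles all such games within the claimed bound.

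What remains is the finite family of residual cases excluded from Theorem~\ref{thm:winning-pair}, namely $n\in\{2,3\}$ and $n=|C|=|R|=4$. In each of these $n$ is bounded by a constant, so a brute-force scan over the $O(1)$ candidate support pairs locates a winning pair (if one exists) in $O(1)$ time; the only real content is to guarantee that an equilibrium is always available. For $n=2$ the single candidate $\{1,2\}\times\{1,2\}$ is forced to be a winning pair: normality together with the absence of $(a,a)$ entries forces $row(1)\ne row(2)$ and $col(1)\ne col(2)$ and drives both $2\times 2$ blocks into antidiagonal form. For $n=|C|=|R|=4$ both $row$ and $col$ are permutations and $col\circ row$ is fixed-point-free (a derangement); a short argument on the cycle type exhibits a pair $\{s,t\}$ whose $\alpha$-block is diagonal and whose $\beta$-block is antidiagonal or all-$b$, i.e.\ a winning pair.

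The genuinely delicate case is $n=3$, and this is where I expect the main obstacle. When $|R|=2$ or $|C|=2$ a winning pair can still be located by direct inspection. However, when $|R|=|C|=3$ the game is, up to relabeling of rows and columns, exactly $\mathsf{D}_3$, and \emph{no} winning pair exists: for every candidate support pair one of the two $2\times 2$ blocks carries exactly one $a$ (for $\sigma_1\ne\sigma_2$ the $\alpha$-block, and for $\sigma_1=\sigma_2$ the $\beta$-block, since $col$ is a $3$-cycle), violating condition $(i)$ of Definition~\ref{def:winpair}. Thus the winning-pair route genuinely fails, and one must fall back to the uniform, full-support profile, which is an ${\mathsf{F}}$-equilibrium by Theorem~\ref{hawaii grande} (equivalently, it is an ${\mathsf{E}}$-equilibrium by the Toeplitz structure and hence ${\mathsf{F}}$ by Corollary~\ref{completely trivial}); this is computed in $O(1)$ time, preserving the overall $O(n)$ bound. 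The crux of a fully rigorous proof is therefore the exhaustive verification that every bounded-size residual game admits either a winning pair or the uniform equilibrium, while the large-$n$ heart of the statement is already carried by Theorem~\ref{thm:winning-pair}.
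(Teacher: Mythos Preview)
Your proposal is correct and, for the main case $n\ge 4$ with $(n,|C|,|R|)\ne(4,4,4)$, identical to the paper: both combine Theorem~\ref{thm:winning-pair} with Lemma~\ref{lem:psFequil}.

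The residual cases are where you and the paper diverge, though both routes work. The paper (see the Remark following the corollary) dispatches $n=2$ via Corollary~\ref{cor:2strategies}, $n=3$ via the general $3$-strategies existence result Theorem~\ref{2 values 3 strategies existence}, and $n=|R|=|C|=4$ via the uniform fully mixed profile $p_{1i}=p_{2i}=\tfrac{1}{4}$, which is an ${\mathsf{E}}$-equilibrium and hence an ${\mathsf{F}}$-equilibrium by Corollary~\ref{completely trivial}. You instead stay within the winning-pair framework as far as possible: for $n=2$ you show $\{1,2\}\times\{1,2\}$ is itself a winning pair; for $n=|R|=|C|=4$ you normalise $row=\mathrm{id}$ so that $col$ is a derangement of $[4]$, and in either cycle type (a $4$-cycle or a product of two transpositions) you exhibit $\sigma_1=\sigma_2=\{s,t\}$ with the required block shapes; for $n=3$ you split off the $\mathsf{D}_3$ case (where no winning pair exists and you fall back on Theorem~\ref{hawaii grande}) from the $|R|=2$ or $|C|=2$ cases, where a winning pair does exist. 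Your route is more self-contained in that it avoids the rather lengthy case analysis of Theorem~\ref{2 values 3 strategies existence}; the paper's route is shorter to state because it simply cites earlier results. Either way the $O(n)$ bound is immediate since all residual cases have $n\le 4$.
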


\noindent{\bf Remark:} {\em Observe that Theorem~\ref{thm:winning-pair} does not cover the cases $n\leq 3$ and 
$n=|R|=|C|=4$. However, we know from Theorem~\ref{2 values 3 strategies existence} (next section)  that for $n=3$ there always exists 
an ${{\mathsf{F}}}$-equilibrium with ${{\mathsf{F}}}\left( \frac{\textstyle 1}{\textstyle 2}\right)=b$.
From Corollary~\ref{cor:2strategies} we know that an {\sf F}-equilibrium exists also for $n=2$. Regarding the case
$n=|R|=|C|=4$, 
the strategy vector $p_{1i}=p_{2i}=1/4$ for $i\in [n]$ is a fully mixed equilibrium.
}


\subsection{Three Strategies} 
\label{three strategies}
{Here we focus on two-players,
	two-values games with $\mathsf{F}\left(\frac{1}{2}\right)\le b$ and $3$ strategies (i.e., $n=3$).
	In this case we have a complete picture on the existence and inexistence
	of {\sf F}-equilibria (c.f. Corollary~\ref{3strategies}).}
We first show:

\begin{theorem}
	\label{2 values 3 strategies existence}
	{Consider
		a unimodal valuation ${\mathsf{F}}$
		with
		${\mathsf{F}}\left( \frac{1}
		{2} 
		\right) \leq b$.
		Then,
		every 2-players, 2-values, 3-strategies game
		has an ${\mathsf{F}}$-equilibrium.}
\end{theorem}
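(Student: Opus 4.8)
The plan is to reduce the statement to a small, explicit case analysis driven by three elementary facts about $\mathsf{F}$. Since $\mathsf{F}(0)=b$, $\mathsf{F}(1)=a<b$, and $\mathsf{F}$ is concave with a unique maximizer $x_0$, the hypothesis $\mathsf{F}\left(\frac12\right)\le b$ forces $x_0\le\frac12$ (otherwise $\mathsf{F}$ would be strictly increasing on $[0,\frac12]$, giving $\mathsf{F}\left(\frac12\right)>\mathsf{F}(0)=b$). Consequently the superlevel set $\{x:\mathsf{F}(x)>b\}$ is an interval contained in $\left(0,\frac12\right)$ (and is empty when $x_0=0$); in particular $\mathsf{F}(x)\le b$ for every $x\ge\frac12$, and the global minimum of $\mathsf{F}$ on $[0,1]$ equals $a=\mathsf{F}(1)$. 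I would also record that on a pure profile the valuation equals the cost and $\mathsf{F}$ is order preserving on $\{a,b\}$, so every pure (cost) equilibrium of a $2$-values game is automatically an $\mathsf{F}$-equilibrium. Hence the first step is: if the game has a pure equilibrium we are done, and otherwise assume henceforth that none exists.

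The central gadget is a $2\times 2$ \emph{anti-diagonal block}: two rows $i_1,i_2$ and two columns $j_1,j_2$ with $(\alpha_{i_1j_1},\beta_{i_1j_1})=(\alpha_{i_2j_2},\beta_{i_2j_2})=(a,b)$ and $(\alpha_{i_1j_2},\beta_{i_1j_2})=(\alpha_{i_2j_1},\beta_{i_2j_1})=(b,a)$. On the uniform profile $p_1(i_1)=p_1(i_2)=p_2(j_1)=p_2(j_2)=\frac12$ one computes $x_1=x_2=\frac12$, so $\mathsf{V}_1=\mathsf{V}_2=\mathsf{F}\left(\frac12\right)\le b$. Because $x_k$ is linear in $p_k$ and $\mathsf{F}$ is concave, each $\mathsf{V}_k(\cdot,p_{\bar k})$ is concave, so a best response is attained at a pure strategy and it suffices to compare the three pure strategies of each player. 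The two support strategies give $\mathsf{F}\left(\frac12\right)$; the remaining strategy lands $x_k$ at $0$, $\frac12$, or $1$ according to whether it carries zero, one, or two $a$-entries inside the block's two support columns (resp.\ rows). Using $\mathsf{F}(0)=b\ge \mathsf{F}\left(\frac12\right)$, no deviation is profitable \emph{unless} that third strategy is ``doubly-$a$'', i.e.\ lands $x_k=1$ with value $a<\mathsf{F}\left(\frac12\right)$. Thus the uniform profile on the block is an $\mathsf{F}$-equilibrium as soon as the third row has at most one $a$ in columns $\{j_1,j_2\}$ and the third column has at most one $a$ in rows $\{i_1,i_2\}$. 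This is the $\mathsf{F}\left(\frac12\right)\le b$ analogue of the winning-pair construction of Section~\ref{equal b}.

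What remains, and what I expect to be the main obstacle, is purely combinatorial: to show that a $3$-strategies game with no pure equilibrium and no $(a,a)$ entry always contains such a ``good'' anti-diagonal block, or else admits a different small-support $\mathsf{F}$-equilibrium. I would organize this by the placement of the $a$-entries of the two players (two disjoint subsets of the $3\times 3$ grid). Absence of a pure equilibrium rules out every configuration in which an $a$-entry sits in a row (column) whose complementary row (column) is entirely $b$, since each such configuration yields a pure equilibrium; this sharply constrains the admissible patterns. The delicate subcase is precisely the one flagged above: when a third row (or column) is doubly-$a$, so that one player can force the value $a$. Here the naive uniform block fails, and instead I would build the equilibrium around that player playing the value-$a$ strategy and verify, again using $\mathsf{F}(x)\le b$ for $x\ge\frac12$ together with the exclusion of pure equilibria, that the opponent is simultaneously best responding. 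Degenerate patterns (a player with at most one $a$ in total, or an all-$b$ row or column surviving the pure-equilibrium exclusion) are treated directly, typically producing a size-$2$ support $\mathsf{F}$-equilibrium or contradicting the no-pure-equilibrium assumption; assembling these cases into a complete partition is the part of the argument requiring the most care.
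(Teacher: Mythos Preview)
Your approach is genuinely different from the paper's and, as written, has a real gap. The paper does \emph{not} try to locate a specific $2\times 2$ pattern in the bimatrix; instead it starts from an ${\mathsf E}$-equilibrium $(p_1,p_2)$ (which always exists), uses Lemma~\ref{lemma A} to dispose of the support-size-$1$ case, Lemma~\ref{lemma B hat} to make any size-$2$ support uniform, and Corollary~\ref{completely trivial} for the fully mixed case. The remaining work is a short case split on the $2\times 2$ and $3\times 2$ support configurations, sometimes \emph{redistributing} the full-support player's probabilities (e.g.\ to $(\tfrac13,\tfrac13,\tfrac13)$ or $(\tfrac12,\tfrac14,\tfrac14)$) so that the non-support pure deviation of the other player lands at an $x$-value where $\mathsf{F}(x)\ge \mathsf{F}(x_2(p_1,p_2))$. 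Starting from an ${\mathsf E}$-equilibrium is what makes the analysis finite and clean: the {\sf WEEP} pins down the cost pattern on the support, and only the single off-support strategy needs to be checked.

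Your ``anti-diagonal block'' gadget, by contrast, need not exist. Take the game ${\mathsf D}_3$ from Section~\ref{sec:non-existence of equilibria}: it has no pure equilibrium, no $(a,a)$ entry, and one checks directly that no choice of two rows and two columns yields the pattern $(a,b),(b,a)/(b,a),(a,b)$ you require; its only ${\mathsf F}$-equilibrium is the fully mixed uniform profile (Theorem~\ref{hawaii grande}, case $m\le 3$). So the sentence ``what remains is purely combinatorial'' is hiding the actual content of the proof: you would need to classify all $3\times 3$ two-value bimatrices without pure equilibrium and without your block, and for each such class exhibit an ${\mathsf F}$-equilibrium by a different construction (fully mixed, or asymmetric supports, etc.). That is doable but is essentially the whole theorem, and your proposal only gestures at it. The ``doubly-$a$'' subcase you flag is also not the only obstruction, and your suggested fix there (``build the equilibrium around that player playing the value-$a$ strategy'') runs straight into Lemma~\ref{lemma A}: a support of size $1$ forces a pure equilibrium, which you have excluded. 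If you want to pursue a direct-construction route, I would at minimum enlarge the gadget to the three-pattern ``winning pair'' of Definition~\ref{def:winpair} rather than only the anti-diagonal pattern, and separately handle the ${\mathsf D}_3$-like cases via the fully mixed profile; but the paper's ${\mathsf E}$-equilibrium-first strategy is both shorter and avoids the existence question for the gadget altogether.
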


\begin{proof}
	{We start
		with an ${\mathsf{E}}$-equilibrium
		$(p_{1}, p_{2})$ for ${\mathsf{G}}$.}
	{If $|\sigma (p_{1})| = 1$
		or
		$|\sigma (p_{2})| = 1$,
		then,
		by Lemma~\ref{lemma A},
		there is also a pure equilibrium for ${\mathsf{G}}$.}
	{So assume that
		$|\sigma (p_{i})| \geq 2$
		for each player $i \in [2]$.}
	{By Lemma~\ref{lemma B hat},
		assume,
		without loss of generality,
		that if for a player $i \in [2]$,
		$|\sigma (p_{i})| = 2$,
		then
		$p_{i}(j) = \frac{\textstyle 1}
		{\textstyle 2}$
		for each strategy $j \in \sigma(p_{i})$.} 
	{If $|\sigma (p_{1})| = |\sigma (p_{2})| = 3$,
		then $(p_{1}, p_{2})$ is a fully mixed ${\mathsf{E}}$-equilibrium;
		hence,
		by Corollary~\ref{completely trivial},
		it is also an ${\mathsf{F}}$-equilibrium.}
	{So assume, 
		without loss of generality, that
		$|\sigma (p_{2})| = 2$,
		with $\sigma (p_{2}) = \{ 1, 2 \}$.
		We distinguish two cases
		with respect to $|\sigma (p_{1})|$:}
	
	\begin{enumerate}

		\item[{\sf (A)}]
		\underline{{$|\sigma (p_{1})| = 2$, with $\sigma (p_{1}) = \{ 1, 2 \}$:}} 
		{The idea of the proof is to show that $(p_{1}, p_{2})$ is also an {\sf F}-equilibrium.}
		{Since player $1$ is $\mathsf{E}$-constant
			on $\sigma (p_{1})$,
			Lemma~\ref{x-constant} implies that
			she is also {${\mathsf{F}}$-constant} on $\sigma (p_{1})$.}                                                   
		{So it remains to prove
			that player $1$ cannot ${\mathsf{F}}$-improve
			by switching to strategy $3$.}
		{By assumption,
			$p_{2}(1) = 
			p_{2}(2) = 
			\frac{\textstyle 1}
			{\textstyle 2}$.}                                                                 
		{We distinguish again the three cases from
			the proof of Lemma~\ref{lemma B hat}:}
		\begin{enumerate}

			\item[{\sf (A/1)}]
			\underline{{${\mathsf{\mu}}_{1}(k, 1)
					=
					{\mathsf{\mu}}_{1}(k, 2)
					=
					a$
					for all $k \in \sigma (p_{1})$:}} 
			{Then,
				$x_{1}(p_{1}, p_{2}) = 1$,
				so that 
				${\mathsf{V}}_{1}(p_{1}, p_{2})
				=
				{\mathsf{F}}(1)
				=
				a$.
				Thus,
				player $1$ cannot ${\mathsf{F}}$-improve.}

			\item[{\sf (A/2)}]
			\underline{{${\mathsf{\mu}}_{1}(k, 1)
					=
					{\mathsf{\mu}}_{1}(k, 2)
					=
					b$
					for all $k \in \sigma (p_{1})$:}}  
			{Then,
				$x_{1}(p_{1}, p_{2})
				=
				0$,
				so that
				${\mathsf{V}}_{1}(p_{1}, p_{2})
				=
				{\mathsf{F}}(0)
				=
				b$.}  
			{Since $(p_{1}, p_{2})$ is an ${\mathsf{E}}$-equilibrium,  
				it must also hold that
				${\mathsf{\mu}}_{1}(3,1)
				=
				{\mathsf{\mu}}_{1}(3, 2)
				=
				b$.
				Thus,
				$x_{1}(p_{1}^{3}, p_{2})
				=
				0$,
				so that
				${\mathsf{V}}_{1}(p_{1}^{3}, p_{2})
				=
				{\mathsf{F}}(0)
				=
				b$,
				and player $1$ cannot ${\mathsf{F}}$-improve.}

			\item[{\sf (A/3)}]
			\underline{{${\mathsf{\mu}}_{1}(k, 1)
					\neq
					{\mathsf{\mu}}_{1}(k, 2)$
					for all $k \in \sigma (p_{1})$:}}                    
			{Then,}
			{$x_{1}(p_{1}, p_{2}) = \frac{\textstyle 1}
				{\textstyle 2}$,
				so that                                                     
				${\mathsf{V}}_{1}(p_{1}, p_{2})
				=
				{\mathsf{F}}\left( \frac{\textstyle 1}
				{\textstyle 2}
				\right)$.}
			Since player $1$ is {${\mathsf{E}}$-happy},
				there are only two cases 
				(the case $\mu_1(3,1)=\mu_1(3,2)=a$ is excluded since in this
			    case player 1 is not $\mathsf{E}$-happy):
			\begin{enumerate}

				\item[{\sf (A/3/i)}]
				\underline{{${\mathsf{\mu}}_{1}(3, 1)
						\neq
						{\mathsf{\mu}}_{1}(3, 2)$:}}
				{Then,
					$x_{1}(p_{1}^{3}, p_{2})
					=
					\frac{\textstyle 1}
					{\textstyle 2}$,
					so that          
					${\mathsf{V}}_{1}(p_{1}^{3}, p_{2})
					=
					{\mathsf{F}}\left( \frac{\textstyle 1}
					{\textstyle 2}
					\right)$.
					So, player $1$ cannot ${\mathsf{F}}$-improve.}

				\item[{\sf (A/3/ii)}]
				\underline{{${\mathsf{\mu}}_{1}(3,1)
						=
						{\mathsf{\mu}}_{1}(3,2)
						=
						b$:}}                                                                             
				{Then,
					$x_{1}(p_{1}^{3}, p_{2}) = 0$,
					so that
					${\mathsf{V}}_{1}(p_{1}^{3}, p_{2})
					=
					{\mathsf{F}}(0)
					=
					b$.
					Since 
					${\mathsf{F}}\left( \frac{\textstyle 1}
					{\textstyle 2}
					\right)
					\leq
					b$,
					player $1$ cannot ${\mathsf{F}}$-improve.}

			\end{enumerate}

		\end{enumerate}
		\noindent So player $1$
			cannot ${\mathsf{F}}$-improve by switching to
			strategy $3$.
		{Due to the symmetry
			between the two players,}
		{player $2$ cannot improve either
			by switching to strategy $3$.}

		\item[{\sf (B)}]                                                                                                                     
		\underline{{$|\sigma (p_{1})| = 3$:}}  
		Then, 
		Corollary~\ref{completely trivial} implies that 
		player $1$ is ${\mathsf{F}}$-happy.
		{The idea of the proof is to show that either $(p_{1}, p_{2})$ is also an {\sf F}-equilibrium,
			or define a new probability distribution for player~$1$ so that the resulting mixed profile is an {\sf F}-equilibrium. The reason we can do this is the following: Because of the Optimal-Value Property (c.f. Definition~\ref{optimal value definition}) player $1$ is {\sf V}-constant on $\sigma(p_1)$; this
			together with Lemma~\ref{x-constant} imply that 
			$\mathsf{V}(p_1,p_2)=\mathsf{F}(x_1(p_1,p_2)) = \mathsf{F}(x_1(p^{s_{i}}_1,p_2))$ for all $s_i\in \sigma(p_1)$.
			Now define a new mixed strategy $\widehat{p}_1$ with 
			$\sigma(\widehat{p}_1)  \subseteq \sigma(p_1)$. Then observe that player $1$ remains {\sf F}-happy.} 
		
		{So, what it remains is to consider the
			${\mathsf{F}}$-happiness of player $2$.
			By the {\sf WEEP} for player $2$,
			${\mathsf{E}}_{2}(p_{1}, p_{2}^{1})
			=
			{\mathsf{E}}_{2}(p_{1}, p_{2}^{2})$.}
		We use the following observation: 
		\begin{equation}
		\text{Since } {\mathsf{F}}\left( \frac{\textstyle 1}
		{\textstyle 2} 
		\right) \leq b,~\mathsf{F}(y)\ge\mathsf{F}(z),~\text{for all}~y<z,~z\ge 1/2.
		\label{Fgreater}
		\end{equation}
		
		{Now, assume,
			without loss of generality,
			that
			${\mathsf{\mu}}_{2}(1,1)
			\leq
			{\mathsf{\mu}}_{2}(1,2)
			\leq
			{\mathsf{\mu}}_{2}(1,3)$.}
		{We distinguish two cases:}                                                                                                                     
		\begin{enumerate}

			\item[{\sf (B/1)}]
			\underline{{${\mathsf{\mu}}_{2}(k, 1)
					=
					{\mathsf{\mu}}_{2}(k, 2)$
					for all strategies $k \in [3]$:}}                                                                                                                                                                                                                                                                                                                                                                                                                                                                                                                                                                                                                                                                                                                                 
			{There are three subcases:}                                                                                                                                                                                                                                                                                                                                                                                                                                                                                                                                                                                                                                                                                                                                                                                                                                                                                                                                                                                                                                                                                                                                                                                                                                                                                                                                                                                                                                                            
			\begin{enumerate}

				\item[{\sf (B/1/i)}]
				\underline{{${\mathsf{\mu}}_{2}(k, 1) = a$
						for all strategies
						$k \in [3]$:}}                                                                                                                                                                                                                                                                                                                                                                                                                                                                                                                                                                                                                                                                                                                                                                                                                                                                                                                                                                                                                                                                                                                                                                                                                                                                                                                                                                                                                                                                                                                                                                                                                                                                                                                                                                                                                                                                                                                                                                                                                                                                                                                                                                                                                                                                                                                                                                                                                                                                                                                                                                                                                                                                                                                                                                                                                                                                                                                                                                                                                                                                                                                                                                                                                                                                                                                                                                                                                                                                                                                                                                                                                                                                                                                                                                                                                                                                                                                                                                                                                                                                                                                                                                                                                                                  
				{Then,
					$x_{2}(p_{1}, p_{2}) = 1$,
					so that 
					${\mathsf{V}}_{2}(p_{1}, p_{2})
					=
					{\mathsf{F}}(1)
					=
					a$.
					So,
					player $2$ cannot ${\mathsf{F}}$-improve.}

				\item[{\sf (B/1/ii)}]                                                                                                                                                                                                                                                                                                                                                                                                                                                                                                                                                                                                                                                                                                                                                                                                                                                                                                                                                                                                                                                                                                                                                                                                                                                                                                                                                                                                                                                                                                                                                                                                                                                                                                                                                                                                                                                                                                                                                                                                                                                                                                                                                                                                                                                                                                                                                                                                                                                                                                                                                                                                                                                                                                                                                                                                                                                                                                                                                                                                                                                                                                                                                                                                                                                                                                                                                                                                                                                                                                                                                                                                                                                                                                                                                                                                                                                                                                                                                                                                                                                                                                                                                                                                                                           \underline{{${\mathsf{\mu}}_{2}(k, 1) = b$
						for all strategies
						$k \in [3]$:}}                                                                                                                                                                                                                                                                                                                                                                                                                                                                                                                                                                                                                                                                                                                                                                                                                                                                                                                                                                                                                                                                                                                                                                                                                                                                                                                                                                                                                                                                                                                                                                                                                                                                                                                                                                                                                                                                                                                                                                                                                                                                                                                                                                                                                                                                                                                                                                                                                                                                                                                                                                                                                                                                                                                                                                                                                                                                                                                                                                                                                                                                                                                                                                                                                                                                                                                                                                                                                                                                                                                                                                                                                                                                                                                                                                                                                                                                                                                                                                                                                                                                                                                                                                                                                                                  
				\remove{removing old proof
					\textcolor{magenta}{Then,
						since $\sigma (p_{2}) = \{ 1, 2 \}$,
						${\mathsf{E}}_{2}(p_{1}, p_{2}) = b$.}
					\textcolor{magenta}{The assumption implies
						that ${\mathsf{\mu}}_{2}(k, 3)
						=
						a$
						for all strategies $k \in [3]$.}  
					\textcolor{magenta}{So,
						${\mathsf{E}}_{2}(p_{1}, p_{2}^{3})
						=
						a$,
						and player $2$ is not \cgr{${\mathsf{E}}$-happy}
						in $(p_{1}, p_{2})$.
						A contradiction.}                   
				}
				Then,
				since $\sigma (p_{2}) = \{ 1, 2 \}$,
				${\mathsf{E}}_{2}(p_{1}, p_{2}) = b$.
				By assumption, $(p_1,p_2)$ is an $\mathsf{E}$-equilibrium. 
				This implies that $\mu_2(k,3)=b$ for all strategies $k\in [3]$
				and $\mathsf{V}_2(p_1,p_2)=\mathsf{V}_2(p_1,p^3_2) = \mathsf{F}(0)=b$. 
				So, player 2 cannot $\mathsf{F}$-improve.

				\item[{\sf (B/1/iii)}]                                                                                                                                                                                                                                                                                                                                                                                                                                                                                                                                                                                                                                                                                                                                                                                                                                                                                                                                                                                                                                                                                                                                                                                                                                                                                                                                                                                                                                                                                                                                                                                                                                                                                                                                                                                                                                                                                                                                                                                                                                                                                                                                                                                                                                                                                                                                                                                                                                                                                                                                                                                                                                                                                                                                                                                                                                                                                                                                                                                                                                                                                                                                                                                                                                                                                                                                                                                                                                                                                                                                                                                                                                                                                                                                                                                                                                                                                                                                                                                                                                                                                                                                                                                                                                         \underline{{${\mathsf{\mu}}_{2}(1, 1) = a$
						and ${\mathsf{\mu}}_{2}(3, 1) = b$:}}                                                                                                                                                                                                                                                                                                                                                                                                                                                                                                                                                                                                                                                                                                                                                                                                                                                                                                                                                                                                                                                                                                                                                                                                                                                                                                                                                                                                                                                                                                                                                                                                                                                                                                                                                                                                                                                                                                                                                                                                                                                                                                                                                                                                                                                                                                                                                                                                                                                                                                                                                                                                                                                                                                                                                                                                                                                                                                                                                                                                                                                                                                                                                                                                                                                                                                                                                                                                                                                                                                                                                                                                                                                                                                                                                                                                                                                                                                                                                                                                                                                                                                                                         
				{Since player $2$
					is {${\mathsf{E}}$-happy} with $(p_{1}, p_{2})$,
					${\mathsf{E}}_{2}(p_{1}, p_{2}^{1})
					\leq
					{\mathsf{E}}_{2}(p_{1}, p_{2}^{3})$.}                                                                                                                                                                                                                                                                                                                                                                                                                                                                                                                                                                                                                                                                                                                                                                                                                                                                                                                                                                                                                                                                                                                                                                                                                                                                                                                                                                                                                                                                                                                                                                                                                                                                                                                                                                                                                                                                                                                                                                                                                                                                                                                                                                                                                                                                                                                                                                                                                                                                                                                                                                                                                                                                                                                                                                                                                                                                                                                                                                                                                                                                                                                                                                                                                                                                                                                                                                                                                                                                                                                                                                                                                                                                                                                                                                                                                                                                                                                                                                                                                                                                                                                                                                                                                  
				There are two subcases to consider:
				\begin{itemize}

					\item
					\underline{{${\mathsf{\mu}}_{2}(2,1) = a$:}}                                                                                                                                                                                                                                                                                                                                                                                                                                                                                                                                                                                                                                                                                                                                                                                                                                                                                                                                                                                                                                                                                                                                                                                                                                                                                                                                                                                                                                                                                                                                                                                                                                                                                                                                                                                                                                                                                                                                                                                                                                                                                                                                                                                                                                                                                                                                                                                                                                                                                                                                                                                                                                                                                                                                                                                                                                                                                                                                                                                                                                                                                                                                                                                                                                                                                                                                                                                                                                                                                                                                                                                                                                                                                                                                                                                                                                                                                                                                                                                                                                                                                                                                                                                               
					Then, 
					${\mathsf{\mu}}_{2}
					=
					\left( \begin{array}{lll}
					a & a & \ast \\
					a & a & \ast \\
					b & b & \ast
					\end{array}
					\right)$, where an $\ast$ is an arbitrary value from $\{ a, b \}$. \break          
					\remove{
						\textcolor{BrickRed}{Hence,                                                                                                                                            \begin{eqnarray*}
								a \cdot (p_{1}(1) + p_{1}(2))
								+
								b \cdot p_{1}(3)
								&  \leq  &
								b \cdot (p_{1}(1) + p_{1}(2))
								+
								a \cdot p_{1}(3)\, ,
							\end{eqnarray*}
							which implies 
							$p_{1}(3)
							\leq
							p_{1}(1) + p_{1}(2)$.
							So
							$x_{2}(p_{1}, p_{2}^{3})
							\leq
							x_{2}(p_{1}, p_{2})$,
							with
							$x_{2}(p_{1}, p_{2})
							\geq
							\frac{\textstyle 1}
							{\textstyle 2}$.} 
					}
					By assumption, $(p_1,p_2)$ is an $\mathsf{E}$-Equilibrium.
					This implies that there is some $i\in[3]$ with $\mu(i,3)=b$. Define a new
					probability distribution $\widehat{p}_1$ for player $1$ by 
					$\widehat{p}_1(k)=1/3$, for all $k\in[3].$
					Then, $V_2(\widehat{p}_1,p_2)=\mathsf{F}\left(\frac{\textstyle 2}
					{\textstyle 3}\right)$ and  $V_2(\widehat{p}_1,p^{3}_2)=\mathsf{F}(x)$ with some 
					$x\le 2/3$.
					{By observation~(\ref{Fgreater}), $\mathsf{F}(x) \ge \mathsf{F}\left(\frac{\textstyle 2}
						{\textstyle 3}\right)$}.
					Hence, $(\widehat{p}_1,p_2)$ is an $\mathsf{F}$-equilibrium, since player 2 cannot $\mathsf{F}$-improve
					by switching to strategy $3$.

					\item
					\underline{{${\mathsf{\mu}}_{2}(2,1) = b$:}}
					{Then,
						${\mathsf{\mu}}_{2}
						=
						\left( \begin{array}{lll}
						a & a & \ast \\
						b & b & \ast \\
						b & b & \ast
						\end{array}
						\right)$.} 
					\remove{
						\textcolor{BrickRed}{Hence,
							\begin{eqnarray*}
								a \cdot p_{1}(1) 
								+
								b \cdot (p_{1}(2) + p_{1}(3))
								&  \leq  &
								b \cdot p_{1}(1) 
								+
								a \cdot (p_{1}(2) + p_{1}(3))\, ,
							\end{eqnarray*}
							which implies 
							$p_{1}(2) + p_{1}(3)
							\leq
							p_{1}(1)$.
							So
							$x_{2}(p_{1}, p_{2}^{3})
							\leq
							x_{2}(p_{1}, p_{2})$,
							with
							$x_{2}(p_{1}, p_{2})
							\geq
							\frac{\textstyle 1}
							{\textstyle 2}$.} 
					}
					If $\mu_2(1,3)=a$, then $\mu_2(2,3)=\mu_2(3,3)=b$,
					since otherwise player $2$ could $\mathsf{E}$-improve
					by choosing strategy $3$, contradicting the assumption
					that $(p_1,p_2)$ is an $\mathsf{E}$-Equilibrium. 
					
					So, consider now the case that $\mu_2(1,3)=b$. Define a 
					new  probability distribution $\widehat{p}_1$ for player $1$ by 
					$\widehat{p}_1(1)=1/2$, and $\widehat{p}_1(2)=\widehat{p}_1(3)=1/4$.
					Then, $V_2(\widehat{p}_1,p_2)=\mathsf{F}\left(\frac{\textstyle 1}
					{\textstyle 2}\right)$ and  $V_2(\widehat{p}_1,p^{3}_2)=\mathsf{F}(x)$ with some
					$x\le 1/2$. {By observation~(\ref{Fgreater}), $\mathsf{F}(x)\ge \mathsf{F}\left(\frac{\textstyle 1}
						{\textstyle 2}\right)$}.
					Hence, $(\widehat{p}_1,p_2)$ is an $\mathsf{F}$-equilibrium, since player 2 cannot $\mathsf{F}$-improve
					by switching to strategy~$3$.
					
				\end{itemize}                                                                                                                                                           \remove{
					\leq
					x_{2}(p_{1}, p_{2})$,
					with
					$x_{2}(p_{1}, p_{2})
					\geq
					\frac{\textstyle 1}
					{\textstyle 2}$. 
					Since ${\mathsf{F}}$ is a unimodal valuation 
					with ${\mathsf{F}}\left( \frac{1}
					{2}
					\right)
					\leq
					b$,
					it follows that
					${\mathsf{F}}(p_{1}, p_{2})
					\leq
					{\mathsf{F}}(p_{1}, p_{2}^{3})$,
					and player $2$ cannot ${\mathsf{F}}$-improve.
				}
				\end{enumerate}

				\item[{\sf (B/2)}]
				\underline{{There is a strategy $\widehat{k} \in [3]$
						with
						${\mathsf{\mu}}_{2}(\widehat{k}, 1)
						\neq
						{\mathsf{\mu}}_{2}(\widehat{k}, 2)$:}}                                                                                                                                                                                                                                                                                                                                                                                                                                                                                                                                                                                                                                                                                                                                                                                                                                                                                                                            
				{By Lemmas~\ref{domination and weep} and \ref{unimodal implies weep},
					it follows that
					no strategy in $\sigma (p_{2})$
					dominates
					some other strategy in $\sigma (p_{2})$
					with respect to $\sigma (p_{1})$.}
				{Hence,
					there is at least one other $\widetilde{k} \in [3]$
					with
					${\mathsf{\mu}}_{2}(\widetilde{k}, 1)
					\neq
					{\mathsf{\mu}}_{2}(\widetilde{k}, 2)$.}
				{By the {\sf WEEP}
					for player $2$,
					${\mathsf{E}}_{2}(p_{1}, p_{2}^{1})
					=
					{\mathsf{E}}_{2}(p_{1}, p_{2}^{2})$.}
				{We distinguish four cases,
					each represented by the matrix
					$\left( {\mathsf{\mu}}_{2}(k, j)
					\right)_{1 \leq k, j \leq 3}$,
					where, as before, 
					$\ast$ is an arbitrary value from $\{ a, b \}$:}
				\begin{enumerate}

				\item[{\sf (B/2/i)}]
				{${\mathsf{\mu}}_{2}
					=
					\left( \begin{array}{lll}
						a & a & \ast     \\
						a & b & \ast \\                                                                                                                                                                                                                                                                                                                                                                                                                                                                                                                                                                                                                                                                                                                                                                                                                                                                                                                                                                                                                                                                                                                                                                                                                                                                                                                                                                                                                                                                                                                                                                                                                                                                                                                                                                                                                                                                                                                                                                                                                                                                                                                                                                                                                                                                                                                                                                                                                                                                                                                                                                                                                                                                                                                                                                                                                                                                                                                                                                                                                                                                                                                                                                                                                                                                                                                                                                                                            
						b & a & \ast
					\end{array}
					\right)$:} 
				\remove{ 
					\textcolor{magenta}{Then,
						\begin{eqnarray*}
						a \cdot (p_{1}(1) + p_{1}(2)) + b \cdot p_{1}(3) 
						& = & a \cdot (p_{1}(1) + p_{1}(3)) + b \cdot p_{1}(2)\, ,              
						\end{eqnarray*}
						which implies that
						$p_{1}(2) = p_{1}(3)$.}                                                                                                                                                                                                                                                                                                                                                                                                                                                                                                                                                                                                                                                                                                                                                                                                                                                                                                                                                                                                                                                                                                                                                                                                                                                                                                                                                                                                                                                                                                                                                                                                                                                                                                                                                                                                                                                                                                                                                                                                                                                                                                                                                                                                                                                                                                                                                                                                                                                                                                                                                                                                                                                                                                                                                                                                                                                                                                                                                                                                                                                                                                                                                                                                                                                                                                                                                                                                          
					\textcolor{magenta}{It follows that
						\begin{eqnarray*}
						x_{2}(p_{1}, p_{2})
						&  = & 
						\frac{\textstyle 1}
						{\textstyle 2}
						\cdot
						(p_{1}(1) + p_{1}(2))
						+
						\frac{\textstyle 1}
						{\textstyle 2}
						\cdot
						(p_{1}(1) + p_{1}(3))\ \
						=\ \
						p_{1}(1)
						+
						\frac{\textstyle 1}
						{\textstyle 2}
						(p_{1}(2) + p_{1}(3))\ \
						>\ \
						\frac{1}
						{2}\, .        
						\end{eqnarray*}
					}
					\noindent
					\textcolor{magenta}{Since ${\mathsf{V}}$
						is a unimodal valuation,
						it follows that
						${\mathsf{V}}_{2}(p_{1}, p_{2})
						=
						{\mathsf{F}}(x)$,
						with $x > \frac{\textstyle 1}
						{\textstyle 2}$.}

					\textcolor{magenta}{Now,
						since player $2$ is \cgr{${\mathsf{E}}$-happy}
						in $(p_{1}, p_{2})$,
						${\mathsf{E}}_{2}(p_{1}, p_{2})
						\leq
						{\mathsf{E}}_{2}(p_{1}, p_{2}^{3})$.}
					\textcolor{magenta}{Since ${\mathsf{E}}(x)$
						is strictly monotone decreasing in $x$,
						it follows that
						$x_{2}(p_{1}, p_{2})
						\geq
						x_{2}(p_{1}, p_{2}^{3})$.}
					\textcolor{magenta}{Since ${\mathsf{V}}$
						is a unimodal valuation,
						it follows that
						${\mathsf{V}}_{2}(p_{1}, p_{2}^{3})
						=
						{\mathsf{F}}(x(p_{1}, p_{2}^{3}))$
						with     
						$x_{2}(p_{1}, p_{2})
						\geq
						x_{2}(p_{1}, p_{2}^{3})$.}
					\textcolor{magenta}{Since ${\mathsf{F}}$ is a unimodal valuation 
						with ${\mathsf{F}}\left( \frac{\textstyle 1}
						{\textstyle 2}
						\right)
						\leq
						b$,
						it follows that
						${\mathsf{F}}(p_{1}, p_{2})
						\leq
						{\mathsf{F}}(p_{1}, p_{2}^{3})$,
						and player $2$ cannot ${\mathsf{F}}$-improve.}                                                                                                                                                                                                                                                                                                                                                                                                                                                                                                                                                                                                                                                                                                                                                                                                                                                                                                                                                                                                                                                                                                                                                                                                                                                                                                                                                                                                                                                                                                                                                                                                                                                                                                                                                                                                                                                                                                                                                                                                                                                                                                                                                                                                                                                                                                                                                                                                                                                                                                                                                                                                                                                                                                                                                                                                                                                                                                                                                                                                                                                                                                                                                                                                                                                                                                                                                                                                                                                                                                                                                                                                                                                                                                                                                                                                                                                                                                                                                                                                                                                                                                                                                                                                                 
				}
				By assumption, $(p_1,p_2)$ is an $\mathsf{E}$-Equilibrium.
				This implies that there is some $i\in[3]$ with $\mu(i,3)=b$. Define a new
				probability distribution $\widehat{p}_1$ for player $1$ by 
				$\widehat{p}_1(k)=1/3$, for all $k\in[3].$
				Then, $V_2(\widehat{p}_1,p_2)=\mathsf{F}\left(\frac{\textstyle 2}
				{\textstyle 3}\right)$ and  $V_2(\widehat{p}_1,p^{3}_2)=\mathsf{F}(x)$ with some 
				$x\le 2/3$. As in case~{\sf (B/1/iii)}, {and by observation~(\ref{Fgreater})}, we conclude
				that player $2$ cannot $\mathsf{F}$-improve by switching to strategy $3$.

				\item[{\sf (B/2/ii,iii)}]                                                                                                                                                                                                                                                                                                                                                                                                                                                                                                                                                                                                                                                                                                                                                                                                                              
				We consider two subcases:
				\begin{itemize}
				\item                                                                                                                                                                                                                                                                                                                                                                                                                                                                                                                                                                                                                                                                                                                                                                                                                                                                                                                                                                                                                                                                                                                                                                                                                                                                                                                                                                                                                                                                                                                                                                                                                                                                                                                                                                                                                                                                                                                                                                                                                                                                                                                                                                                                                                                                                                                                                                                                                                                                                                                                                                                                                                                                                                                                                                                                                                                                                                                                                                                                                                                                                                                                                                                                                                                                                                                                                                                                                                                                                                                                                                                                                                                                                                                                                                                                                                                                                                                                                                                                                                                                         ${\mathsf{\mu}}_{2}
				=
				\left( \begin{array}{lll}
					a & b & \ast \\
					a & b & \ast \\                                                                                                                                                                                                                                                                                                                                                                                                                                                                                                                                                                                                                                                                                                                                                                                                                                                                                                                                                                                                                                                                                                                                                                                                                                                                                                                                                                                                                                                                                                                                                                                                                                                                                                                                                                                                                                                                                                                                                                                                                                                                                                                                                                                                                                                                                                                                                                                                                                                                                                                                                                                                                                                                                                                                                                                                                                                                                                                                                                                                                                                                                                                                                                                                                                                                                                                                                                                                            
					b & a & \ast
				\end{array}
				\right)$:
				\remove{ 
					\textcolor{magenta}{Then,}
					\textcolor{magenta}{
						\begin{eqnarray*}
						a \cdot (p_{1}(1) + p_{1}(2))
						+ 
						b \cdot  p_{1}(3)                                                                                                                                                                                                                                                                                                                                                                                                                                                                                                                                                                                                                                                                                                                                                                                                                                                                                                                                                                                                                                                                                                                                                                                                                                                                                                                                                                                                                                                                                                                                                                                                                                                                                                                                                                                                                                                                                                                                                                                                                                                                                                                                                                                                                                                                                                                                                                                                                                                                                                                                                                                                                                                                                                                                                                                                                                                                                                                                                                                                                                                                                                                                                                                                                                                                                                                                                                                                            
						& = & b \cdot (p_{1} (1) + p_{1}(2))
						+
						a \cdot p_{1}(3)\, ,                                                                                                                                                                                                                                                                                                                                                                                                                                                                                                                                                                                                                                                                                                                                                                                                                                                                                                                                                                                                                                                                                                                                                                                                                                                                                                                                                                                                                                                                                                                                                                                                                                                                                                                                                                                                                                                                                                                                                                                                                                                                                                                                                                                                                                                                                                                                                                                                                                                                                                                                                                                                                                                                                                                                                                                                                                                                                                                                                                                                                                                                                                                                                                                                                                                                                                                                                                                                             
						\end{eqnarray*}
					}
					\textcolor{magenta}{which implies that
						$p_{1}(1) + p_{1}(2) = p_{1}(3) = \frac{\textstyle 1}
						{\textstyle 2}$. Thus,
						$x_{2}(p_{1}, p_{2}) 
						=
						\frac{\textstyle 1}
						{\textstyle 2}$.}
				}
				The {\sf WEEP} for player~$2$ implies $p_1(1)+p_1(2)=p_1(3)=1/2$. 
				Thus, $\mathsf{V}_2(p_1,p_2)=\mathsf{F}\left(\frac{\textstyle 1}{\textstyle 2}\right)$.
				If $\mu_2(3,3)=a$, then $(p_1,p_2)$ being an {\sf E}-equilibrium implies that
				$\mu_2(1,3)=\mu_2(2,3)=b$. Thus, $(p_1,p_2)$ is also an {\sf F}-equilibrium.
				
				If $\mu_2(3,3)=b$, then $\mathsf{V}_2(p_1,p^{3}_2) = \mathsf{F}(x)$ with some $x\leq 1/2$
				and hence, {by observation~(\ref{Fgreater}),} player~$2$ cannot {\sf F}-improve by switching to strategy $3$.

				\item                                                                                                                                                                                                                                                                                                                                                                                                                                              
				{${\mathsf{\mu}}_{2}
					=
					\left( \begin{array}{lll}
						a & b & \ast     \\
						b & a & \ast \\                                                                                                                                                                                                                                                                                                                                                                                                                                                                                                                                                                                                                                                                                                                                                                                                                                                                                                                                                                                                                                                                                                                                                                                                                                                                                                                                                                                                                                                                                                                                                                                                                                                                                                                                                                                                                                                                                                                                                                                                                                                                                                                                                                                                                                                                                                                                                                                                                                                                                                                                                                                                                                                                                                                                                                                                                                                                                                                                                                                                                                                                                                                                                                                                                                                                                                                                                                                                            
						b & a & \ast
					\end{array}
					\right)$:} This case is equivalent to the above case, by interchanging the first and second columns and the first and third rows. 
				\remove{ 
					\textcolor{magenta}{Then,}
					\textcolor{magenta}{
						\begin{eqnarray*}
						a \cdot p_{1}(1) 
						+ 
						b \cdot  (p_{1}(2) + p_{1}(3))                                                                                                                                                                                                                                                                                                                                                                                                                                                                                                                                                                                                                                                                                                                                                                                                                                                                                                                                                                                                                                                                                                                                                                                                                                                                                                                                                                                                                                                                                                                                                                                                                                                                                                                                                                                                                                                                                                                                                                                                                                                                                                                                                                                                                                                                                                                                                                                                                                                                                                                                                                                                                                                                                                                                                                                                                                                                                                                                                                                                                                                                                                                                                                                                                                                                                                                                                                                                            
						& = & b \cdot p_{1} (1)
						+
						a \cdot (p_{1}(2) + p_{1}(3))\, ,                                                                                                                                                                                                                                                                                                                                                                                                                                                                                                                                                                                                                                                                                                                                                                                                                                                                                                                                                                                                                                                                                                                                                                                                                                                                                                                                                                                                                                                                                                                                                                                                                                                                                                                                                                                                                                                                                                                                                                                                                                                                                                                                                                                                                                                                                                                                                                                                                                                                                                                                                                                                                                                                                                                                                                                                                                                                                                                                                                                                                                                                                                                                                                                                                                                                                                                                                                                                             
						\end{eqnarray*}
					}
					\textcolor{magenta}{which implies that
						$p_{1}(1) = p_{1}(2) + p_{1}(3) = \frac{\textstyle 1}
						{\textstyle 2}$. Thus,
						$x_{2}(p_{1},p_{2})
						=
						\frac{\textstyle 1}
						{\textstyle 2}$.}
				}

				\end{itemize}                                                                                                                                                           

				\item[{\sf (B/2/iv)}]
				{${\mathsf{\mu}}_{2}
					=
					\left( \begin{array}{lll}
					a & b & \ast     \\
					b & a & \ast \\                                                                                                                                                                                                                                                                                                                                                                                                                                                                                                                                                                                                                                                                                                                                                                                                                                                                                                                                                                                                                                                                                                                                                                                                                                                                                                                                                                                                                                                                                                                                                                                                                                                                                                                                                                                                                                                                                                                                                                                                                                                                                                                                                                                                                                                                                                                                                                                                                                                                                                                                                                                                                                                                                                                                                                                                                                                                                                                                                                                                                                                                                                                                                                                                                                                                                                                                                                                                            
					b & b & \ast
					\end{array}
					\right)$:} 
				The {\sf WEEP} for player~$2$ implies $p_1(1)=p_1(2)$. Since $(p_1,p_2)$ is an {\sf E}-equilibrium, 
				then $\mu_2(i,3)=a$ for at most one value of $i\in [3]$. Define a new probability distribution $\widehat{p}_1$ for 
				player~$1$. We distinguish two cases:                                                                                                                                                           
				
				\begin{itemize}
					\item If $\mu_2(i,3)=a$ for exactly one value $i\in [3]$ or if $\mathsf{F}\left(\frac{\textstyle 1}{\textstyle 3}\right)
					\le b$, then $\widehat{p}_1=1/3$ for all $k\in [3]$. 
					Then $\mathsf{V}_2(\widehat{p}_1,p_2)=\mathsf{F}\left(\frac{\textstyle 1}{\textstyle 3}\right)$
					and player~$2$ cannot {\sf F}-improve by choosing strategy~$3$.

					\item If $\mu_2(i,3)=b$ for all $i\in [3]$ and $\mathsf{F}\left(\frac{\textstyle 1}{\textstyle 3}\right)
					> b$, then set $\sigma(\widehat{p}_1)=\{1,2\}$ and $\widehat{p}_1(1)=\widehat{p}_1(2)=1/2$.
					Player~$1$ cannot {\sf F}-improve since $\mathsf{V}_1(p^{i}_1,p_2)$ has the same value for all $i\in [3]$,
					and player~$2$ cannot {\sf F}-improve since 
					$\mathsf{V}_2(\widehat{p}_1,p_2)= \mathsf{F}\left(\frac{\textstyle 1}{\textstyle 2}\right) \le b$
					and $\mathsf{V}_2(\widehat{p}_1,p^{3}_2)= \mathsf{F}(0)=b$.
				\end{itemize}

		\end{enumerate}                                                                                                                                                                                                                                                                                                                                                                                                                                                        \end{enumerate}                                                                                                                                                                                                                                         
		
	\end{enumerate}
	The claim follows.   \qed                                    
\end{proof}

\noindent
{Now observe that by Theorem~\ref{grande resort} we conclude that  
	the $2$-players, $2$-values and $3$-strategies normal game
	with bimatrix}
{
	\small
	{
		\begin{eqnarray*}
			{\mathsf{C}}_{2}
			& = & \left( \begin{array}{lll}
				(a,b) & (b,a) & (a,b) \\
				(b,a) & (a,b) & (b,b) \\
				(b,b) & (b,b) & (b,a)\\ 
			\end{array}
			\right)\, ,            
		\end{eqnarray*}
	}
}

\noindent has no ${\mathsf{F}}$-equilibrium
	when
	${\mathsf{F}}\left( \frac{1}
	{2}
	\right)
	>
	b$.       
{This complements the result of Theorem~\ref{2 values 3 strategies existence}. In conclusion,
	for 2-players, 2-values games with 3 strategies we have:}
\begin{corollary}
	\label{3strategies}
	{For a unimodal valuation {\sf F} the following properties hold:}
	\begin{enumerate}[$(i)$]
		\item {If $\mathsf{F}\left(\frac{1}{2}\right) \le b$, then every
			2-player, 2 values, 3-strategies game has an {\sf F}-equilibrium.}
		
		\item {If $\mathsf{F}\left(\frac{1}{2}\right) > b$, then there exists
			a normal 2-player, 2 values, 3-strategies game without {\sf F}-equilibrium.}
		
	\end{enumerate}
\end{corollary}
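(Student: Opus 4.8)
The statement is a direct corollary of the two main results of this section, so the plan is simply to assemble them, treating the two implications separately.

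For part $(i)$, the hypothesis $\mathsf{F}\left(\frac{1}{2}\right)\le b$ is exactly the hypothesis of Theorem~\ref{2 values 3 strategies existence}, which already asserts that every $2$-players, $2$-values, $3$-strategies game has an $\mathsf{F}$-equilibrium. Thus part $(i)$ follows immediately, with no further argument needed.

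For part $(ii)$, the plan is to exhibit one concrete counterexample and appeal to Theorem~\ref{grande resort}. The natural candidate is the game $\mathsf{C}_{\n}$ with $\n=2$, since $\mathsf{C}_2$ is a normal game on $\n+1=3$ strategies (its bimatrix is displayed just before the corollary). I would apply Theorem~\ref{grande resort} with $\n=2$: that theorem says $\mathsf{C}_2$ has no $\mathsf{F}$-equilibrium precisely when $\mathsf{F}\left(\frac{1}{2}\right)>b$ holds together with the side condition for even $\n$, namely $\mathsf{F}(x)<b$ for $x\ge\frac{2}{\n}=1$. The remaining step is to verify that this side condition is automatic: by Definition~\ref{onedimensional valuation definition} we have $\mathsf{F}(1)=a<b$, so $\mathsf{F}(x)<b$ for $x\ge 1$ holds trivially (on the domain $[0,1]$ this is just the single value $x=1$). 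Hence the only effective hypothesis is $\mathsf{F}\left(\frac{1}{2}\right)>b$, which is exactly the hypothesis of part $(ii)$, yielding the desired normal $3$-strategies game without $\mathsf{F}$-equilibrium.

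There is essentially no obstacle here beyond the bookkeeping just described, since the entire combinatorial content has already been established in Theorems~\ref{2 values 3 strategies existence} and~\ref{grande resort}. The only point requiring a moment of care is confirming that, for $\n=2$, condition $(i)$ of Theorem~\ref{grande resort} collapses to the single inequality $\mathsf{F}\left(\frac{1}{2}\right)>b$; once the boundary value $\mathsf{F}(1)=a<b$ is recorded, both directions of the stated dichotomy follow at once.
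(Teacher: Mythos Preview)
Your proposal is correct and follows essentially the same approach as the paper: part $(i)$ is Theorem~\ref{2 values 3 strategies existence} verbatim, and part $(ii)$ is obtained by specializing Theorem~\ref{grande resort} to $\n=2$ with the counterexample $\mathsf{C}_2$. Your explicit verification that the even-$\n$ side condition collapses to $\mathsf{F}(1)=a<b$ is exactly the check the paper carries out (in the proof of Theorem~\ref{panorama}).
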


\section{{Conclusion}}
\label{sec:conclusion}

\cg{In this work we have investigated the (in)existence of equilibria for $2$-players, $2$-values games under unimodal valuations. Our work is the first to adopt the combination of abstract settings of unimodal valuations and $2$-values sparse bimatrix games, such as normal games. Normal games are a class of counterexample games, which provide a canonical way of the sparsest bimatrix games. Unimodal valuations provide the simplest possible way of specifying a non-monotone valuation function. Combining these two critical parameters enables the simplest modeling of sparsity and payoff concavity.}

The most striking open problem arising from the context of this paper
is the special role of the condition $\mathsf{F}(\frac{1}{2})=b$,
documented for normal games by Theorem~\ref{panorama} and 
Corollary~\ref{cor:final-result}. Will
$\mathsf{F}(\frac{1}{2})=b$ guarantee the existence of {\sf F}-equilibria
also under relaxed normality, e.g., by allowing two $a$'s per row
and column, respectively? Winning pairs are not necessarily 
{\sf F}-equilibria in this case.

Another open problem is the complexity of computing {\sf F}-equilibria
for $2$-values games. Very likely the problem is ${\mathcal NP}$-hard.
The reductions given in~\cite{MM16,MM17} rely on the Crawford game and
do not care about sparsity or the number of values. The Crawford game
has $3$ values. Reduction for $2$-values games could rely on the games
$\mathsf{C}_{\n},~\n\ge 2$.

%
%


\begin{thebibliography}{}


\bibitem{AKV05}
T.~G.~Abbott, D.~M.~Kane and P.~Valiant,
``On the Complexity of
  Two-Player Win-Lose Games,''
{\it Proceedings of the
     46th Annual IEEE Symposium
     on Foundations of Computer Science,}
pp.\ 113--122,
October 2005.




\bibitem{BM12}
V.~Bil\`{o} and M.~Mavronicolas,
``The Complexity of Decision Problems
about Nash Equilibria
in Win-Lose Games,''
{\it Proceedings of the 5th International Symposium
      on Algorithmic Game Theory,}
pp.\ 37--48,
Vol.~7615,
Lecture Notes in Computer Science,
Springer-Verlag,
October 2012.

\remove{
\bibitem{BM14}
V.~Bil\`{o} and M.~Mavronicolas,
``Complexity of Rational and Irrational Nash Equilibria,''
{\it Theory of Computing Systems,}
Vol.~54,
No.~3,
pp.\ 491--527,
April 2014.

      
\bibitem{BM16}
V.~Bil\`{o} and M.~Mavronicolas,
``A Catalog of $\exists\mathbb{R}$-Complete Decision Problems About Nash Equilibria
in Multi-Player Games,''
{\it Proceedings of the 33rd Symposium on Theoretical Aspects of Computer Science,}
pp.\ 17:1--17:13,
LIPIcs,
February 2016.


\bibitem{BM17}
V.~Bil\`{o} and M.~Mavronicolas,
``$\exists\mathbb{R}$-Complete Decision Problems about Symmetric Nash Equilibria in Symmetric Multi-Player Games,''
{\it Proceedings of the 34th Symposium on Theoretical Aspects of Computer Science,}
pp.\ 13:1--13:14,
LIPIcs, March 2017.
}


\bibitem{CSS16}
R.~L.~G.~Cavalcante, Y.~Shen and S.~Sta\'{n}czak,
``Elementary Properties of Positive Concave Mappings With Applications to Network Planning and Optimization,''
{\it IEEE Transactions on Signal Processing,} Vol.~64, Issue~7, 
pp.\ 1774--1783, April 2016.

\bibitem{CDT06}
{X.~Chen, X.~Deng and S.-H.~Teng,
``Sparse Games are Hard,''
{\it Proceedings of the
      2nd International Workshop
      on Internet and Network Economics,}
pp.\ 262--273,
Vol.~4286,
Lecture Notes in Computer Science,
Springer-Verlag,
December 2006.}
      



\bibitem{CLR06}
{B.~Codenotti, M.~Leoncini and G.~Resta,
``Efficient Computation of Nash Equilibria 
    for Very Sparse Win-Lose Bimatrix Games,''
{\it Proceedings of the 14th Annual European Symposium
     on Algorithms,}
pp.\ 232-243,
Vol.~4168,
Lecture Notes in Computer Science,
Springer-Verlag,
September 2006.}




\bibitem{C90}
V.~P.~Crawford,
``Equilibrium Without Independence,''
{\it Journal of Economic Theory,}
Vol.~50,
No.~1,
pp.\ 127--154,
February 1990.





\bibitem{FP10}
A.~Fiat and C.~H.~Papadimitriou,
``When the Players Are Not
    Expectation-Maximizers,''
{\it Proceedings of the 3rd International Symposium
      on Algorithmic Game Theory,}
pp.\ 1--14,
Vol.~6386,
Lecture Notes in Computer Science,
Springer-Verlag,
October 2010.
          




\bibitem{HL69}
G.~Hanoch and H.~Levy,
``The Efficiency Analysis of Choices Involving Risk,''
{\it  The Review of Economic Studies},
Vol.~36, No.~3,
pp. 335--346,
July 1969.

\bibitem{LS18}
Z.~Lin and Y.~Sheng,
``On the Approximation of Nash Equilibria in Sparse Win-Lose Games,''
{\it Proceedings of the 32nd AAAI Conference
	on Artificial Intelligence,}
pp.\ 1154--1160, 2018.


\bibitem{M52}
H.~Markowitz,
``Portfolio Selection,''
{\it Journal of Finance,}
Vol.~7,
No.~1,
pp.\ 77--91,
March 1952.





\bibitem{MM15}
M.~Mavronicolas and B.~Monien,
``Minimizing Expectation Plus Variance,''
{\it Theory of Computing Systems,}
Vol.~57,
No.~3,
pp.\ 617--654,
October 2015.





\bibitem{MM16}
M.~Mavronicolas and B.~Monien,
``The Complexity of Equilibria
  for Risk-Modeling Valuations,''
{\it Theoretical Computer Science,}
Vol.~634,
pp.\ 67--96,
June 2016.




\bibitem{MM17}
M.~Mavronicolas and B.~Monien,
``Conditional Value-at-Risk:
    Structure and Complexity of Equilibria,''
{\it Proceedings of the
      10th International Symposium
      on Algorithmic Game Theory,}
pp.\ 131--143,
Vol.~10504,
Lecture Notes in Computer Science,
Springer-Verlag,
September 2017. 

\bibitem{PR15}
M.~Pemberton and N.~Rau,
{\it Mathematics for Economists: An Introductory Textbook},
Manchester University Press, 4th edition,  October 2015.


\bibitem{RU02}
R.~T.~Rockafellar and S.~Uryasev,
``Conditional Value-at-Risk for General Loss Distributions,''
{\it Journal of Banking and Finance,}
Vol.~26, 
pp.\ 1443--1471,
2002.


\bibitem{SZ07}
Z.~Shao and H.~Zhou,
``Optimal Transportation Network with Concave Cost Functions: Loop Analysis and Algorithms,''
{\it Physical Review E}, Vol.~75, Issue~6, 5 pages, June 2007. 

\bibitem{S63}
W.~F.~Sharpe,
``A Simplified Model for Portfolio Analysis,''
{\it Management Science,}
Vol.~9, No.~2,
pp.\ 277--293,
January 1963.



\end{thebibliography}
\end{document}